\documentclass[prx,
            aps,
            10pt,
            superscriptaddress,
            nofootinbib,
            a4paper,
            twocolumn,
            longbibliography,
            floatfix]{revtex4-2}

\usepackage{xcolor,
            amsmath,
            amsthm,
            amssymb,
            graphicx,
            mathtools,
            booktabs,
            tabularx,
            adjustbox,
            physics,
            bm}

\usepackage[colorlinks=true, linkcolor=blue, citecolor=gray]{hyperref}

\usepackage{orcidlink}

\usepackage{enumitem}

\usepackage[english]{babel}

\newtheorem{proposition}{Proposition}

\theoremstyle{remark}

\definecolor{myblue}{HTML}{a5bfd2}
\definecolor{myred}{HTML}{E64B35}

\usepackage[T1]{fontenc}
\usepackage{lmodern}

\usepackage[caption=false]{subfig}

\usepackage{tikz}
\usetikzlibrary{quantikz2}

\begin{document}

\title{Task-aligned spectral inductive biases in parameterized quantum circuits}

\author{Felix Frohnert \orcidlink{0000-0003-3717-6352}}
\email[E-mail: ]{f.frohnert@liacs.leidenuniv.nl}
\affiliation{$\langle a Q a^L \rangle$ Applied Quantum Algorithms, Universiteit Leiden, Leiden, The Netherlands}

\author{Adrián Pérez-Salinas\orcidlink{0000-0001-5430-6468}}  
\affiliation{$\langle a Q a^L \rangle$ Applied Quantum Algorithms, Universiteit Leiden, Leiden, The Netherlands}
\affiliation{Dahlem Center for Complex Quantum Systems, Freie Universität Berlin, Germany}
\affiliation{Institute for Theoretical Physics, ETH Zürich, Switzerland}
\affiliation{Instituut-Lorentz, Universiteit Leiden, Leiden, The Netherlands}

\author{Sebastian Lange\orcidlink{0009-0008-5357-1065}}
\affiliation{Airbus Defence and Space GmbH, Immenstaad am Bodensee, Germany}

\author{Marcus Schwaebisch\orcidlink{0009-0003-8529-3462}}
\affiliation{Airbus Defence and Space GmbH, Immenstaad am Bodensee, Germany}

\author{Evert van Nieuwenburg\orcidlink{0000-0003-0323-0031}}
\affiliation{$\langle a Q a^L \rangle$ Applied Quantum Algorithms, Universiteit Leiden, Leiden, The Netherlands}
\affiliation{Instituut-Lorentz, Universiteit Leiden, Leiden, The Netherlands}

\author{Vedran Dunjko\orcidlink{0000-0002-2632-7955}}
\affiliation{$\langle a Q a^L \rangle$ Applied Quantum Algorithms, Universiteit Leiden, Leiden, The Netherlands}

\begin{abstract}
Parameterized quantum circuits are widely investigated in quantum machine learning, yet how their architecture can provide task-aligned inductive biases remains poorly understood.
We study data reuploading quantum circuits with two distinct inputs: a signal that varies between samples and a fixed, task-defined spectral query that specifies how the signal should be probed. 
We show that encoding the query can change the circuit’s hypothesis class over the signal by modulating how its existing signal frequencies are combined, without introducing new sample-dependent information. 
This mechanism matches tasks built on nonuniform Fourier transforms, whose evaluation frequencies specify the query and whose sampling positions guide its encoding. 
We examine this connection in a classification task inspired by synthetic aperture radar, where acquisition geometry determines a query for classifying unfocused spectral patches before image formation. 
Across small-scale spectral and radar-like tasks, aligned queries improve learning from limited data over parameter-matched circuits with mismatched or absent queries. 
These results provide a route from known spectral task structure to the design and validation of inductive biases in parameterized quantum circuits.
\end{abstract}

\maketitle

\section{Introduction}
In quantum machine learning, parameterized quantum circuits (PQCs) are widely studied as trainable models for data-driven tasks~\cite{biamonte_quantum_2017,benedettiParameterizedQuantumCircuits2019,cerezoVariational2021}. 
They encode classical inputs into a quantum circuit, process them with parameterized unitaries, and produce measurement outcomes used by a classical optimizer to update the parameters~\cite{preskill_quantum_2018,bharti2022,cerezo2022challenges}.
While their empirical performance has been explored across a range of tasks~\cite{bowles2024better, huang2021power,kublerInductiveBiasQuantum2021,SchuldAdvantage2022}, we still lack a principled understanding of which practically relevant learning problems are naturally matched to the functional structure of PQCs~\cite{bowles_contextuality_2023,gili2024inductive}.

This correspondence between a learning problem and the functional structure of a model constitutes a task-aligned \emph{inductive bias}: the architecture emphasizes input-output relations with the structure expected of the target relation~\cite{mitchell1980,battaglia2018}.
Architectures in classical machine learning are often designed around such correspondences.
Convolutional neural networks exploit the spatial locality and approximate translation symmetry commonly found in images, while transformers use attention to model dependencies between potentially distant elements, as commonly encountered in language~\cite{lecun_gradient-based_1998,cohen2016,zeiler_visualizing_2013,vaswani_attention_2023,mauricioComparingVisionTransformers2023,dosovitskiy2021}.
In this sense, when these architectural properties reflect the structure of the data and target relation, they bias learning toward compatible functions, potentially reducing the data and model capacity required to approximate the target and improving generalization~\cite{prince2023understanding,elesedy2021,bronstein2021,maier2019,monga2021,karniadakis2021}.

PQCs have been studied extensively through the lenses of universality and trainability~\cite{mcclean2018barren,larocca2025barren}, whereas their task-aligned inductive biases remain comparatively less understood.
Symmetry-based biases have been developed systematically within geometric quantum machine learning, which provides a framework for building known task symmetries into PQCs~\cite{larocca2022group,meyer2023,nguyen2024,schatzki2024}.
Data reuploading circuits offer a complementary structure: their expectation values are generalized trigonometric polynomials (GTPs) of the input~\cite{schuldEffectDataEncoding2021,perez-salinasDataReuploadingUniversal2020,caro2021encoding}. 
This structure is well characterized, but it has not yet provided a comparably principled route to task-aligned design. 
Because the encoding fixes the accessible frequencies while the trainable parameters set their coefficients, prior work has focused on which frequencies a circuit can access and how they are learned, including engineered, trainable, and input-conditioned spectra~\cite{shin2023,jaderberg2024,bartheGradientsFrequencyProfiles2024,duffy2025,qsm2026}.
Many tasks, however, prescribe how measurements must be combined rather than which frequencies are present.
This structure acts on the coefficients, and how to impose it through the architecture remains open.

\begin{figure*}[ht!]
    \centering
    \includegraphics[width=1\textwidth]{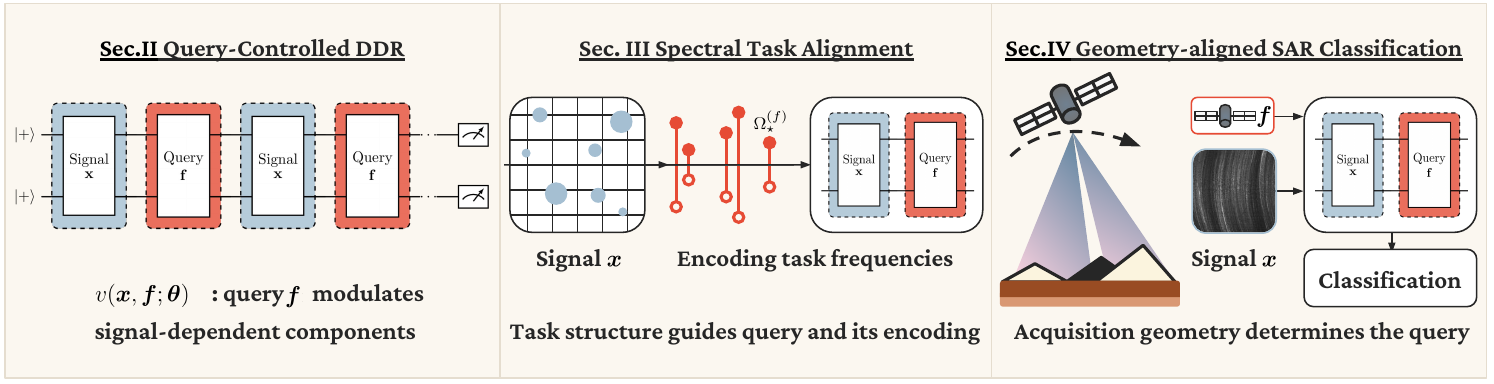}
    \caption{\textbf{From query-controlled circuits to geometry-aligned
    classification.}
    \textbf{(II)} In a DDR circuit, a fixed query controls how
    signal-dependent Fourier components are combined, modifying the
    represented function without providing sample-specific information
    (Sec.~\ref{sec:I}).
    \textbf{(III)} A spectral classification task supplies a fixed query
    $\boldsymbol f^\star$ and, through its sampling positions and score structure, a target spectrum $\Omega_\star^{(f)}$ that guides the query encoding
    (Sec.~\ref{sec:II}).
    \textbf{(IV)} In SAR, the acquisition geometry supplies this structure,
    enabling geometry-aligned classification directly from raw measurements
    without first reconstructing a focused image
    (Sec.~\ref{sec:III}).}
    \label{fig:sarOverview}
\end{figure*}

In this work, we encode known task structure as a second, fixed circuit input. 
We study what we denote as \emph{double data reuploading} (DDR) circuits, which encode two distinct inputs in alternating layers: 
a sample-dependent signal vector $\boldsymbol x$ and a query vector $\boldsymbol f$, determined from known task structure and fixed across samples. 
The terminology is motivated by the spectral tasks studied below: 
the signal contains measured sample values, while the query specifies the Fourier frequencies at which their transform is evaluated. 
Changing the query selects different transform evaluations without changing the sample values. 
Both are classical numerical inputs here; \emph{query} does not denote an oracle call.
The query carries no sample-specific information, yet its encoding can change which functions of $\boldsymbol x$ the circuit represents and the distribution over those functions induced by parameter sampling. 
It thus provides a way to incorporate known task structure into the model. 
We illustrate this approach in a task inspired by synthetic aperture radar (SAR), where acquisition geometry determines phase relations that can guide classification directly from unfocused measurements, before image formation.
We develop this idea through three successive questions, each examined through theory and numerical experiments (see Fig.~\ref{fig:sarOverview}).

\phantom{}

\emph{Query-controlled inductive bias: What can the query change?}---
Section~\ref{sec:I} derives the joint Fourier representation of DDR circuits, in which the query modulates the coefficients of a fixed signal spectrum.
A rank criterion bounds this modulation, and absorption conditions distinguish whether it leaves the model unchanged, reweights its function prior, or changes its hypothesis class.
Two-qubit circuits realize all three cases while preserving the accessible frequencies.

\emph{Aligning the query with spectral tasks: When is the query bias task-aligned?}---
Section~\ref{sec:II} connects this mechanism to classification tasks defined by fixed nonuniform Fourier evaluations~\cite{bagchi2012nonuniform}: their frequencies supply the query value $\boldsymbol f^\star$, while their sampling positions supply the modes that guide its encoding.
We bound the approximation error by separating phase alignment from the signal-dependent coefficients the circuit must learn.
In teacher--student experiments, aligned queries improve learning from limited data relative to parameter-matched circuits with mismatched or absent queries, and the advantage persists under moderate query perturbations.

\emph{Geometry-aligned classification of raw SAR-like data: Where does the query come from?}---
Nonuniform Fourier transforms arise when measurements are irregularly sampled or their transforms must be evaluated at nonuniform frequencies, as in synthetic aperture radar, non-Cartesian magnetic resonance imaging, and radio interferometry~\cite{fessler2003nonuniform,greengard2004,thompson2017}.
In SAR, the acquisition geometry and radar frequency band determine the nonuniform spatial-frequency relation used in image formation~\cite{moreira2013tutorial,cumming2005digital}.
Classifying directly from raw SAR data could avoid costly image formation, an attractive possibility when onboard processing or downlink capacity is limited~\cite{floresSyntheticApertureRadar2019,garciaAdvancementsOnboardProcessing2024}; classical methods have begun to explore this approach~\cite{gaburro2025}.
Quantum approaches, meanwhile, have classified formed images~\cite{naik2024,maritimeqkm2025} or targeted image formation itself~\cite{giovagnoli2025,alsalahat2025}.
Section~\ref{sec:III} derives a fixed query from this image-formation relation, before any data or labels are seen, and uses it to classify unfocused SAR-like phase histories.
The empirically preferred query tracks the geometry-derived prediction across acquisitions, and supplying the predicted query improves classification when labeled scenes are scarce.

Section~\ref{sec:conclusion-pqc} discusses the implications and limitations of this approach.

\section{Query-Controlled Inductive Bias} \label{sec:I}
\subsection{Double data reuploading function class}\label{sec:architecture}
\subsubsection{Data reuploading circuits as generalized trigonometric polynomials}
\label{sec:simple-pqc}

We recall the representation of data reuploading PQCs as GTPs~\cite{benedettiParameterizedQuantumCircuits2019,schuldEffectDataEncoding2021,perez-salinasDataReuploadingUniversal2020}.
Consider an $n_q$-qubit circuit initialized in
$\ket{\boldsymbol{0}}:=\ket{0}^{\otimes n_q}$, with an input
$\boldsymbol{x}=(x_1,\ldots,x_M)\in\mathbb{R}^{M}$ and trainable parameters
$\boldsymbol{\theta}$.
Its output is the expectation value of a Hermitian observable $O$,
\begin{equation}
v(\boldsymbol{x};\boldsymbol{\theta})
=
\bra{\boldsymbol{0}}
U^\dagger(\boldsymbol{x};\boldsymbol{\theta})
O
U(\boldsymbol{x};\boldsymbol{\theta})
\ket{\boldsymbol{0}}.
\label{eq:circ-exp}
\end{equation}
The circuit has $L$ data reuploading blocks, each applying an encoding followed by a trainable unitary (Fig.~\ref{fig:encoding1}):
\begin{equation}
U(\boldsymbol{x};\boldsymbol{\theta})
=
V_L(\boldsymbol{\theta})E_L(\boldsymbol{x})
\cdots
V_1(\boldsymbol{\theta})E_1(\boldsymbol{x}).
\label{eq:data-reupl}
\end{equation}
The layerwise encodings and their collection are
\begin{equation}
E_\ell(\boldsymbol{x})
=
\exp\left(
-i\sum_{m=1}^{M}x_mH_{\ell,m}
\right),
\qquad
\mathcal E=\{E_\ell\}_{\ell=1}^{L},
\label{eq:encoding-choice}
\end{equation}
where $H_{\ell,m}$ is the Hermitian generator associated with feature $x_m$ in layer $\ell$.
We assume that the generators within each layer commute pairwise; commuting Pauli strings are a common choice.
Commuting generators therefore admit a joint eigenbasis in each layer, with any input-independent basis changes entering the Fourier coefficients alongside the trainable blocks~\cite{schuldEffectDataEncoding2021}.

Expanding in these joint eigenbases and grouping terms with the same frequency gives~\cite{schuldEffectDataEncoding2021}
\begin{equation}
v(\boldsymbol{x};\boldsymbol{\theta})
=
\sum_{\boldsymbol{\omega}\in\Omega}
c_{\boldsymbol{\omega}}(\boldsymbol{\theta})
\exp\left(i\boldsymbol{\omega}\cdot\boldsymbol{x}\right).
\label{eq:pqcGTP}
\end{equation}
Thus, the output is a GTP in $\boldsymbol{x}$, with possible frequencies $\Omega$ determined by the eigenvalues of the chosen encodings $\mathcal E$.
The full derivation is given in Appx.~\ref{appx:simple-pqc}.

For a fixed encoding $\mathcal E$, trainable architecture, and observable, varying $\boldsymbol{\theta}$ defines the hypothesis class of functions the circuit can represent:
\begin{equation}
\mathcal{H}_{\mathcal{E}}
=
\left\{
\boldsymbol{x}\mapsto
\sum_{\boldsymbol{\omega}\in\Omega}
c_{\boldsymbol{\omega}}(\boldsymbol{\theta})
\exp\left(i\boldsymbol{\omega}\cdot\boldsymbol{x}\right)
:
\boldsymbol{\theta}\in\Theta
\right\}.
\label{eq:hypothesis-class-pqc}
\end{equation}
Here, $\Theta$ denotes the trainable parameter domain.
At finite depth, the encoding fixes the possible Fourier basis functions, while the rest of the circuit constrains which coefficient vectors
$\bigl(c_{\boldsymbol{\omega}}(\boldsymbol{\theta})\bigr)_{\boldsymbol{\omega}\in\Omega}$ can be realized.
Together, these restrictions define a representational inductive bias: 
circuits with the same frequency spectrum need not have the same hypothesis class.

Under suitable conditions, increasingly expressive data reuploading PQCs can approximate arbitrary continuous functions~\cite{perez-salinasDataReuploadingUniversal2020,perez-salinasUniversalApproximationContinuous2025,manzanoApproximationGeneralizationCapacities2025}.
Every finite circuit nevertheless retains the GTP structure of Eq.~\eqref{eq:pqcGTP}.
This raises the question we address next: 
can a second, fixed input change the attainable coefficient vectors and thus the inductive bias of the model?

\begin{figure}[t!]
    \centering

    \subfloat[Single data reuploading circuit%
    \label{fig:encoding1}]{%
        \begin{adjustbox}{width=\columnwidth}
            \begin{quantikz}
                \lstick{$\ket{0}$}
                & \qwbundle{n_q}
                & \qw
                & \qw
                & \qw
                & \qw
                & \gate{E_\ell(\boldsymbol{x})}
                \gategroup[
                    1,
                    steps=2,
                    style={
                        dashed,
                        rounded corners,
                        fill=myblue!80,
                        inner xsep=2pt
                    },
                    background,
                    label style={
                        label position=below,
                        anchor=north,
                        yshift=-0.2cm
                    }
                ]{{\sc $\ell = 1,\ldots,L$ times}}
                & \gate{V_\ell(\boldsymbol{\theta})}
                & \qw
                & \qw
                & \qw
                & \qw
                & \ \ldots\
                & \meter{}
            \end{quantikz}
        \end{adjustbox}
    }

    \vspace{0.6cm}

    \subfloat[Double data reuploading circuit%
    \label{fig:encoding2}]{%
        \begin{adjustbox}{width=\columnwidth}
            \begin{quantikz}
                \lstick{$\ket{0}$}
                & \qwbundle{n_q}
                & \gate{E^{(f)}_\ell(\boldsymbol{f})}
                \gategroup[
                    1,
                    steps=4,
                    style={
                        dashed,
                        rounded corners,
                        fill=myblue!80,
                        inner xsep=2pt
                    },
                    background,
                    label style={
                        label position=below,
                        anchor=north,
                        yshift=-0.2cm
                    }
                ]{{\sc $\ell = 1,\ldots,L/2$ times}}
                & \gate{V^{(f)}_\ell(\boldsymbol{\theta})}
                & \gate{E^{(x)}_\ell(\boldsymbol{x})}
                & \gate{V^{(x)}_\ell(\boldsymbol{\theta})}
                & \ \ldots\
                & \meter{}
            \end{quantikz}
        \end{adjustbox}
    }

    \caption{\textbf{Parameterized quantum circuits
    with data reuploading architectures:}
    \textbf{(a)} The $n_q$-qubit single data reuploading scheme consisting of $L$ layers.
    Classical input features $\boldsymbol{x}$ are re-encoded at each layer
    via $E_\ell(\boldsymbol{x})$ alongside parameterized unitaries
    $V_\ell(\boldsymbol{\theta})$.
    \textbf{(b)} The $n_q$-qubit double data reuploading scheme extended to two input
    streams, $\boldsymbol{x}$ and $\boldsymbol{f}$, structured across $L/2$
    blocks. Each block alternates between encoding unitaries
    $E_\ell^{(f)}(\boldsymbol{f})$ and
    $E_\ell^{(x)}(\boldsymbol{x})$ and parameterized unitaries
    $V^{(f)}_\ell(\boldsymbol{\theta})$ and
    $V^{(x)}_\ell(\boldsymbol{\theta})$.}
    \label{fig:data_reuploading_comparison}
\end{figure}

\subsubsection{Double data reuploading and query-modulated coefficients}
\label{sec:expanding-data-rep}

We now introduce a second input with its own encoding. 
We call the sample-dependent input $\boldsymbol{x}\in\mathbb{R}^{M}$ the \emph{signal} and the sample-independent, task-defined input $\boldsymbol{f}\in\mathbb{R}^{N}$ the \emph{query}, anticipating their roles in the spectral learning tasks below.
Each encoding layer acts on one of these inputs, with trainable unitaries between consecutive encodings. 
Figure~\ref{fig:encoding2} shows an example of an alternating gate arrangement. 

Applying the Fourier expansion to both inputs gives
\begin{align}
v(\boldsymbol{x},\boldsymbol{f};\boldsymbol{\theta})
&=
\sum_{\substack{
\boldsymbol{\omega}^{(x)}\in\Omega^{(x)}\\
\boldsymbol{\omega}^{(f)}\in\Omega^{(f)}
}}
c_{\boldsymbol{\omega}^{(x)},\boldsymbol{\omega}^{(f)}}
(\boldsymbol{\theta}) \nonumber\\
&\quad\times
\exp\left(
i\boldsymbol{\omega}^{(x)}\cdot\boldsymbol{x}
+i\boldsymbol{\omega}^{(f)}\cdot\boldsymbol{f}
\right).
\label{eq:joint-ddr-gtp}
\end{align}
Here $\Omega^{(x)}$ and $\Omega^{(f)}$ are the possible frequencies
set by the respective encoding families $\mathcal E_x$ and
$\mathcal E_f$. 
The full derivation of the path expansion is given in
Appx.~\ref{appx:expanding-data-rep}.

To see how the query affects the function of $\boldsymbol{x}$, arrange
the joint coefficients into a matrix and define phase vectors for the
two inputs:
\begin{align}
[\mathbf C(\boldsymbol{\theta})]_{
\boldsymbol{\omega}^{(x)},\boldsymbol{\omega}^{(f)}}
&=
c_{\boldsymbol{\omega}^{(x)},\boldsymbol{\omega}^{(f)}}
(\boldsymbol{\theta}),
\label{eq:encoding-dependent-matrix}\\
[\boldsymbol w(\boldsymbol f)]_{\boldsymbol{\omega}^{(f)}}
&=
\exp(i\boldsymbol{\omega}^{(f)}\cdot\boldsymbol f),
\label{eq:encoding-dependent-phase-vector}\\
[\boldsymbol e(\boldsymbol x)]_{\boldsymbol{\omega}^{(x)}}
&=
\exp (i\boldsymbol{\omega}^{(x)}\cdot\boldsymbol x).
\label{eq:x-fourier-vector}
\end{align}
At fixed $\boldsymbol f$, the coefficient vector over the signal
frequencies is
\begin{equation}
\boldsymbol a(\boldsymbol{\theta},\boldsymbol f)
=
\mathbf C(\boldsymbol{\theta})\boldsymbol w(\boldsymbol f),
\label{eq:encoding-dependent-coefficients}
\end{equation}
so that
\begin{equation}
v(\boldsymbol{x},\boldsymbol f;\boldsymbol{\theta})
=
\boldsymbol a(\boldsymbol{\theta},\boldsymbol f)^{\mathsf T}
\boldsymbol e(\boldsymbol x).
\label{eq:ddr-contraction}
\end{equation}
Thus, changing the query modulates the coefficients of a fixed signal
frequency set. For a fixed query encoding and query value, this defines
the hypothesis class over the signal $\bm x$:
\begin{equation}
\mathcal H_{\mathcal E_f}(\boldsymbol f)
=
\left\{
\boldsymbol x\mapsto
\boldsymbol a(\boldsymbol{\theta},\boldsymbol f)^{\mathsf T}
\boldsymbol e(\boldsymbol x)
:
\boldsymbol{\theta}\in\Theta
\right\}.
\label{eq:ddr-hypothesis-class}
\end{equation}
Here the signal encoding, trainable architecture, and observable are held fixed. 
The query encoding affects the family of matrices
$\mathbf C(\boldsymbol{\theta})$, the parameters select a matrix from that family, and the query value supplies the phases with which it is
contracted. 
We next characterize when this modulation changes the hypothesis class.

\subsection{When does the query change the model?}
\label{sec:query-bias}

Equation~\eqref{eq:ddr-contraction} illustrates that the query changes the coefficients of a Fourier expansion over the signal. 
A change in coefficients at fixed parameters, however, need not change the functions
available to the model: 
another parameter choice might produce the same function. 
In the following, we therefore distinguish three levels: 
the accessible signal frequencies, their modulation at fixed parameters, and changes to the hypothesis class or the function prior.

\subsubsection{Frequency support and coefficient modulation}

First, the query cannot add frequencies to the signal expansion.

\begin{proposition}[Separation of encoding spectra]
\label{thm:frequency-independence}
For disjoint signal- and query-encoding layers with commuting generators within each layer, the joint path-frequency set factorizes as
\begin{equation}
\Omega=\Omega^{(x)}\times\Omega^{(f)}.
\label{eq:factorize}
\end{equation}
In particular, for any query encoding $\mathcal E_f$ and query value
$\boldsymbol f$,
\begin{equation}
\mathcal H_{\mathcal E_f}(\boldsymbol f)
\subseteq
\mathcal V_x:=
\operatorname{span}_{\mathbb C}
\left\{
\exp(i\boldsymbol\omega^{(x)}\cdot\boldsymbol x)
:
\boldsymbol\omega^{(x)}\in\Omega^{(x)}
\right\},
\label{eq:common-x-fourier-space}
\end{equation}
where $\mathcal V_x$ is fixed by the signal encoding.
\end{proposition}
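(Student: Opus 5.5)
The plan is to make the path expansion of the DDR circuit explicit and show that every path contributes a phase that splits into a signal part and a query part, each determined only by the layers of its own type.

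\emph{Step 1: diagonalize each encoding layer.} Write the circuit as an ordered product of encoding layers and trainable unitaries. Each encoding layer is either a signal layer $E^{(x)}_\ell(\boldsymbol x)$ or a query layer $E^{(f)}_\ell(\boldsymbol f)$; disjointness means no layer depends on both inputs. Because the generators within a layer commute, each layer is diagonal in a fixed, input-independent joint eigenbasis, following Appx.~\ref{appx:simple-pqc}. For a signal layer this gives $E^{(x)}_\ell(\boldsymbol x)=W_\ell\,\mathrm{diag}_j\bigl(e^{-i\boldsymbol\lambda^{(\ell)}_j\cdot\boldsymbol x}\bigr)W_\ell^\dagger$, where $\boldsymbol\lambda^{(\ell)}_j\in\mathbb R^M$ collects the eigenvalues of $H_{\ell,1},\ldots,H_{\ell,M}$ on the $j$-th joint eigenvector. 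Query layers are written analogously with eigenvalue vectors in $\mathbb R^N$. The basis changes $W_\ell$ are absorbed into the neighbouring trainable blocks.

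\emph{Step 2: expand the expectation value over paths.} Insert resolutions of the identity in these eigenbases on both the ket and the bra side of Eq.~\eqref{eq:circ-exp}. A path is a choice of indices $(\boldsymbol j,\boldsymbol k)$, one index per layer on each side. The value of $v$ is then a sum over paths of a $\boldsymbol\theta$-dependent amplitude times the phase
\[
\exp\Bigl(i\sum_{\ell\in L_x}(\boldsymbol\lambda^{(\ell)}_{k_\ell}-\boldsymbol\lambda^{(\ell)}_{j_\ell})\cdot\boldsymbol x+i\sum_{\ell\in L_f}(\boldsymbol\mu^{(\ell)}_{k_\ell}-\boldsymbol\mu^{(\ell)}_{j_\ell})\cdot\boldsymbol f\Bigr),
\]
where $L_x$ and $L_f$ are the index sets of signal and query layers, and $\boldsymbol\mu^{(\ell)}$ denotes the eigenvalue vectors of the query layers. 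The amplitude depends only on the trainable blocks, the observable and the basis changes. This is exactly the derivation in Appx.~\ref{appx:expanding-data-rep}.

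\emph{Step 3: show the frequency set factorizes.} This is the key step. Define $\Omega^{(x)}$ as the set of all sums $\sum_{\ell\in L_x}(\boldsymbol\lambda_k-\boldsymbol\lambda_j)$ and $\Omega^{(f)}$ as the corresponding set for the query layers. Each path frequency is then a pair $(\boldsymbol\omega^{(x)},\boldsymbol\omega^{(f)})$ whose first component depends only on the signal-layer indices and whose second only on the query-layer indices. Because the index choices on signal and query layers are independent, every pair in the Cartesian product is realized by some path. Hence the joint path-frequency set equals $\Omega^{(x)}\times\Omega^{(f)}$, giving Eq.~\eqref{eq:factorize}. The one subtlety is the wording: the claim concerns the set of path frequencies, not the support of nonzero coefficients, since cancellations could make some $c_{\boldsymbol\omega^{(x)},\boldsymbol\omega^{(f)}}$ vanish. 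I would state this explicitly.

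\emph{Step 4: conclude the inclusion.} Group terms as in Eq.~\eqref{eq:joint-ddr-gtp} and use Eq.~\eqref{eq:ddr-contraction}. For fixed $\boldsymbol f$ and $\boldsymbol\theta$ the function is $\sum_{\boldsymbol\omega^{(x)}}a_{\boldsymbol\omega^{(x)}}(\boldsymbol\theta,\boldsymbol f)\,e^{i\boldsymbol\omega^{(x)}\cdot\boldsymbol x}$, which lies in $\mathcal V_x$. The set $\Omega^{(x)}$ is built only from signal-layer eigenvalues, so $\mathcal V_x$ does not depend on $\mathcal E_f$ or $\boldsymbol f$. This gives Eq.~\eqref{eq:common-x-fourier-space}. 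Step 3 is the main obstacle, but it is mostly careful bookkeeping.
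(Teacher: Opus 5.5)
Your proposal is correct and follows essentially the same route as the paper. The paper also defines $\Omega^{(x)}$ and $\Omega^{(f)}$ as differences of accumulated eigenvalues over the disjoint signal and query layers, and argues that signal and query path indices can be chosen independently on both paths of a pair, so that every element of $\Omega^{(x)}\times\Omega^{(f)}$ is a full-path difference and conversely. It then groups by signal frequency to obtain $\mathcal H_{\mathcal E_f}(\boldsymbol f)\subseteq\mathcal V_x$, and your caveat that the factorization concerns formal path frequencies rather than nonzero coefficients is exactly the remark the paper appends after its proof.
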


The product in Eq.~\eqref{eq:factorize} describes possible frequencies; 
the corresponding coefficients may vanish for a particular circuit.
The relevant effect of the query must therefore lie in the coefficients in Eq.~\eqref{eq:encoding-dependent-coefficients}.
At fixed parameters, its extent is constrained by the joint coefficient matrix.
The proof is given in Appx.~\ref{appx:proof-spectral-sep}; Appx.~\ref{appx:concat} compares DDR with concatenated-input encoding.

\begin{proposition}[Dimension of query-induced modulation]
\label{prop:query-modulation}
Fix $\mathcal E_f$, $\boldsymbol\theta$, and a query domain
$\mathcal F$, and define the explored phase space
\begin{equation}
\mathcal W=
\operatorname{span}_{\mathbb C}
\{\boldsymbol w(\boldsymbol f):\boldsymbol f\in\mathcal F\}.
\label{eq:query-phase-space}
\end{equation}
As $\boldsymbol f$ varies, the signal coefficient vectors span
$\mathbf C(\boldsymbol\theta)\mathcal W$, with
\begin{equation}
\dim_{\mathbb C}\mathbf C(\boldsymbol\theta)\mathcal W
\leq
\min\!\left\{
\operatorname{rank}\mathbf C(\boldsymbol\theta),
\dim_{\mathbb C}\mathcal W
\right\}.
\label{eq:rank-bound}
\end{equation}
The rank bound is saturated if $\mathcal W$ is the full
query-frequency space.
\end{proposition}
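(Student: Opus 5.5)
The plan is to reduce everything to elementary linear algebra on the fixed matrix $\mathbf C(\boldsymbol\theta)$, viewed as a linear map $\mathbb C^{|\Omega^{(f)}|}\to\mathbb C^{|\Omega^{(x)}|}$. By Eq.~\eqref{eq:encoding-dependent-coefficients}, at fixed $\boldsymbol\theta$ each signal coefficient vector is $\boldsymbol a(\boldsymbol\theta,\boldsymbol f)=\mathbf C(\boldsymbol\theta)\boldsymbol w(\boldsymbol f)$. The first step is to show that the span of $\{\mathbf C(\boldsymbol\theta)\boldsymbol w(\boldsymbol f):\boldsymbol f\in\mathcal F\}$ equals $\mathbf C(\boldsymbol\theta)\mathcal W$. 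This follows because a linear map commutes with taking spans. Finite linear combinations $\sum_k\alpha_k\mathbf C\boldsymbol w(\boldsymbol f_k)=\mathbf C\sum_k\alpha_k\boldsymbol w(\boldsymbol f_k)$ give both inclusions, since all spaces involved are finite-dimensional.

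For the bound in Eq.~\eqref{eq:rank-bound}, I will give two separate arguments.
\begin{itemize}
\item First, $\mathbf C(\boldsymbol\theta)\mathcal W\subseteq\operatorname{im}\mathbf C(\boldsymbol\theta)$, whose dimension is $\operatorname{rank}\mathbf C(\boldsymbol\theta)$.
\item Second, restrict $\mathbf C(\boldsymbol\theta)$ to $\mathcal W$. By rank--nullity, $\dim\mathbf C\mathcal W=\dim\mathcal W-\dim(\ker\mathbf C\cap\mathcal W)\le\dim\mathcal W$.
\end{itemize}
Taking the minimum of the two gives the claim.

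For saturation, note that "full query-frequency space" means $\mathcal W=\mathbb C^{|\Omega^{(f)}|}$. Then $\mathbf C\mathcal W=\operatorname{im}\mathbf C$, so the dimension equals $\operatorname{rank}\mathbf C(\boldsymbol\theta)$. This rank is at most $|\Omega^{(f)}|=\dim\mathcal W$, so it is exactly the minimum. I will also remark when this hypothesis holds. If $\mathcal F$ contains an open set, or more generally if the distinct frequency characters $\boldsymbol f\mapsto e^{i\boldsymbol\omega^{(f)}\cdot\boldsymbol f}$ are linearly independent as functions on $\mathcal F$, then $\mathcal W$ is the whole space. The reason is that any $\boldsymbol u\perp\mathcal W$ would give a vanishing trigonometric polynomial $\sum_{\boldsymbol\omega}\bar u_{\boldsymbol\omega}e^{i\boldsymbol\omega\cdot\boldsymbol f}$ on $\mathcal F$.

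The only genuine subtlety, and the step I would handle most carefully, is this last remark together with the indexing convention. $\Omega^{(f)}$ must be taken as the set of \emph{distinct} frequencies after grouping coinciding path frequencies, as in Eq.~\eqref{eq:joint-ddr-gtp}. Otherwise $\boldsymbol w(\boldsymbol f)$ has repeated entries and $\mathcal W$ can never be the full space. With that convention fixed, independence of distinct exponentials on an open set, or on a sufficiently rich finite grid, is standard. The proposition itself only requires the algebraic statements above.
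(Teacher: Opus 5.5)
Your proposal is correct and follows essentially the same route as the paper: you identify the span of the coefficient vectors with $\mathbf C(\boldsymbol\theta)\mathcal W$ by linearity, bound its dimension by both $\operatorname{rank}\mathbf C(\boldsymbol\theta)$ and $\dim\mathcal W$, and observe that a full $\mathcal W$ yields the entire column space. Your closing remark on distinct frequencies and open query domains also matches the paper's post-proof comment, which uses the same linear-independence-of-exponentials argument.
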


Thus, the rank limits how many independent signal coefficient vectors the query can explore at fixed parameters. 
A one-dimensional image makes these vectors proportional, although their overall scale may still change. 
Even a higher-dimensional image does not establish a change in the hypothesis class: 
varying the trainable parameters might absorb the effect of the query.
The proof is given in Appx.~\ref{appx:proof-modu-bound}.

\subsubsection{Query absorption and three possible effects}

To test for a model-level effect, we compare the hypothesis classes
$\mathcal H_{\mathcal E_f}(\boldsymbol f)$ at different query values.
We also fix a parameter distribution $\mu$; drawing
$\boldsymbol\theta\sim\mu$ induces a distribution
$\mathbb P_{\mathcal E_f,\boldsymbol f}$ on
$\mathcal H_{\mathcal E_f}(\boldsymbol f)$, which we call the function
prior. 
For any measurable
$\mathcal A\subseteq\mathcal H_{\mathcal E_f}(\boldsymbol f)$,
\begin{equation}
\mathbb P_{\mathcal E_f,\boldsymbol f}(\mathcal A)
=
\mu\!\left\{\boldsymbol\theta:
v(\cdot,\boldsymbol f;\boldsymbol\theta)\in\mathcal A\right\}.
\label{eq:function-prior}
\end{equation}
Thus, the class specifies which functions are realizable, while the prior specifies how parameter sampling weights them.
We call a query absorbable when its effect can be removed by reparameterizing the trainable blocks. The following condition establishes absorption into adjacent blocks.

\begin{proposition}[Sufficient condition for query absorption]
\label{prop:absorption}
Suppose each query encoding gate $E_\ell^{(f)}(\boldsymbol f)$ is followed by an
independently parameterized block
$B_\ell(\boldsymbol\theta_\ell)$, whose realizable set is
$\mathcal B_\ell$. If, for every query layer and every
$\boldsymbol f\in\mathcal F$,
\begin{equation}
\mathcal B_\ell E_\ell^{(f)}(\boldsymbol f)=\mathcal B_\ell,
\label{eq:absorption-condition}
\end{equation}
then the hypothesis class is independent of the query value:
\begin{equation}
\mathcal H_{\mathcal E_f}(\boldsymbol f)
=
\mathcal H_{\mathcal E_f}(\boldsymbol f')
\quad
\text{for all }\boldsymbol f,\boldsymbol f'\in\mathcal F.
\label{eq:absorbed-hypothesis-class}
\end{equation}
If these blocks are sampled independently of one another and of the remaining circuit parameters, and each block distribution is invariant under right multiplication by its query encoding gate, 
then the function prior is independent of the query value:
\begin{equation}
\mathbb P_{\mathcal E_f,\boldsymbol f}
=
\mathbb P_{\mathcal E_f,\boldsymbol f'}
\quad
\text{for all }\boldsymbol f,\boldsymbol f'\in\mathcal F.
\label{eq:absorbed-function-prior}
\end{equation}
\end{proposition}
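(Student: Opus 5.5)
The plan is to prove both claims by direct substitution: write the full circuit as a product in which every query gate sits immediately before its block, and absorb each product $B_\ell E_\ell^{(f)}(\boldsymbol f)$ into a reparameterized block. Concretely, I would write
\begin{equation*}
U(\boldsymbol x,\boldsymbol f;\boldsymbol\theta)
=
R_{L'}\,B_{L'}E^{(f)}_{L'}(\boldsymbol f)\,R_{L'-1}\cdots R_1\,B_1E^{(f)}_1(\boldsymbol f)\,R_0 ,
\end{equation*}
where $L'$ is the number of query layers. Each $R_j$ collects the signal encodings and the remaining trainable unitaries; it depends on $\boldsymbol x$ and on parameters disjoint from the $\boldsymbol\theta_\ell$, but not on $\boldsymbol f$. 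The assumption that each query gate is followed by its own independently parameterized block is exactly what makes this factorization possible.

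\textbf{Hypothesis class.} Fix $\boldsymbol f,\boldsymbol f'\in\mathcal F$ and a function $v(\cdot,\boldsymbol f;\boldsymbol\theta)\in\mathcal H_{\mathcal E_f}(\boldsymbol f)$. By Eq.~\eqref{eq:absorption-condition}, $B_\ell(\boldsymbol\theta_\ell)E^{(f)}_\ell(\boldsymbol f)\in\mathcal B_\ell=\mathcal B_\ell E^{(f)}_\ell(\boldsymbol f')$. Hence there is $\boldsymbol\theta'_\ell$ with
\begin{equation*}
B_\ell(\boldsymbol\theta'_\ell)E^{(f)}_\ell(\boldsymbol f')=B_\ell(\boldsymbol\theta_\ell)E^{(f)}_\ell(\boldsymbol f).
\end{equation*}
Because the blocks are independently parameterized, I can make this replacement in every query layer at once while keeping the parameters of the $R_j$ unchanged. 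The resulting unitary equals $U(\boldsymbol x,\boldsymbol f;\boldsymbol\theta)$ for every $\boldsymbol x$, so the expectation values agree as functions of $\boldsymbol x$. This gives $\mathcal H(\boldsymbol f)\subseteq\mathcal H(\boldsymbol f')$, and symmetry gives equality. The only point needing care is that the witness $\boldsymbol\theta'_\ell$ is chosen independently of $\boldsymbol x$, which holds because the block and query gate do not involve $\boldsymbol x$.

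\textbf{Function prior.} Here I would argue at the level of pushforward measures on unitaries rather than parameters. Let $\nu_\ell$ be the law of $B_\ell(\boldsymbol\theta_\ell)$, and interpret ``invariance under right multiplication by its query gate'' as $\nu_\ell(\,\cdot\,E^{(f)}_\ell(\boldsymbol f))=\nu_\ell$ for all $\boldsymbol f\in\mathcal F$. Then $B_\ell E^{(f)}_\ell(\boldsymbol f)$ has law $\nu_\ell$ for every $\boldsymbol f$.

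The map $\boldsymbol\theta\mapsto v(\cdot,\boldsymbol f;\boldsymbol\theta)$ factors through the tuple $\bigl(B_1E^{(f)}_1(\boldsymbol f),\dots,B_{L'}E^{(f)}_{L'}(\boldsymbol f),\text{rest}\bigr)$ via a fixed, $\boldsymbol f$-independent measurable map. By the independence assumption, the joint law of this tuple is the product $\nu_1\otimes\cdots\otimes\nu_{L'}\otimes\mu_{\text{rest}}$ for every $\boldsymbol f$. The pushforward in Eq.~\eqref{eq:function-prior} is therefore the same for $\boldsymbol f$ and $\boldsymbol f'$, which is Eq.~\eqref{eq:absorbed-function-prior}.

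\textbf{Main obstacle.} I expect the main difficulty to be formalizing the prior statement cleanly. Invariance must be stated for the induced distribution on the realizable set $\mathcal B_\ell$, not for $\mu$ on raw parameters, since distinct parameters may realize the same block. Independence is also what lets the simultaneous per-layer substitution preserve the joint law. I would address this by working throughout with pushforward measures and taking the evaluation map into function space to be measurable, which is automatic here since $v$ is continuous in the block unitaries.
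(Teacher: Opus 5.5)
Your proposal is correct and follows the paper's proof. For the class, you absorb each factor $B_\ell E_\ell^{(f)}(\boldsymbol f)$ into $\mathcal B_\ell$ and use independent parameterization to make all replacements simultaneously; for the prior, you use right-invariance and independence to make the joint law of the effective factors $\boldsymbol f$-independent and then push it forward through the fixed circuit-and-observable map. Your element-wise witness argument and explicit pushforward-measure formulation are simply a more detailed rendering of these same steps.
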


For example, a block spanning a unitary group absorbs query encoding gates in that group; a Haar-distributed block also preserves the function prior.
Product single-qubit blocks can similarly absorb local query encoding gates, whereas generic entangling gates lie outside that product group.
Absorption is a sufficient test: failure of
Eq.~\eqref{eq:absorption-condition} does not rule out a reparameterization involving several layers.
The proof is given in Appx.~\ref{appx:proof-q-absor}.

These results distinguish three possible effects of varying
$\boldsymbol f$ at fixed $\mathcal E_f$:

\begin{enumerate}
\item[(I)] \emph{No model-level change.} The hypothesis class and
function prior are both invariant.

\item[(II)] \emph{Prior reweighting.} The hypothesis class is invariant,
but the function prior changes.

\item[(III)] \emph{Hypothesis-class change.} The set of representable
functions depends on the query value.
\end{enumerate}

All three cases share the signal frequency space $\mathcal V_x$.
Case~(II) changes how functions in a fixed class are weighted, while
Case~(III) changes which functions are available.
We target Case~(III), where task structure supplied through the query can enter the hypothesis class itself.
By Proposition~\ref{prop:absorption}, this requires the absorption condition to fail for at least one layer and query value.
The numerical experiments below distinguish these effects.

\begin{figure*}[ht!]
    \centering
    \includegraphics[width=1\linewidth]{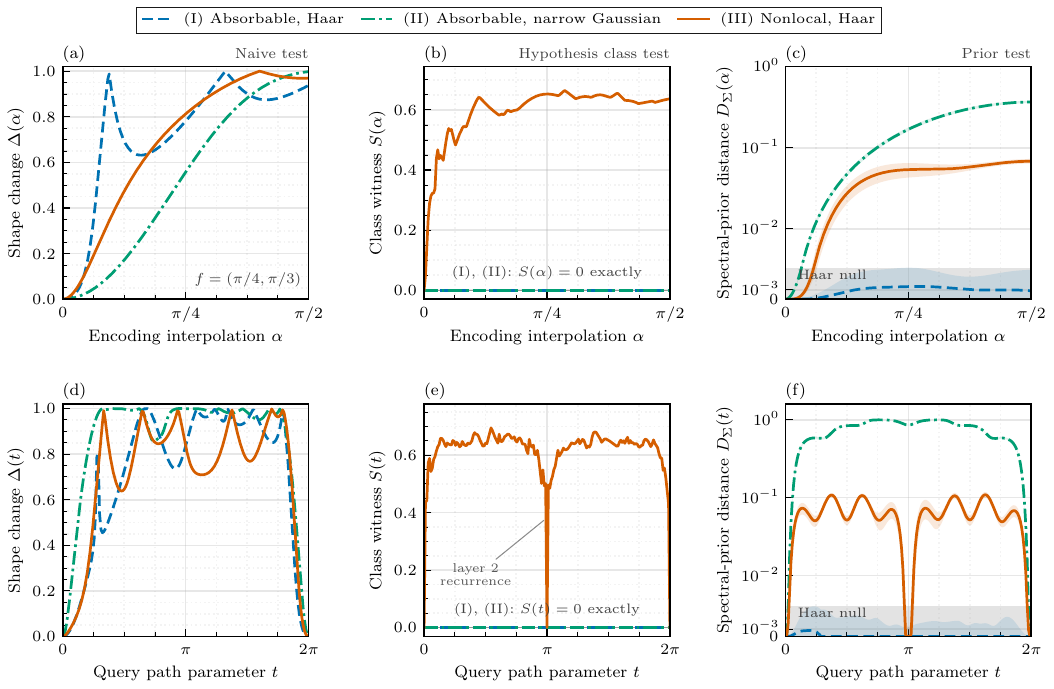}
    \caption{
    \textbf{Changing the query encoding or the query value can alter the hypothesis class and the function prior.}
    The top row varies the encoding at fixed $\boldsymbol f=(\pi/4,\pi/3)$; the bottom row varies $\boldsymbol f(t)=t\boldsymbol 1$ at fixed encoding. In each row, the three series represent (I) an absorbable local encoding with Haar-distributed parameters, (II) the same encoding with a Gaussian parameter prior of width $0.35$, and (III) a nonlocal encoding with Haar-distributed parameters. The comparisons preserve the accessible query-frequency spectrum $\Omega^{(f)}$, isolating changes beyond frequency support.
    \textbf{(a--c) Encoding interpolation.}
    The generators $H_{\ell,n}^{(f)}(\alpha)=\frac{s_\ell}{2}[\cos(\alpha)A_n+\sin(\alpha)B_n]$ start from $(A_0,A_1)=(X_0,X_1)$. The absorbable path ends at $(B_0,B_1)=(Y_0,Y_1)$; the nonlocal path ends at $(Y_0Z_1,Z_0Y_1)$. Anticommutation preserves the generator eigenvalues along both paths.
    \textbf{(d--f) Query variation.}
    We compare the local generators $(X_0,X_1)$ with the nonlocal pair $(Y_0Z_1,Z_0Y_1)$, using layer weights $(s_1,s_2)=(1,2)$ and the same product trainable blocks.
    The columns report complementary diagnostics: \textbf{(a,d)} the normalized change $\Delta$ of the Fourier coefficient vector for a fixed parameter realization; \textbf{(b,e)} the class witness $S$, measuring the largest nonseparable coefficient weight across $x_0|x_1$; and \textbf{(c,f)} the normalized distance $D_\Sigma$ between spectral second moments and their value at $\alpha=0$ or $t=0$.
    Individual functions change in all three cases, so $\Delta$ alone does not identify a change in inductive bias. The remaining diagnostics distinguish the cases: $S$ and $D_\Sigma$ remain at zero or within the Haar sampling null in Case~(I); only $D_\Sigma$ changes in Case~(II); and both change in Case~(III). In (e), $S(t)$ vanishes at $t\in\{0,\pi,2\pi\}$, where the nonlocal query encoding gates reduce to product unitaries. The $D_\Sigma$ axes are linear within the shaded sampling-null region and logarithmic above it. Definitions, circuit details, and sampling procedures are given in Appx.~\ref{appx:encoding-interpolation} and Appx.~\ref{appx:vary-query}.
    }
    \label{fig:encoding-value-interpolation}
\end{figure*}

\subsection{Numerically distinguishing the three cases}\label{sec:controlled-query}

We test the two distinct changes suggested by Eq.~\eqref{eq:encoding-dependent-coefficients}: 
varying the encoding changes the matrix $\mathbf C$ at fixed $\boldsymbol f$, whereas varying $\boldsymbol f$ changes the phase vector $\boldsymbol w$ at fixed encoding. 
Both tests use two-qubit, two-block DDR circuits in which $x_0$ and $x_1$ are encoded by single-qubit generators on separate qubits and the trainable blocks are products of single-qubit unitaries.
They compare query encodings with identical eigenvalues, so an observed difference cannot be attributed to a new frequency set.
The following experiments involve no training; they probe the hypothesis class and the function prior directly.

Figure~\ref{fig:encoding-value-interpolation} reports three
complementary diagnostics, all computed from the signal coefficient
vector $\boldsymbol a$ of Eq.~\eqref{eq:encoding-dependent-coefficients}.
Because $\Omega^{(x)}=\Omega^{(x)}_0\times\Omega^{(x)}_1$, we can reshape
$\boldsymbol a$ into a matrix $\mathbf A$ indexed by the frequencies of
$x_0$ and $x_1$; the represented function is separable across $x_0|x_1$
if and only if $\mathbf A$ has rank at most one. With subscript $0$
denoting the baseline ($\alpha=0$ or $t=0$), we define
\begin{equation}
\begin{aligned}
\Delta &= 1-\frac{|\boldsymbol a^\dagger\boldsymbol a_0|}
{\|\boldsymbol a\|_2\,\|\boldsymbol a_0\|_2},
\qquad
S = \max_{\boldsymbol\theta}\left[1-\frac{\sigma_1^2}{\sum_j\sigma_j^2}\right],\\
D_\Sigma &= 1-\frac{\operatorname{Tr}[\boldsymbol\Sigma\boldsymbol\Sigma_0]}
{\|\boldsymbol\Sigma\|_{\mathrm F}\,\|\boldsymbol\Sigma_0\|_{\mathrm F}},
\qquad
\boldsymbol\Sigma=\mathbb E_{\boldsymbol\theta}\big[\boldsymbol a\boldsymbol a^\dagger\big],
\end{aligned}
\label{eq:diagnostics}
\end{equation}
where $\sigma_1\ge\sigma_2\ge\dots$ are the singular values of
$\mathbf A$. The shape change $\Delta$ detects a change in the function
realized at fixed parameters, independent of overall scale. The class
witness $S$ measures the largest fraction of coefficient power outside
the best rank-one approximation. 
At the baseline, the circuit acts as a product of single-qubit circuits and every function in its class is separable.
Thus, $S>0$ exhibits a function absent from the baseline class
and certifies a class change; as a sampled maximum, it is a lower bound.
A nonzero $D_\Sigma$ certifies a change in the function prior, whereas
$D_\Sigma=0$ alone would not establish that the full priors agree.

\emph{Changing the encoding.} At $\boldsymbol f=(\pi/4,\pi/3)$, we
interpolate the query generators as
$H^{(f)}_{\ell,n}(\alpha)=\frac{s_\ell}{2}[\cos(\alpha)A_n+\sin(\alpha)B_n]$ with
$\alpha\in[0,\pi/2]$ and layer weights $(s_1,s_2)=(1,2)$, starting from
$(A_0,A_1)=(X_0,X_1)$. The local path ends at $(B_0,B_1)=(Y_0,Y_1)$,
the nonlocal path at $(Y_0Z_1,Z_0Y_1)$. Since each $A_n$ anticommutes
with $B_n$, the generators retain the joint eigenvalues
$\frac{s_\ell}{2}(\pm1,\pm1)$ for all $\alpha$. The local path remains absorbable
by adjacent single-qubit blocks, while the nonlocal path can entangle
the qubits. In Figs.~\ref{fig:encoding-value-interpolation}(a--c),
$\Delta$ changes along all three series. On the local path, $S$ remains
exactly zero; Haar sampling also leaves the prior invariant within
sampling error (I), whereas a full-support non-Haar distribution changes
it (II). On the nonlocal path, $S>0$ establishes functions absent from
the baseline class, and $D_\Sigma>0$ also detects a prior change (III).
This separates the effect of encoding eigenspaces from that of frequency
support.

\emph{Changing the query value.} 
We next fix either the local generators $(X_0,X_1)$ or the nonlocal generators $(Y_0Z_1,Z_0Y_1)$, with the same layer weights, and sweep $\boldsymbol f(t)=t\boldsymbol 1$. 
Their joint eigenvalues again agree. 
Figures~\ref{fig:encoding-value-interpolation}(d--f) show $\Delta(t)>0$ even for the local Haar case (I): an individual circuit output changes, although the model class and prior do not. 
With a non-Haar prior the local encoding realizes (II), while the nonlocal encoding yields $S(t)>0$ at generic $t$, establishing (III). 
Its class witness vanishes when the query encoding gates reduce to product unitaries. 
Circuit and statistical details are given in Appx.~\ref{appx:encoding-interpolation} and Appx.~\ref{appx:vary-query}.

Together, these results identify the model-level effects of the query branch within a fixed signal-frequency spectrum, and illustrate that the target Case~(III) requires a query encoding that the trainable blocks cannot absorb. 
To use this mechanism as an inductive bias, the query must be chosen from the structure of the learning task.

\section{Aligning the query with spectral tasks}\label{sec:II}
\subsection{Classification from fixed nonuniform Fourier queries}

Section~\ref{sec:I} showed how a query can change the functions represented over the signal.
We now consider tasks that prescribe how signal samples should be combined, allowing the query and its encoding to be chosen before training.
Learning problems containing fixed nonuniform Fourier evaluations provide such a setting.

\subsubsection{Spectral classification tasks}
\label{sec:mnudft}

Let $\boldsymbol{x}=(x_{\boldsymbol d})_{\boldsymbol d\in\mathcal I}$
be a discretely sampled complex signal indexed by $\mathcal I$, with $x_{\boldsymbol d}$
acquired at position $\boldsymbol p_{\boldsymbol d}\in\mathbb R^n$.
When supplied to the circuit, its complex samples are represented by real parameters: their real and imaginary parts (Appx.~\ref{appx:real}) or, for phase-only signals, their phases.
The multidimensional nonuniform discrete Fourier transform (MNUDFT)
evaluates the signal at frequencies
$\{\boldsymbol f_{\boldsymbol k}\}_{\boldsymbol k\in\mathcal K}$ indexed by $\mathcal K$:
\begin{equation}
X_{\boldsymbol k}(\boldsymbol x,\boldsymbol f_{\boldsymbol k})
=
\sum_{\boldsymbol d\in\mathcal I}x_{\boldsymbol d}
\exp\!\left(
-2\pi i\boldsymbol f_{\boldsymbol k}\cdot
\boldsymbol p_{\boldsymbol d}
\right).
\label{eq:multi-nudft}
\end{equation}
Each evaluation frequency specifies a phase pattern for combining the samples. 
Neither the sampling positions nor the evaluation frequencies need to form a regular Cartesian grid (see Appx.~\ref{appx:mnudft} for more details).

A target score is a real-valued function of the signal used to assign its class label, for example by thresholding one score or comparing several class scores~\cite{prince2023understanding}.
We call a classification task \emph{spectral} when its target scores depend on the signal only through a prescribed finite set of nonuniform Fourier evaluations:
\begin{equation}
s^\star(\boldsymbol x\mid\boldsymbol f^\star)
=
G\!\left(
[X_{\boldsymbol k}(\boldsymbol x,\boldsymbol f^\star_{\boldsymbol k})]
_{\boldsymbol k\in\mathcal K}
\right).
\label{eq:target-mnudft}
\end{equation}
Labels are assigned by a fixed rule from one or more such target scores. 
The map $G$ may be nonlinear, for example through spectral intensities or interactions between evaluations
~\cite{nikias1993signal,boashash2015time}.

For a given acquisition or task instance, the sampling positions and evaluation frequencies are fixed across signals and specified independently of their labels. 
The frequencies $\boldsymbol f^\star=(\boldsymbol f^\star_{\boldsymbol k})_{\boldsymbol k\in\mathcal K}$ therefore supply the query value, while the positions
$\boldsymbol p_{\boldsymbol d}$ determine the phase modes that guide its encoding. 
The classifier must learn how to use these evaluations; 
the query supplies where the task calls for evaluating them.

\subsubsection{Query Fourier structure of spectral scores}
\label{sec:correspondence}

The MNUDFT score and the DDR output share a structure in the query variable: 
query-dependent phases multiply signal-dependent coefficients. 
To make this correspondence explicit, order the
$K=|\mathcal K|$ evaluation frequencies and concatenate them as
$\boldsymbol f=\boldsymbol f_1\mathbin\Vert\cdots
\mathbin\Vert\boldsymbol f_K\in\mathbb R^{Kn}$.
For each pair $(\boldsymbol k,\boldsymbol d)$, define the lifted position
$\boldsymbol q_{\boldsymbol k,\boldsymbol d}
=\boldsymbol e_{\boldsymbol k}\otimes
\boldsymbol p_{\boldsymbol d}$,
where $\boldsymbol e_{\boldsymbol k}$ is the $k$-th standard basis vector, so that
\begin{equation}
\boldsymbol q_{\boldsymbol k,\boldsymbol d}\cdot\boldsymbol f
=
\boldsymbol p_{\boldsymbol d}\cdot
\boldsymbol f_{\boldsymbol k}.
\label{eq:lifted-phase-identity}
\end{equation}
Thus, every MNUDFT phase is a phase in the joint query
$\boldsymbol f$.

\begin{proposition}[Query Fourier structure of MNUDFT scores]
\label{prop:mnudft-structure}
Consider the query-dependent spectral target score
\[
s^\star(\boldsymbol x\mid\boldsymbol f)
=
G\!\left(
[X_{\boldsymbol k}(\boldsymbol x,\boldsymbol f_{\boldsymbol k})]
_{\boldsymbol k\in\mathcal K}
\right).
\]
If $G$ is a finite polynomial in the MNUDFT coefficients and their
complex conjugates, then
\begin{equation}
s^\star(\boldsymbol x\mid\boldsymbol f)
=
\sum_{\boldsymbol\omega\in\Omega_\star^{(f)}}
F^\star_{\boldsymbol\omega}(\boldsymbol x)
\exp(i\boldsymbol\omega\cdot\boldsymbol f)
\label{eq:score-fourier-f}
\end{equation}
with $\Omega_\star^{(f)}$ as the target query-frequency set and $F^\star_{\boldsymbol\omega}$ as the signal-dependent coefficient functions.
The query frequencies are sums and differences of the modes $2\pi\boldsymbol q_{\boldsymbol k,\boldsymbol d}$ appearing in the polynomial, while the coefficients
$F^\star_{\boldsymbol\omega}$ are polynomials in the signal samples and their conjugates. 
If the score is real valued for every
$\boldsymbol f$, then
$F^\star_{-\boldsymbol\omega}
=(F^\star_{\boldsymbol\omega})^*$.
If $G$ is continuous on a compact set containing the MNUDFT coefficient vectors under consideration, scores of the form given in  Eq.~\eqref{eq:score-fourier-f} approximate
$s^\star(\boldsymbol x\mid\boldsymbol f)$ uniformly over the corresponding signal and query domain.
\end{proposition}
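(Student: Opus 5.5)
The plan is to expand every MNUDFT coefficient as a trigonometric polynomial in the joint query $\boldsymbol f$ using the lifted-phase identity Eq.~\eqref{eq:lifted-phase-identity}, and then let the polynomial structure of $G$ propagate this form. By Eq.~\eqref{eq:multi-nudft} and Eq.~\eqref{eq:lifted-phase-identity},
\[
X_{\boldsymbol k}(\boldsymbol x,\boldsymbol f_{\boldsymbol k})=\sum_{\boldsymbol d\in\mathcal I}x_{\boldsymbol d}\exp\!\left(-2\pi i\,\boldsymbol q_{\boldsymbol k,\boldsymbol d}\cdot\boldsymbol f\right),
\qquad
X_{\boldsymbol k}^{*}=\sum_{\boldsymbol d\in\mathcal I}x_{\boldsymbol d}^{*}\exp\!\left(2\pi i\,\boldsymbol q_{\boldsymbol k,\boldsymbol d}\cdot\boldsymbol f\right).
\]
Each coefficient and its conjugate is therefore a finite sum of query exponentials with frequencies $\mp2\pi\boldsymbol q_{\boldsymbol k,\boldsymbol d}$, and the weights are linear in $x_{\boldsymbol d}$ or $x_{\boldsymbol d}^{*}$.

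Next I write $G$ as a finite sum of monomials $\gamma\prod_j X_{\boldsymbol k_j}\prod_l X_{\boldsymbol k'_l}^{*}$ with constant complex coefficients $\gamma$. Expanding each product distributively gives a finite sum of terms. Each term is a product of signal samples and their conjugates, times a single exponential $\exp(i\boldsymbol\omega\cdot\boldsymbol f)$ with $\boldsymbol\omega=2\pi\bigl(-\sum_j\boldsymbol q_{\boldsymbol k_j,\boldsymbol d_j}+\sum_l\boldsymbol q_{\boldsymbol k'_l,\boldsymbol d'_l}\bigr)$, which is a signed sum of the modes. I then collect terms sharing the same $\boldsymbol\omega$. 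This defines the finite set $\Omega_\star^{(f)}$ and coefficients $F^\star_{\boldsymbol\omega}(\boldsymbol x)$, which are polynomials in $x_{\boldsymbol d},x_{\boldsymbol d}^{*}$. This gives Eq.~\eqref{eq:score-fourier-f}.

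For the reality condition, I use that $\Omega_\star^{(f)}$ can be taken symmetric by adjoining $-\boldsymbol\omega$ with zero coefficient if needed. Since $s^\star=(s^\star)^{*}$, I get $\sum_{\boldsymbol\omega}\bigl(F^\star_{\boldsymbol\omega}-(F^\star_{-\boldsymbol\omega})^{*}\bigr)e^{i\boldsymbol\omega\cdot\boldsymbol f}=0$ for all $\boldsymbol f\in\mathbb R^{Kn}$. Because the frequencies are distinct, the exponentials $e^{i\boldsymbol\omega\cdot\boldsymbol f}$ are linearly independent as functions on $\mathbb R^{Kn}$. One way to see this is to average against $e^{-i\boldsymbol\omega'\cdot\boldsymbol f}$ over growing boxes, which extracts each coefficient. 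Independence forces $F^\star_{-\boldsymbol\omega}=(F^\star_{\boldsymbol\omega})^{*}$ pointwise in $\boldsymbol x$. This step requires $\boldsymbol f$ to range over all of $\mathbb R^{Kn}$, or at least over an open set, where independence still holds by analyticity. I would state the condition under that assumption.

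For the approximation claim, I apply the Stone--Weierstrass theorem in complex form. Polynomials in $z,\bar z$, with $z\in\mathbb C^{K}$ viewed as $\mathbb R^{2K}$, are dense in $C(\mathcal C)$ for compact $\mathcal C$. So for any $\epsilon>0$ there is a polynomial $P$ with $\sup_{\mathcal C}|G-P|<\epsilon$. If $G$ is real, I take $\operatorname{Re}P$, which remains a polynomial in $z,\bar z$. The coefficient map $(\boldsymbol x,\boldsymbol f)\mapsto(X_{\boldsymbol k})_{\boldsymbol k}$ sends the signal--query domain into $\mathcal C$ by hypothesis. Hence $P$ composed with this map is uniformly $\epsilon$-close to $s^\star$, and by the first part it has the form of Eq.~\eqref{eq:score-fourier-f}.

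The main obstacle is bookkeeping rather than depth. The approximation step needs the domain assumption to be stated so that $\mathcal C$ genuinely contains all coefficient vectors. The reality step needs care about linear independence of exponentials when the modes are degenerate, and grouping equal $\boldsymbol\omega$ first handles this.
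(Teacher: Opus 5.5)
Your proposal is correct and follows essentially the same route as the paper's proof: the lifted-phase rewriting of each $X_{\boldsymbol k}$, expansion of the monomials of $G$ and grouping by equal $\boldsymbol\omega$, linear independence of distinct exponentials for the reality condition, and Stone--Weierstrass on a compact set of coefficient vectors for the continuous case. Your additional care is a harmless refinement of the same argument. This covers symmetrizing $\Omega_\star^{(f)}$ with zero coefficients, noting that independence needs $\boldsymbol f$ to range over an open set, and replacing the approximant $P$ by $\operatorname{Re}P$ so that it stays real-valued.
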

The proof is given in Appx.~\ref{appx:mnudft-structure}.
For polynomial spectral scores, Proposition~\ref{prop:mnudft-structure} establishes the same separation of signal and query dependence as the DDR expansion in Eq.~\eqref{eq:joint-ddr-gtp}:
\begin{align*}
v(\boldsymbol x,\boldsymbol f;\boldsymbol\theta)
&=\sum_{\boldsymbol\omega\in\Omega^{(f)}}
F_{\boldsymbol\omega}(\boldsymbol x;\boldsymbol\theta)\,
\exp(i\boldsymbol\omega\cdot\boldsymbol f),\\
s^\star(\boldsymbol x\mid\boldsymbol f)
&=\sum_{\boldsymbol\nu\in\Omega_\star^{(f)}}
F^\star_{\boldsymbol\nu}(\boldsymbol x)\,
\exp(i\boldsymbol\nu\cdot\boldsymbol f).
\end{align*}
Both are GTPs in the query, with all signal dependence in the coefficient functions. 
The query encoding fixes the circuit modes $\Omega^{(f)}$; the sampling positions supply the task's modes, which $G$ combines into $\Omega_\star^{(f)}$.

The two expansions coincide for all signals and queries if
$\Omega_\star^{(f)}\subseteq\Omega^{(f)}$ and there is a parameter setting $\boldsymbol\theta$ for which
$F_{\boldsymbol\nu}(\cdot;\boldsymbol\theta)
=F^\star_{\boldsymbol\nu}$ for every target mode, while the coefficient functions of all unused circuit modes vanish identically.

\subsection{Phase alignment and coefficient error}

Exact reproduction provides a reference point for task alignment, but is not strictly required for a useful inductive bias.
For example, a linear model can generalize well from limited data when the target is approximately linear, despite being unable to represent it exactly.
Likewise, task-aligned query phases may improve learning without making the full target score exactly realizable.
For the polynomial scores of Proposition~\ref{prop:mnudft-structure}, we bound the score error at the fixed task query $\boldsymbol f^\star$ by separating phase mismatch from coefficient mismatch.
Suppose the proposed query encoding admits an injective assignment
$\pi:\Omega_\star^{(f)}\to\Omega^{(f)}$ of target modes to circuit modes, and let $r_{\boldsymbol\theta}(\boldsymbol x)$ collect the contributions of the unused circuit modes at $\boldsymbol f^\star$.

On a compact signal domain $\mathcal X$, write
$\|h\|_\infty=\sup_{\boldsymbol x\in\mathcal X}|h(\boldsymbol x)|$.
Then
\begin{equation}
\|s^\star(\cdot\mid\boldsymbol f^\star)
-v(\cdot,\boldsymbol f^\star;\boldsymbol\theta)\|_\infty
\leq
\varepsilon_{\mathrm{ph}}
+\varepsilon_{\mathrm{coef}}(\boldsymbol\theta),
\label{eq:alignment-error-decomposition}
\end{equation}
where
\begin{align}
\varepsilon_{\mathrm{ph}}
&=
\|\boldsymbol f^\star\|_2
\sum_{\boldsymbol\nu\in\Omega_\star^{(f)}}
\|F^\star_{\boldsymbol\nu}\|_\infty
\|\boldsymbol\nu-\pi(\boldsymbol\nu)\|_2,
\label{eq:phase-support-error}\\
\varepsilon_{\mathrm{coef}}(\boldsymbol\theta)
&=
\sum_{\boldsymbol\nu\in\Omega_\star^{(f)}}
\|F^\star_{\boldsymbol\nu}
-F_{\pi(\boldsymbol\nu)}(\cdot;\boldsymbol\theta)\|_\infty
+\|r_{\boldsymbol\theta}\|_\infty.
\label{eq:coefficient-error}
\end{align}
The first term measures phase mismatch; the second measures
coefficient mismatch and contributions from unused circuit modes.
The derivation is given in Appx.~\ref{appx:realizability}.
Controlling both terms is sufficient for score approximation,
but at a fixed query, the two expansions can agree without
matching term by term.

The appendix gives two complementary realizability results.
Proposition~\ref{thm:realizability} constructs a query
encoding that places each target mode within
$2^{-L_f}\|\boldsymbol R\|_2$ of a circuit mode.
Here, $L_f$ is the number of query-encoding layers and
$R_n=\max_{\boldsymbol\nu\in\Omega_\star^{(f)}}|\nu_n|$.
At sufficient depth, the assignments are injective and the
phase-error bound can be made arbitrarily small.
Figure~\ref{fig:alignment} illustrates this spectral approximation.
Proposition~\ref{prop:exact-intensity-contrast} constructs a circuit
that exactly realizes a phase-only MNUDFT intensity contrast,
including its coefficient functions.
For general targets and restricted architectures, whether the
coefficient error can be made small remains a separate question.

\subsection{Learning hypothesis}
\label{sec:task}

Our learning hypothesis is that a task-aligned query improves
learning from limited data relative to models with the same
trainable architecture and parameter count but mismatched or
absent queries.

For an MNUDFT-based task, the evaluation frequencies
$\boldsymbol f^\star$ specify the query value, while the task modes
$\Omega_\star^{(f)}$ guide the choice of query encoding
$\mathcal E_f^\star$.
The sampling positions determine the query modes; prior knowledge of the score structure, such as intensity dependence or a bound on its polynomial degree, determines which combinations guide the encoding.
Together, these choices define the DDR hypothesis class
$\mathcal H_{\mathcal E_f^\star}(\boldsymbol f^\star)$.
The query and its encoding are shared across samples and specified
independently of the labels: they supply task knowledge without
additional sample-dependent information.

Here, \emph{spectral inductive bias} denotes the effect of
task-prescribed query phases on the hypothesis class or the prior
over functions of the signal.
The query encoding makes these phases available, while training
must still learn signal-dependent coefficients for which
$v(\boldsymbol x,\boldsymbol f^\star;\boldsymbol\theta)
\approx s^\star(\boldsymbol x\mid\boldsymbol f^\star)$.
We next test the role of the query value numerically;
Sec.~\ref{sec:III} derives the query and its encoding from
measurement geometry.

\subsection{Teacher–student test of query alignment}
\label{sec:exp-synthetic}
We compare students that share their teacher's encoding and trainable architecture but differ in their fixed query.
Each teacher is a two-qubit DDR circuit evaluated at $\boldsymbol f^\star=(\pi/4,\pi/3)$, with local signal encodings, nonlocal query encoding gates generated by $Y_0Z_1$ and $Z_0Y_1$, product trainable blocks, and measurement of $Z_0Z_1$.
Inputs $\boldsymbol x$ are sampled uniformly from $[0,2\pi)^2$, and labels are obtained by thresholding the teacher expectation value.
We draw teacher parameters from the student initialization distribution and retain teachers satisfying prespecified class-balance and nonseparability criteria.
The latter ensures that the zero-query separable model cannot represent the continuous teacher score.

The circuits differ only in their fixed query: the correct value
$\boldsymbol f^\star$, the mismatched value
$2\boldsymbol f^\star$, or $\boldsymbol 0$.
The correct-query class contains the teacher by construction.
The mismatched arm retains the nonlocal query encoding gates but combines their
phases differently, while the zero-query circuit factorizes across
the two signal coordinates. 
The comparison therefore tests whether learning benefits from the teacher's query value, rather than from additional sample-dependent information or the mere presence of nonlocal gates.

Figure~\ref{fig:teacher-student} reports the learning curves using the test balanced accuracy, defined as the mean recall across classes.
The correct-query student reaches higher balanced accuracy with fewer
training examples than either control. The mismatched query does not
reproduce this improvement despite retaining the nonlocal query
branch. 
Panel~b reports fits to the continuous teacher score. 
For the correct arm, this is essentially an implementation check, since one start uses the teacher's parameters; for the controls, it reports the best-found approximation error.

\begin{figure}
    \centering
    \includegraphics[width=1\linewidth]{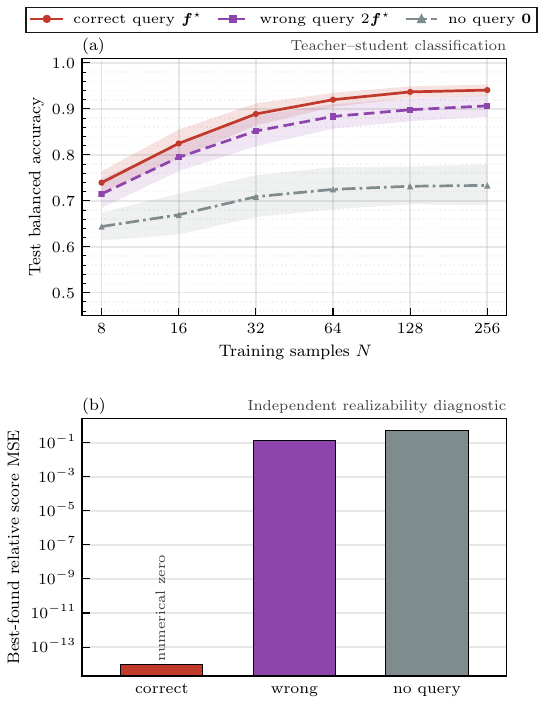}
    \caption{\textbf{A task-aligned fixed query improves learning in a teacher--student problem.}
    Teachers are DDR circuits evaluated at
    $\boldsymbol f^\star=(\pi/4,\pi/3)$; students use the correct query,
    the mismatched query $2\boldsymbol f^\star$, or no query.
    All students share the same architecture, trainable parameters,
    initializations, and optimization budget.
    \textbf{(a)} Test balanced accuracy versus training-set size.
    Lines show averages over ten independently sampled teachers after averaging folds within each teacher; shaded regions are paired $95\%$ bootstrap intervals over teachers.
    \textbf{(b)} Median best-found relative mean-square error when fitting
    the continuous teacher score on an independent grid.
    The correct-query model reaches numerical zero by construction, while
    the other models quantify the residual under the fixed optimization
    budget; the no-query class is separable by definition.
    }
    \label{fig:teacher-student}
\end{figure}

Realizability does not imply perfect accuracy from labels. 
The students see only $N_{\mathrm{train}}$ binary labels, which leave the decision boundary unconstrained between samples.
They also minimize a bounded-logit cross-entropy, whose minimizer within the class can trade a few points near the boundary for larger margins elsewhere and therefore need not coincide with the teacher.
The remaining errors concentrate near the teacher's decision boundary
(see Fig.~\ref{fig:decision-boundary} for a visualization).

\subsection{Sensitivity to query error}
\label{sec:query-sensitivity}

The three-arm comparison established a learning advantage for the
teacher query, but leaves open how precisely that query must be
specified. Using the same teacher ensemble and circuit architecture,
we fix the training-set size at $N_{\mathrm{train}}=256$ and retrain
each student independently for every candidate query. Data,
initializations, and optimization budgets are paired across query
values. 
We first scan $\boldsymbol f(\lambda)=\lambda\boldsymbol f^\star$, from the zero-query model at $\lambda=0$, through the teacher query at $\lambda=1$, to the mismatched query at $\lambda=2$.
This multiplicative mismatch preserves the relative pattern of query components and tests whether performance peaks at their task-prescribed scale.
Additionally, we additively perturb $\boldsymbol f^\star$ in eight directions to assess local sensitivity to the size and direction of additive query errors.
Details are provided in Appx.~\ref{appx:teacher-details}.

Figure~\ref{fig:query-sensitivity}a shows a sharp improvement as the query moves away from zero toward the task-aligned region. 
The matched query attains a mean balanced accuracy of $0.941$, compared with
$0.734$ at zero query and $0.906$ at $\lambda=2$.
The nominal maximum is $0.945$ at $\lambda=0.8125$, but performance forms a broad plateau around the teacher query.
Thus, $\boldsymbol f^\star$ is not a unique optimum in this finite-sample classification experiment. 
Nearby queries yield similar performance, while larger deviations along the scanned path weaken the advantage.

The local scan in Fig.~\ref{fig:query-sensitivity}b shows that this advantage persists under moderate additive query error. 
The mean, over teachers, of each teacher’s minimum accuracy across the eight sampled directions remains within one percentage point of its matched-query value up to $\delta_{1\%}=0.143$ rad, with a paired-bootstrap $95\%$ interval of $[0.106,0.233]$ rad.
At larger perturbations, degradation becomes increasingly dependent on direction. 
Together, the scans establish a region of useful queries around $\boldsymbol f^\star$ along the selected path and local directions; they do not establish an optimum over the full query domain.

\begin{figure}
    \centering
    \includegraphics[width=1\linewidth]{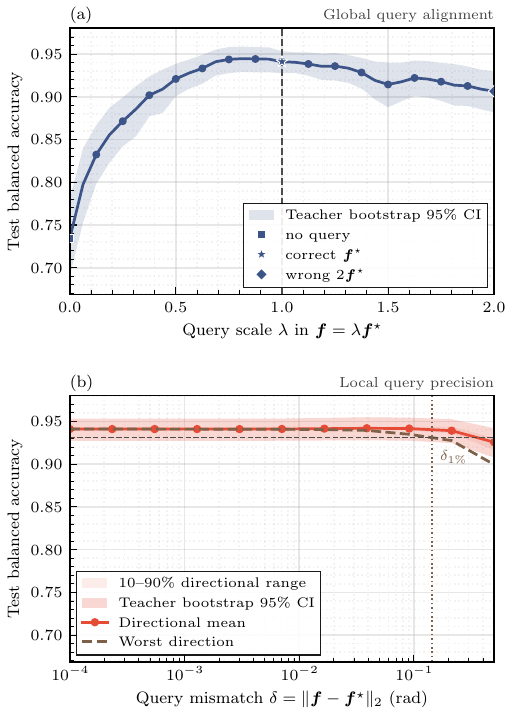}
    \caption{\textbf{Classification sensitivity to the fixed query value.}
    \textbf{(a)} Test balanced accuracy for students trained with
    $\boldsymbol f(\lambda)=\lambda\boldsymbol f^\star$, where
    $\boldsymbol f^\star=(\pi/4,\pi/3)$ is the query used to generate the teacher labels. 
    The values $\lambda=0$, $1$, and $2$ correspond to the no-query, correct-query, and mismatched-query models, respectively.
    \textbf{(b)} Local sensitivity to perturbations
    $\boldsymbol f=\boldsymbol f^\star+\delta(\cos\phi,\sin\phi)$ over eight sampled directions.
    The solid curve denotes the directional mean, the light band the directional
    $10$--$90\%$ range, and the dashed curve the minimum accuracy over directions, averaged over teachers.
    The marked tolerance $\delta_{1\%}$ estimates the largest query mismatch for which the worst-direction accuracy remains within one percentage point of its matched-query value. Parameters are retrained independently for every query;
    shaded confidence intervals are obtained by paired bootstrap resampling over
    teacher instances.}
    \label{fig:query-sensitivity}
\end{figure}

\section{Geometry-aligned classification of raw synthetic aperture radar-like data}
\label{sec:III}

\subsection{Acquisition geometry as a spectral query}
\label{sec:mappingSar}

The numerical study in Sec.~\ref{sec:II} used a teacher to define the aligned query and showed that nearby fixed queries can also support learning. 
In an acquisition task, we seek a query that is specified independently of the classifier and its labels. 
Synthetic aperture radar provides such a specification through the measurement geometry.

SAR forms spatially resolved images by combining complex echoes
recorded as the platform moves. A point scatterer contributes to many
measurement samples; its response becomes localized when those samples
are brought into phase according to the acquisition geometry
\cite{curlander1991synthetic,bamlerSyntheticApertureRadar1998,
cumming2005digital}. 
For scenes acquired under the same geometry, the raw signal echoes vary, but
the geometric phase relation used in focusing is fixed. We use the
echoes as the signal and derive the circuit query from that relation.

In standard $\omega$-$k$ processing, image formation combines
Fourier transformation of the raw measurements, matched filtering, a
geometry-dependent Stolt remapping, and an inverse transform. The
inverse Stolt relation specifies, for each point on the regular
image-frequency grid, where the Fourier transform of the measurements
must be evaluated.
We collect these generally nonuniform evaluation frequencies in
$\boldsymbol f^\star$.
Direct evaluation at these frequencies gives the idealized focusing operator
\begin{equation}
\mathcal R_{\mathrm{SAR}}
=
\mathcal F^{-1}_{\mathrm{img}}\,
\mathcal M\,\mathcal N_{\boldsymbol f^\star}.
\label{eq:sar-operator}
\end{equation}
Here $\mathcal N_{\boldsymbol f^\star}$ evaluates the MNUDFT of the
raw measurements at $\boldsymbol f^\star$, $\mathcal M$ applies the
corresponding fixed filtering and spectral weights, and
$\mathcal F^{-1}_{\mathrm{img}}$ transforms the result from the regular
image-frequency grid to a human-readable image.
Direct evaluation replaces the Stolt interpolation used to
approximate these values in conventional discrete implementations.
The idealized operator and its relation to interpolation are detailed
in Appx.~\ref{appx:sar-pipeline}.

\begin{proposition}[Spectral structure of SAR classification]
\label{cor:sar-spectral-task}
For a fixed acquisition, let $G_{\mathrm{SAR}}$ be a decision map
applied to the focused image. Its score as a function of the raw
measurements $\boldsymbol x$ is
\begin{equation}
\begin{aligned}
s_{\mathrm{SAR}}(\boldsymbol x)
&=
G_{\mathrm{SAR}}\!\left(
\mathcal R_{\mathrm{SAR}}[\boldsymbol x]\right)\\
&=
\widetilde G_{\mathrm{SAR}}\!\left(
\mathcal N_{\boldsymbol f^\star}[\boldsymbol x]\right),
\end{aligned}
\label{eq:sar-decision}
\end{equation}
where
$\widetilde G_{\mathrm{SAR}}
=G_{\mathrm{SAR}}\circ\mathcal F^{-1}_{\mathrm{img}}
\circ\mathcal M$.
Thus, classification after focusing has the spectral task structure
of Eq.~\eqref{eq:target-mnudft}. If $G_{\mathrm{SAR}}$ is polynomial
in the focused complex amplitudes and their conjugates, the raw-data
score has the exact nested Fourier structure of
Proposition~\ref{prop:mnudft-structure}.
\end{proposition}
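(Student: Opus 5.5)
The plan is to unfold the definition of the idealized focusing operator in Eq.~\eqref{eq:sar-operator} and regroup the composition; polynomial structure is then handled by a degree argument. First, the equality in Eq.~\eqref{eq:sar-decision} follows from associativity of composition. For a fixed acquisition, $\mathcal N_{\boldsymbol f^\star}$ maps $\boldsymbol x$ to the vector $[X_{\boldsymbol k}(\boldsymbol x,\boldsymbol f^\star_{\boldsymbol k})]_{\boldsymbol k\in\mathcal K}$ of Eq.~\eqref{eq:multi-nudft}, with positions $\boldsymbol p_{\boldsymbol d}$ given by the acquisition samples. Moreover, $\mathcal M$ and $\mathcal F^{-1}_{\mathrm{img}}$ are fixed maps independent of $\boldsymbol x$ and of the labels. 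Hence
\[
G_{\mathrm{SAR}}\bigl(\mathcal F^{-1}_{\mathrm{img}}\mathcal M\mathcal N_{\boldsymbol f^\star}[\boldsymbol x]\bigr)
=(G_{\mathrm{SAR}}\circ\mathcal F^{-1}_{\mathrm{img}}\circ\mathcal M)\bigl(\mathcal N_{\boldsymbol f^\star}[\boldsymbol x]\bigr).
\]
Setting $G=\widetilde G_{\mathrm{SAR}}$ in Eq.~\eqref{eq:target-mnudft} exhibits the spectral task structure. Labels are obtained from $s_{\mathrm{SAR}}$ by the same fixed rule, so the task is spectral in the sense defined in Sec.~\ref{sec:mnudft}.

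Second, I check that $\widetilde G_{\mathrm{SAR}}$ is a finite polynomial in the MNUDFT coefficients and their conjugates whenever $G_{\mathrm{SAR}}$ is polynomial in the focused amplitudes and their conjugates. The key observation is that $\mathcal M$, consisting of fixed filtering and spectral weights, is a complex-linear map; pointwise multiplication by fixed complex weights is one example. The inverse image transform $\mathcal F^{-1}_{\mathrm{img}}$ is a finite linear DFT on the regular image-frequency grid. Each focused amplitude is therefore a fixed complex linear combination of the $X_{\boldsymbol k}$, and its conjugate is the conjugate combination of the $X_{\boldsymbol k}^*$. Substituting these into a polynomial in the amplitudes and their conjugates yields a polynomial in $\{X_{\boldsymbol k},X_{\boldsymbol k}^*\}$ of the same or lower degree. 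Proposition~\ref{prop:mnudft-structure} then applies verbatim and gives the nested expansion Eq.~\eqref{eq:score-fourier-f}, with query modes built from $2\pi\boldsymbol q_{\boldsymbol k,\boldsymbol d}$.

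The main obstacle is making sure $\mathcal M$ is genuinely linear and fixed in the idealized pipeline of Appx.~\ref{appx:sar-pipeline}. Matched filtering in $\omega$-$k$ processing is a multiplication by a reference phase function, and the Stolt Jacobian is a real weight; both are evaluated at the fixed $\boldsymbol f^\star$. I would first verify this against the appendix definition and note that it relies on replacing Stolt interpolation by direct evaluation. Interpolation would also be linear, but it would change which frequencies appear. If $\mathcal M$ were affine rather than linear, the argument would still go through, since constants only add a zero mode. Finally, note that a real-valued $G_{\mathrm{SAR}}$ yields the conjugate-symmetry property automatically through Proposition~\ref{prop:mnudft-structure}.
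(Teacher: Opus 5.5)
Your proposal is correct and follows the paper's argument in Appx.~\ref{appx:sar-pipeline}. The factorization $G_{\mathrm{SAR}}\circ\mathcal R_{\mathrm{SAR}}=(G_{\mathrm{SAR}}\circ\mathcal F^{-1}_{\mathrm{img}}\circ\mathcal M)\circ\mathcal N_{\boldsymbol f^\star}$ is a regrouping of a fixed composition, and linearity of the weights $W_{m\ell}$ and of the inverse image transform keeps a polynomial decision map polynomial in the MNUDFT values and their conjugates, so Proposition~\ref{prop:mnudft-structure} applies directly, exactly as in the paper.
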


Equation~\eqref{eq:sar-decision} connects the task-defined query to
classification before image formation. A raw-data classifier can use
geometry-dependent phase relations without explicitly forming every
image pixel, provided it can learn the combinations relevant to the
label. This is attractive when processing or data transfer is
limited~\cite{floresSyntheticApertureRadar2019,
cossuRealtimeSARbasedProcessing2009,
garciaAdvancementsOnboardProcessing2024}. It is also challenging:
information about a localized scene feature is spread across
measurements and can be obscured by noise and other scatterers. The
acquisition geometry supplies the fixed query, but neither the
unknown scene nor its label.

\subsection{A small-scale SAR-like localization task}
\label{sec:sar-experiment}

In $\omega$--$k$ focusing, the nonuniform range-frequency
evaluation points depend on azimuth frequency.
We isolate one local effect of this dependence that can be tested with a small quantum circuit: the coupling between azimuth and range phases in an unfocused spectral patch.
The scalar query $f^\star_{\mathrm{geom}}$ specifies the strength of this coupling and hence the phase correction that compensates it to leading order.
It summarizes one local component of the focusing relation; $\boldsymbol f^\star$ in the preceding subsection denotes the full set of MNUDFT evaluation frequencies.

Each example is an $8\times8$ patch of complex spectral samples
$\boldsymbol x=(x_{ad})$. 
Its coordinates are the azimuth wavenumber $k_x$, associated with measurements along the flight path, and the two-way range wavenumber $k_r$, associated with radar frequency. 
In the propagation geometry considered here, the remaining spatial wavenumber is
\begin{equation}
k_y(k_x,k_r)=\sqrt{k_r^2-k_x^2}.
\label{eq:sar-dispersion}
\end{equation}
A point target at reference slant range $R_0$ contributes a phase
containing $R_0k_y(k_x,k_r)$. 
Because this phase depends jointly on $k_x$ and $k_r$, independent corrections along the two coordinates do not remove all of its variation.

\begin{figure}[ht!]
\centering
\includegraphics[width=\linewidth]{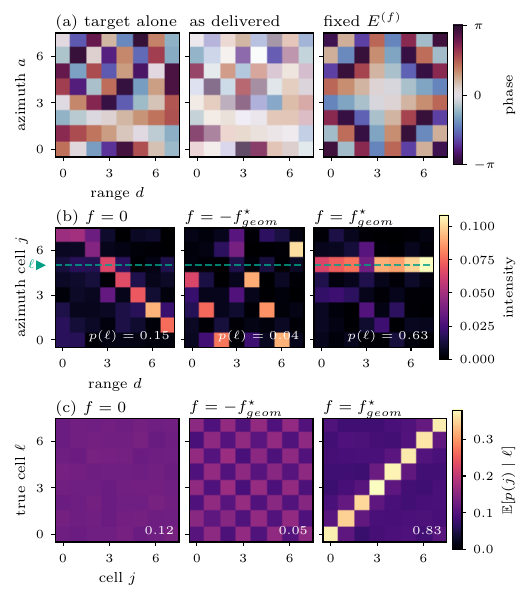}
\caption{\textbf{Geometry-derived phase correction of an unfocused SAR-like measurement.}
    \textbf{(a)} 
    Isolated target phase, delivered patch with clutter,
    noise, and channel phase, and the correction calculated from the acquisition. 
    Hue shows phase; brightness shows amplitude
    ($|x_{ad}|$ for the measured patch).
    \textbf{(b)} 
    Intensity by azimuth cell $j$ and range index $d$
    after the same Fourier readout with zero, wrong-sign, or
    geometry-derived query. The marker indicates the true cell $\ell$.
    The value $p(\ell)$ is the readout probability of the true cell for the scene shown.
    \textbf{(c)} 
    Conditional mean output probabilities $\mathbb{E}[p(j)\mid\ell]$ over independent scenes. The displayed balanced accuracies use the most probable output cell as the predicted label.
    This diagnostic uses the known
    channel calibration and geometry, with no fitted parameters.}
\label{fig:sar-data-structure}
\end{figure}

The patch samples 
$k_{x,a}=k_{x0}+\Delta k_x\widetilde a$ and
$k_{r,d}=k_{r0}+\Delta k_r\widetilde d$, where
$a,d\in\{0,\ldots,7\}$ and
$\widetilde a=a-7/2$, $\widetilde d=d-7/2$.
Here, $k_{x0}$ and $k_{r0}$ are the patch-center wavenumbers, and $\Delta k_x$ and $\Delta k_r$ are the sample spacings.
Expanding around $(k_{x0},k_{r0})$ separates the constant and
single-coordinate phases from the leading mixed term:
\begin{equation}
R_0k_y(k_{x,a},k_{r,d})
=
\mathrm{const}+\phi_a+\psi_d
+f^\star_{\mathrm{geom}}\widetilde a\widetilde d
+\mathcal R_{ad}.
\label{eq:sar-local-expansion}
\end{equation}
Here, $\mathrm{const}$ is a constant phase, $\phi_a$ and $\psi_d$ depend only on their respective coordinates, and  $\mathcal R_{ad}$ contains higher-order mixed terms.
The leading coupling is
\begin{equation}
\begin{aligned}
f^\star_{\mathrm{geom}}
&=
R_0\Delta k_x\Delta k_r
\left.
\frac{\partial^2 k_y}{\partial k_x\,\partial k_r}
\right|_{(k_{x0},k_{r0})}\\
&=
\frac{R_0\Delta k_x\Delta k_r\,k_{x0}k_{r0}}
{(k_{r0}^2-k_{x0}^2)^{3/2}}.
\end{aligned}
\label{eq:sar-geometry-query}
\end{equation}
The full derivation is given in Appx.~\ref{appx:derivation-geom}.
The acquisition fixes the center wavenumbers, sample spacings, and reference range. 
Thus $f^\star_{\mathrm{geom}}$ is calculated before any scene or label is used.
Because $f^\star_{\rm geom}\propto k_{x0}k_{r0}$, the leading coupling vanishes for a patch centered at zero azimuth wavenumber.
All acquisitions considered here therefore use a patch offset in azimuth wavenumber, so that $k_{x0}\neq0$.

The role of $f^\star_{\mathrm{geom}}$ can be seen by fixing the range index $d$. 
The mixed phase $f^\star_{\mathrm{geom}}\widetilde a\widetilde d$ then acts as an
azimuth phase ramp whose slope depends on $d$, shifting different
range slices differently under an azimuth Fourier readout.
A direct correction with query $f$ multiplies each sample by $\exp[-if\widetilde a\widetilde d]$, leaving the leading mixed phase
\[
\exp\!\left[
i(f^\star_{\mathrm{geom}}-f)\widetilde a\widetilde d
\right].
\]
Thus, $f=f^\star_{\mathrm{geom}}$ cancels the leading coupling,
$f=0$ leaves it unchanged, and $f=-f^\star_{\mathrm{geom}}$
doubles it. 
Figure~\ref{fig:sar-data-structure} illustrates the resulting effect on localization.

This correction has the spectral phase structure of
Sec.~\ref{sec:II}: the patch coordinates fix the pattern
$q_{ad}=\widetilde a\widetilde d$, while the acquisition determines
its scale $f^\star_{\mathrm{geom}}$.
These specify the query encoding and query value, respectively.
The query is shared across scenes; their target positions remain
encoded in the measured signal.

Each generated scene contains a point target at a continuous
position within one of eight azimuth cells. Its range position and
complex reflectivity vary between scenes. 
We also include four off-range scatterers and complex Gaussian measurement
noise at a nominal SNR of $10$\,dB.
Clutter introduces competing coherent returns, while noise perturbs
the measured phase pattern; together, they test whether the
geometry-derived query remains useful when  clutter and noise obscure the target response.
A fixed, label-independent channel phase introduces a shared calibration error.
The label is the target's azimuth cell, and the classifier receives the phase of the unfocused spectral patch. 
Scene generation uses the nonseparable part of $R_0k_y$ from the full dispersion relation, Eq.~\eqref{eq:sar-dispersion}, and represents the separable phases by random linear ramps; only the circuit query uses the leading mixed term of Eq.~\eqref{eq:sar-local-expansion}. 
The neglected higher-order mixed terms remain below $0.35$\,rad for every acquisition studied (Appx.~\ref{app:sar}), so the query does not exactly correct every phase in the generated data.

Figure~\ref{fig:sar-data-structure} illustrates that, before training, the geometry-derived correction concentrates the target response more strongly in its azimuth cell than the zero or wrong-sign correction under the same Fourier readout. 
Its balanced accuracy uses known channel calibration; the trained circuits receive uncalibrated phases and learn channel compensation and an effective readout from labels.
Acquisition parameters, scene generation, and training details are given in Appx.~\ref{appx:sar-pipeline}.

\subsection{Circuit with a geometry-derived query encoding gate}

The circuit uses two three-qubit registers, $A$ and $D$, for the azimuth and range indices. 
Across $L/2=2$ reuploading blocks, it encodes the measured phase and geometry query as
\begin{align}
E_\ell^{(x)}(\boldsymbol x)&=\operatorname{diag}_{a,d}\ ( \exp(i\,\arg(x_{ad})/2) ),\\ 
E^{(f)}_\ell(f)&=\exp (-i\,\frac{f}{2}\,\widetilde A\otimes\widetilde D),
\end{align}
where $\widetilde A=\operatorname{diag}_a(\widetilde a)$ and $\widetilde D=\operatorname{diag}_d(\widetilde d)$.
The generator $\widetilde A\otimes\widetilde D$ is diagonal with
eigenvalues $q_{ad}=\widetilde a\,\widetilde d$, so it is the position-diagonal encoding prescribed by the patch coordinates.
Each upload applies the phase $\exp(-i f q_{ad}/2)$ to sample $(a,d)$. 
At the geometry-derived query $f=f^\star_{\mathrm{geom}}$, each query upload applies half of the leading bilinear correction. 
When all trainable rotations are zero, the fixed CZ layers cancel, and the circuit before Fourier readout reduces to the phase-only signal encoding followed by the full correction.
Each upload decomposes into nine commuting cross-register $R_{ZZ}$ rotations.
The trainable blocks are $U_A\otimes U_D$: they can couple qubits within either register, but cannot implement an entangling correction between the registers. 
For the acquisitions studied here, the geometry-derived query couples the azimuth and range registers.
Because each trainable block acts separately on these registers, the query's effect cannot be removed by adjusting the parameters of a neighboring block.
A Fourier transform on $A$, followed by marginalization over $D$, gives probabilities for the eight azimuth cells. 
At each fixed query the circuit has 48 trainable parameters. 
We can therefore compare query values while holding the trainable architecture and optimization resources fixed.

\begin{figure}[t!]
    \centering
    \includegraphics[width=0.9\linewidth]{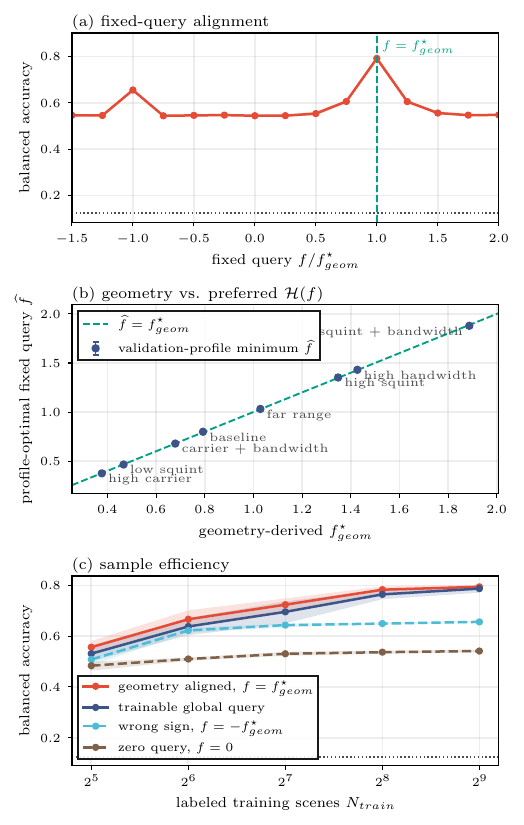}
    \caption{\textbf{Geometry-derived queries improve SAR-like spectral classification.}
    \textbf{(a)} Test balanced accuracy as the fixed query is varied relative to $f^\star_{\mathrm{geom}}$. The variational parameters are retrained independently at every query value using paired datasets and initializations; the vertical dashed line marks the prediction from the acquisition geometry.
    \textbf{(b)} Profile-optimal fixed query $\widehat f$ versus $f^\star_{\mathrm{geom}}$ for acquisitions with different carrier frequencies, squint angles, bandwidths, and reference ranges. For each acquisition, the variational parameters are retrained over a grid of fixed queries and $\widehat f$ minimizes the mean held-out validation cross-entropy. 
    Error bars are $95\%$ intervals obtained by bootstrapping the complete validation-loss profiles and re-estimating the preferred query. 
    They quantify variability across scene draws conditional on the local quadratic interpolation and do not include interpolation bias.
    The dashed line denotes $\widehat f=f^\star_{\mathrm{geom}}$.
    \textbf{(c)} Test balanced accuracy versus the number $N_{\rm train}$ of labeled training scenes for the geometry-aligned query $f=f^\star_{\mathrm{geom}}$, a trainable global query, the equally entangling wrong-sign query $f=-f^\star_{\mathrm{geom}}$, and the zero-query circuit with $f=0$. The fixed-query models share the same phase upload, 48-parameter variational circuit, optimization budget, and paired initializations; the trainable-query model contains one additional global scalar. Lines and shaded regions in \textbf{(a)} and \textbf{(c)} show means and $95\%$ bootstrap intervals over independent paired scene draws. The horizontal dotted line denotes chance accuracy, $1/8$.
    \textbf{(a,b)} use $N_{\rm train}=2^9$.
    }
    \label{fig:sar-geometry-classification}
\end{figure}

\subsection{Results}

We first vary $f$ for one acquisition, keeping the sampled scenes fixed and retraining the circuit at every query value under the same optimization budget. 
Figure~\ref{fig:sar-geometry-classification}a shows that classification performance peaks near the independently
predicted $f^\star_{\mathrm{geom}}$. The wrong-sign query performs substantially worse despite having the same gate count and operator-Schmidt spectrum. 
The benefit therefore depends on the direction of the geometric correction, not only on the presence of an entangling query encoding gate.

We next vary carrier frequency, squint, bandwidth, and reference range~(Table~\ref{tab:sar_acq}).
Across these acquisitions, $f^\star_{\rm geom}$ ranges from $0.38$ to $1.89$.
For each acquisition we retrain the circuit at fixed queries $f/f^\star_{\rm geom}\in\{-1.5,-1.25,\dots,2\}$
and find the query value $\widehat f$ that minimizes the validation loss.
For every acquisition, the minimum validation loss on the sampled grid occurs at the geometry-derived query $f=f^\star_{\rm geom}$.
Local quadratic interpolation through the minimum-loss grid point and its two neighbors (Appendix C4) gives estimates close to the geometry-derived prediction (Fig.~\ref{fig:sar-geometry-classification}b).
Because the scenes are generated from the same dispersion relation, this agreement is a consistency check rather than independent physical evidence. 
It shows that the leading-order coupling of Eq.~\eqref{eq:sar-geometry-query}, despite neglecting higher-order mixed terms, is the query the trained circuit itself prefers under clutter, noise, and an uncalibrated channel phase, and that this preference follows the geometry across a range of $f^\star_{\mathrm{geom}}$. 
Supplying $f^\star_{\mathrm{geom}}$, which uses no labels or fitted scene parameters, therefore fixes in advance a value that would otherwise have to be found from labels, which is probed in panel~c.

Finally, Fig.~\ref{fig:sar-geometry-classification}c tests whether supplying that prediction helps when labels are limited. 
The geometry-aligned circuit outperforms both the equally entangling wrong-sign control and the zero-query control across the studied training-set sizes. 
It also modestly outperforms a circuit that learns one global query from labels at several small and intermediate sizes; 
at the largest size, the paired 95\% bootstrap interval of the accuracy difference includes zero. The acquisition-derived query is most useful where training data offer the least opportunity to infer its effect.

This experiment retains the coupled $\omega$-$k$ phase and includes off-range scatterers, noise, and channel phase, but uses small synthetic point-scatterer patches. 
It omits the transmitted pulse, antenna response, platform errors, extended-scene speckle, and complete focusing pipeline, and assumes ideal phase upload and noiseless circuit evaluation.
Within this small-scale setting, acquisition geometry supplies a query that cannot be absorbed into an adjacent trainable block before training, and that choice improves localization from limited labeled scenes.

\section{Discussion and Outlook}
\label{sec:conclusion-pqc}

We identify coefficient modulation by a fixed query as a mechanism for task-aligned inductive bias in parameterized quantum circuits.
The query changes how existing signal Fourier modes are combined, allowing task-prescribed phase relations to shape the hypothesis class or function prior without adding sample-dependent information.
This connects spectral task structure to circuit design: 
evaluation frequencies specify the query, and sampling geometry guides its encoding.
Small-scale spectral and SAR-like experiments demonstrate that aligned queries can improve learning from limited data, while their practical benefits at larger scales remain open.

\emph{Query-controlled inductive bias}---
A query-induced change in an individual circuit output does not by itself establish a change in inductive bias.
The coefficient-matrix rank bounds modulation at fixed parameters, while absorption provides sufficient conditions for the hypothesis class and function prior to remain invariant.
Two-qubit circuits distinguish all three cases: 
no model-level change, prior reweighting, and hypothesis-class change.
Rank alone does not predict learning benefits, and failure of the absorption test does not establish a class change.
These results motivate circuit design from required coefficient relations: 
choose encodings for the task modes, place query operations where their effects survive reparameterization, and select trainable blocks that make the required coefficient functions accessible.
The remaining challenge is to characterize which functions such architectures can represent and efficiently learn, and when additional reuploading layers are needed.

\emph{Aligning the query with spectral tasks}---
Nonuniform Fourier evaluations specify a query value and phase modes that guide its encoding.
The alignment bound separates phase mismatch from coefficient error;  
an explicit intensity-contrast construction makes both vanish for a nonlinear, phase-only target.
The teacher--student experiment shows that an aligned fixed query can improve learning and tolerate moderate query error.
However, the teacher belongs to the circuit family by construction, and the exact realization covers a specific target.
Extending these results to general MNUDFT-based classifiers requires identifying which signal-dependent coefficient functions are accessible with compact task-derived encodings and how their structure affects learning from finite data.

\emph{Geometry-aligned classification of raw SAR-like data}---
SAR acquisition geometry supplies a query independently of training data and labels.
The leading azimuth-range coupling predicts useful query values across acquisitions, and supplying this prediction improves localization from limited labeled scenes.
The experiment uses small synthetic, phase-only patches and a scalar local correction, with ideal input encoding and noiseless circuit evaluation.
The next step is to test richer geometry-derived queries on larger, experimentally acquired phase histories.
Comparisons with geometry-aware classical classifiers and conventional image-based pipelines should match the available information and account for computational resources, including measurement loading and circuit-output estimation.

More broadly, many machine learning tasks based on measurement data prescribe in advance how samples must be phased and combined.
Such prescriptions can enter a parameterized quantum circuit as a fixed query, providing a way to build task-aligned inductive biases.
The absorption tests and diagnostics developed here can assess whether this choice changes the hypothesis class or function prior before any training data are seen.

\section*{Data and code availability}
The code and data generated and analyzed in this study are available in the accompanying GitHub repository at \url{https://github.com/felixfrohnertdb/sar-qml}.



\section*{Competing interests}
S.L. and M.S. are employees of Airbus Defence and Space GmbH. 
The remaining authors declare no financial or non-financial competing interests.

\begin{acknowledgments}
The authors thank Riccardo Molteni for useful discussions.
This work was funded by the European Union under Grant Agreement
No.~101080142 (EQUALITY). 
A.P.-S. acknowledges the Swiss National Science Foundation for its support through the grant TMPFP2\_234085. 
V.D. acknowledges support from the European Union’s Horizon Europe program through the ERC CoG BeMAIQuantum (Grant No. 101124342). 
This work was supported by the Dutch National Growth Fund (NGF) as part of the QDNL programme.
\end{acknowledgments}


\newpage
\bibliography{references}

\begin{thebibliography}{65}%
\makeatletter
\providecommand \@ifxundefined [1]{%
 \@ifx{#1\undefined}
}%
\providecommand \@ifnum [1]{%
 \ifnum #1\expandafter \@firstoftwo
 \else \expandafter \@secondoftwo
 \fi
}%
\providecommand \@ifx [1]{%
 \ifx #1\expandafter \@firstoftwo
 \else \expandafter \@secondoftwo
 \fi
}%
\providecommand \natexlab [1]{#1}%
\providecommand \enquote  [1]{``#1''}%
\providecommand \bibnamefont  [1]{#1}%
\providecommand \bibfnamefont [1]{#1}%
\providecommand \citenamefont [1]{#1}%
\providecommand \href@noop [0]{\@secondoftwo}%
\providecommand \href [0]{\begingroup \@sanitize@url \@href}%
\providecommand \@href[1]{\@@startlink{#1}\@@href}%
\providecommand \@@href[1]{\endgroup#1\@@endlink}%
\providecommand \@sanitize@url [0]{\catcode `\\12\catcode `\$12\catcode `\&12\catcode `\#12\catcode `\^12\catcode `\_12\catcode `\%12\relax}%
\providecommand \@@startlink[1]{}%
\providecommand \@@endlink[0]{}%
\providecommand \url  [0]{\begingroup\@sanitize@url \@url }%
\providecommand \@url [1]{\endgroup\@href {#1}{\urlprefix }}%
\providecommand \urlprefix  [0]{URL }%
\providecommand \Eprint [0]{\href }%
\providecommand \doibase [0]{https://doi.org/}%
\providecommand \selectlanguage [0]{\@gobble}%
\providecommand \bibinfo  [0]{\@secondoftwo}%
\providecommand \bibfield  [0]{\@secondoftwo}%
\providecommand \translation [1]{[#1]}%
\providecommand \BibitemOpen [0]{}%
\providecommand \bibitemStop [0]{}%
\providecommand \bibitemNoStop [0]{.\EOS\space}%
\providecommand \EOS [0]{\spacefactor3000\relax}%
\providecommand \BibitemShut  [1]{\csname bibitem#1\endcsname}%
\let\auto@bib@innerbib\@empty
\bibitem [{\citenamefont {Biamonte}\ \emph {et~al.}(2017)\citenamefont {Biamonte}, \citenamefont {Wittek}, \citenamefont {Pancotti}, \citenamefont {Rebentrost}, \citenamefont {Wiebe},\ and\ \citenamefont {Lloyd}}]{biamonte_quantum_2017}%
  \BibitemOpen
  \bibfield  {author} {\bibinfo {author} {\bibfnamefont {J.}~\bibnamefont {Biamonte}}, \bibinfo {author} {\bibfnamefont {P.}~\bibnamefont {Wittek}}, \bibinfo {author} {\bibfnamefont {N.}~\bibnamefont {Pancotti}}, \bibinfo {author} {\bibfnamefont {P.}~\bibnamefont {Rebentrost}}, \bibinfo {author} {\bibfnamefont {N.}~\bibnamefont {Wiebe}},\ and\ \bibinfo {author} {\bibfnamefont {S.}~\bibnamefont {Lloyd}},\ }\bibfield  {title} {\bibinfo {title} {Quantum machine learning},\ }\href@noop {} {\bibfield  {journal} {\bibinfo  {journal} {Nature}\ }\textbf {\bibinfo {volume} {549}},\ \bibinfo {pages} {195} (\bibinfo {year} {2017})}\BibitemShut {NoStop}%
\bibitem [{\citenamefont {Benedetti}\ \emph {et~al.}(2019)\citenamefont {Benedetti}, \citenamefont {Lloyd}, \citenamefont {Sack},\ and\ \citenamefont {Fiorentini}}]{benedettiParameterizedQuantumCircuits2019}%
  \BibitemOpen
  \bibfield  {author} {\bibinfo {author} {\bibfnamefont {M.}~\bibnamefont {Benedetti}}, \bibinfo {author} {\bibfnamefont {E.}~\bibnamefont {Lloyd}}, \bibinfo {author} {\bibfnamefont {S.}~\bibnamefont {Sack}},\ and\ \bibinfo {author} {\bibfnamefont {M.}~\bibnamefont {Fiorentini}},\ }\bibfield  {title} {\bibinfo {title} {Parameterized quantum circuits as machine learning models},\ }\href@noop {} {\bibfield  {journal} {\bibinfo  {journal} {Quantum Science and Technology}\ }\textbf {\bibinfo {volume} {4}},\ \bibinfo {pages} {043001} (\bibinfo {year} {2019})}\BibitemShut {NoStop}%
\bibitem [{\citenamefont {Cerezo}\ \emph {et~al.}(2021)\citenamefont {Cerezo}, \citenamefont {Arrasmith}, \citenamefont {Babbush}, \citenamefont {Benjamin}, \citenamefont {Endo}, \citenamefont {Fujii}, \citenamefont {McClean}, \citenamefont {Mitarai}, \citenamefont {Yuan}, \citenamefont {Cincio},\ and\ \citenamefont {Coles}}]{cerezoVariational2021}%
  \BibitemOpen
  \bibfield  {author} {\bibinfo {author} {\bibfnamefont {M.}~\bibnamefont {Cerezo}}, \bibinfo {author} {\bibfnamefont {A.}~\bibnamefont {Arrasmith}}, \bibinfo {author} {\bibfnamefont {R.}~\bibnamefont {Babbush}}, \bibinfo {author} {\bibfnamefont {S.~C.}\ \bibnamefont {Benjamin}}, \bibinfo {author} {\bibfnamefont {S.}~\bibnamefont {Endo}}, \bibinfo {author} {\bibfnamefont {K.}~\bibnamefont {Fujii}}, \bibinfo {author} {\bibfnamefont {J.~R.}\ \bibnamefont {McClean}}, \bibinfo {author} {\bibfnamefont {K.}~\bibnamefont {Mitarai}}, \bibinfo {author} {\bibfnamefont {X.}~\bibnamefont {Yuan}}, \bibinfo {author} {\bibfnamefont {L.}~\bibnamefont {Cincio}},\ and\ \bibinfo {author} {\bibfnamefont {P.~J.}\ \bibnamefont {Coles}},\ }\bibfield  {title} {\bibinfo {title} {Variational quantum algorithms},\ }\href {https://doi.org/10.1038/s42254-021-00348-9} {\bibfield  {journal} {\bibinfo  {journal} {Nature Reviews Physics}\ }\textbf {\bibinfo {volume} {3}},\ \bibinfo {pages} {625} (\bibinfo {year} {2021})}\BibitemShut
  {NoStop}%
\bibitem [{\citenamefont {Preskill}(2018)}]{preskill_quantum_2018}%
  \BibitemOpen
  \bibfield  {author} {\bibinfo {author} {\bibfnamefont {J.}~\bibnamefont {Preskill}},\ }\bibfield  {title} {\bibinfo {title} {Quantum {Computing} in the {NISQ} era and beyond},\ }\href {https://doi.org/10.22331/q-2018-08-06-79} {\bibfield  {journal} {\bibinfo  {journal} {Quantum}\ }\textbf {\bibinfo {volume} {2}},\ \bibinfo {pages} {79} (\bibinfo {year} {2018})}\BibitemShut {NoStop}%
\bibitem [{\citenamefont {Bharti}\ \emph {et~al.}(2022)\citenamefont {Bharti}, \citenamefont {Cervera-Lierta}, \citenamefont {Kyaw}, \citenamefont {Haug}, \citenamefont {Alperin-Lea}, \citenamefont {Anand}, \citenamefont {Degroote}, \citenamefont {Heimonen}, \citenamefont {Kottmann}, \citenamefont {Menke}, \citenamefont {Mok}, \citenamefont {Sim}, \citenamefont {Kwek},\ and\ \citenamefont {Aspuru-Guzik}}]{bharti2022}%
  \BibitemOpen
  \bibfield  {author} {\bibinfo {author} {\bibfnamefont {K.}~\bibnamefont {Bharti}}, \bibinfo {author} {\bibfnamefont {A.}~\bibnamefont {Cervera-Lierta}}, \bibinfo {author} {\bibfnamefont {T.~H.}\ \bibnamefont {Kyaw}}, \bibinfo {author} {\bibfnamefont {T.}~\bibnamefont {Haug}}, \bibinfo {author} {\bibfnamefont {S.}~\bibnamefont {Alperin-Lea}}, \bibinfo {author} {\bibfnamefont {A.}~\bibnamefont {Anand}}, \bibinfo {author} {\bibfnamefont {M.}~\bibnamefont {Degroote}}, \bibinfo {author} {\bibfnamefont {H.}~\bibnamefont {Heimonen}}, \bibinfo {author} {\bibfnamefont {J.~S.}\ \bibnamefont {Kottmann}}, \bibinfo {author} {\bibfnamefont {T.}~\bibnamefont {Menke}}, \bibinfo {author} {\bibfnamefont {W.-K.}\ \bibnamefont {Mok}}, \bibinfo {author} {\bibfnamefont {S.}~\bibnamefont {Sim}}, \bibinfo {author} {\bibfnamefont {L.-C.}\ \bibnamefont {Kwek}},\ and\ \bibinfo {author} {\bibfnamefont {A.}~\bibnamefont {Aspuru-Guzik}},\ }\bibfield  {title} {\bibinfo {title} {Noisy intermediate-scale quantum algorithms},\ }\href@noop
  {} {\bibfield  {journal} {\bibinfo  {journal} {Reviews of Modern Physics}\ }\textbf {\bibinfo {volume} {94}},\ \bibinfo {pages} {015004} (\bibinfo {year} {2022})}\BibitemShut {NoStop}%
\bibitem [{\citenamefont {Cerezo}\ \emph {et~al.}(2022)\citenamefont {Cerezo}, \citenamefont {Verdon}, \citenamefont {Huang}, \citenamefont {Cincio},\ and\ \citenamefont {Coles}}]{cerezo2022challenges}%
  \BibitemOpen
  \bibfield  {author} {\bibinfo {author} {\bibfnamefont {M.}~\bibnamefont {Cerezo}}, \bibinfo {author} {\bibfnamefont {G.}~\bibnamefont {Verdon}}, \bibinfo {author} {\bibfnamefont {H.-Y.}\ \bibnamefont {Huang}}, \bibinfo {author} {\bibfnamefont {L.}~\bibnamefont {Cincio}},\ and\ \bibinfo {author} {\bibfnamefont {P.~J.}\ \bibnamefont {Coles}},\ }\bibfield  {title} {\bibinfo {title} {Challenges and opportunities in quantum machine learning},\ }\href@noop {} {\bibfield  {journal} {\bibinfo  {journal} {Nature Computational Science}\ }\textbf {\bibinfo {volume} {2}},\ \bibinfo {pages} {567} (\bibinfo {year} {2022})}\BibitemShut {NoStop}%
\bibitem [{\citenamefont {Bowles}\ \emph {et~al.}(2024)\citenamefont {Bowles}, \citenamefont {Ahmed},\ and\ \citenamefont {Schuld}}]{bowles2024better}%
  \BibitemOpen
  \bibfield  {author} {\bibinfo {author} {\bibfnamefont {J.}~\bibnamefont {Bowles}}, \bibinfo {author} {\bibfnamefont {S.}~\bibnamefont {Ahmed}},\ and\ \bibinfo {author} {\bibfnamefont {M.}~\bibnamefont {Schuld}},\ }\href@noop {} {\bibinfo {title} {Better than classical? {T}he subtle art of benchmarking quantum machine learning models}} (\bibinfo {year} {2024}),\ \Eprint {https://arxiv.org/abs/2403.07059} {arXiv:2403.07059 [quant-ph]} \BibitemShut {NoStop}%
\bibitem [{\citenamefont {Huang}\ \emph {et~al.}(2021)\citenamefont {Huang}, \citenamefont {Broughton}, \citenamefont {Mohseni}, \citenamefont {Babbush}, \citenamefont {Boixo}, \citenamefont {Neven},\ and\ \citenamefont {McClean}}]{huang2021power}%
  \BibitemOpen
  \bibfield  {author} {\bibinfo {author} {\bibfnamefont {H.-Y.}\ \bibnamefont {Huang}}, \bibinfo {author} {\bibfnamefont {M.}~\bibnamefont {Broughton}}, \bibinfo {author} {\bibfnamefont {M.}~\bibnamefont {Mohseni}}, \bibinfo {author} {\bibfnamefont {R.}~\bibnamefont {Babbush}}, \bibinfo {author} {\bibfnamefont {S.}~\bibnamefont {Boixo}}, \bibinfo {author} {\bibfnamefont {H.}~\bibnamefont {Neven}},\ and\ \bibinfo {author} {\bibfnamefont {J.~R.}\ \bibnamefont {McClean}},\ }\bibfield  {title} {\bibinfo {title} {Power of data in quantum machine learning},\ }\href {https://doi.org/10.1038/s41467-021-22539-9} {\bibfield  {journal} {\bibinfo  {journal} {Nature Communications}\ }\textbf {\bibinfo {volume} {12}},\ \bibinfo {pages} {2631} (\bibinfo {year} {2021})}\BibitemShut {NoStop}%
\bibitem [{\citenamefont {K{\"u}bler}\ \emph {et~al.}(2021)\citenamefont {K{\"u}bler}, \citenamefont {Buchholz},\ and\ \citenamefont {Sch{\"o}lkopf}}]{kublerInductiveBiasQuantum2021}%
  \BibitemOpen
  \bibfield  {author} {\bibinfo {author} {\bibfnamefont {J.}~\bibnamefont {K{\"u}bler}}, \bibinfo {author} {\bibfnamefont {S.}~\bibnamefont {Buchholz}},\ and\ \bibinfo {author} {\bibfnamefont {B.}~\bibnamefont {Sch{\"o}lkopf}},\ }\bibfield  {title} {\bibinfo {title} {The inductive bias of quantum kernels},\ }\href@noop {} {\bibfield  {journal} {\bibinfo  {journal} {Advances in neural information processing systems}\ }\textbf {\bibinfo {volume} {34}},\ \bibinfo {pages} {12661} (\bibinfo {year} {2021})}\BibitemShut {NoStop}%
\bibitem [{\citenamefont {Schuld}\ and\ \citenamefont {Killoran}(2022)}]{SchuldAdvantage2022}%
  \BibitemOpen
  \bibfield  {author} {\bibinfo {author} {\bibfnamefont {M.}~\bibnamefont {Schuld}}\ and\ \bibinfo {author} {\bibfnamefont {N.}~\bibnamefont {Killoran}},\ }\bibfield  {title} {\bibinfo {title} {Is quantum advantage the right goal for quantum machine learning?},\ }\href@noop {} {\bibfield  {journal} {\bibinfo  {journal} {PRX Quantum}\ }\textbf {\bibinfo {volume} {3}},\ \bibinfo {pages} {030101} (\bibinfo {year} {2022})}\BibitemShut {NoStop}%
\bibitem [{\citenamefont {Bowles}\ \emph {et~al.}(2023)\citenamefont {Bowles}, \citenamefont {Wright}, \citenamefont {Farkas}, \citenamefont {Killoran},\ and\ \citenamefont {Schuld}}]{bowles_contextuality_2023}%
  \BibitemOpen
  \bibfield  {author} {\bibinfo {author} {\bibfnamefont {J.}~\bibnamefont {Bowles}}, \bibinfo {author} {\bibfnamefont {V.~J.}\ \bibnamefont {Wright}}, \bibinfo {author} {\bibfnamefont {M.}~\bibnamefont {Farkas}}, \bibinfo {author} {\bibfnamefont {N.}~\bibnamefont {Killoran}},\ and\ \bibinfo {author} {\bibfnamefont {M.}~\bibnamefont {Schuld}},\ }\href@noop {} {\bibinfo {title} {Contextuality and inductive bias in quantum machine learning}} (\bibinfo {year} {2023}),\ \bibinfo {note} {arXiv:2302.01365 [quant-ph]}\BibitemShut {NoStop}%
\bibitem [{\citenamefont {Gili}\ \emph {et~al.}(2024)\citenamefont {Gili}, \citenamefont {Alonso},\ and\ \citenamefont {Schuld}}]{gili2024inductive}%
  \BibitemOpen
  \bibfield  {author} {\bibinfo {author} {\bibfnamefont {K.}~\bibnamefont {Gili}}, \bibinfo {author} {\bibfnamefont {G.}~\bibnamefont {Alonso}},\ and\ \bibinfo {author} {\bibfnamefont {M.}~\bibnamefont {Schuld}},\ }\bibfield  {title} {\bibinfo {title} {An inductive bias from quantum mechanics: learning order effects with non-commuting measurements},\ }\href@noop {} {\bibfield  {journal} {\bibinfo  {journal} {Quantum Machine Intelligence}\ }\textbf {\bibinfo {volume} {6}},\ \bibinfo {pages} {67} (\bibinfo {year} {2024})}\BibitemShut {NoStop}%
\bibitem [{\citenamefont {Mitchell}(1980)}]{mitchell1980}%
  \BibitemOpen
  \bibfield  {author} {\bibinfo {author} {\bibfnamefont {T.~M.}\ \bibnamefont {Mitchell}},\ }\href@noop {} {\emph {\bibinfo {title} {The need for biases in learning generalizations}}},\ \bibinfo {type} {Tech. Rep.}\ \bibinfo {number} {CBM-TR-117}\ (\bibinfo  {institution} {Rutgers University},\ \bibinfo {year} {1980})\BibitemShut {NoStop}%
\bibitem [{\citenamefont {Battaglia}\ \emph {et~al.}(2018)\citenamefont {Battaglia}, \citenamefont {Hamrick}, \citenamefont {Bapst} \emph {et~al.}}]{battaglia2018}%
  \BibitemOpen
  \bibfield  {author} {\bibinfo {author} {\bibfnamefont {P.~W.}\ \bibnamefont {Battaglia}}, \bibinfo {author} {\bibfnamefont {J.~B.}\ \bibnamefont {Hamrick}}, \bibinfo {author} {\bibfnamefont {V.}~\bibnamefont {Bapst}}, \emph {et~al.},\ }\href@noop {} {\bibinfo {title} {Relational inductive biases, deep learning, and graph networks}} (\bibinfo {year} {2018}),\ \Eprint {https://arxiv.org/abs/1806.01261} {arXiv:1806.01261 [cs.LG]} \BibitemShut {NoStop}%
\bibitem [{\citenamefont {Lecun}\ \emph {et~al.}(1998)\citenamefont {Lecun}, \citenamefont {Bottou}, \citenamefont {Bengio},\ and\ \citenamefont {Haffner}}]{lecun_gradient-based_1998}%
  \BibitemOpen
  \bibfield  {author} {\bibinfo {author} {\bibfnamefont {Y.}~\bibnamefont {Lecun}}, \bibinfo {author} {\bibfnamefont {L.}~\bibnamefont {Bottou}}, \bibinfo {author} {\bibfnamefont {Y.}~\bibnamefont {Bengio}},\ and\ \bibinfo {author} {\bibfnamefont {P.}~\bibnamefont {Haffner}},\ }\bibfield  {title} {\bibinfo {title} {Gradient-based learning applied to document recognition},\ }\href {https://doi.org/10.1109/5.726791} {\bibfield  {journal} {\bibinfo  {journal} {Proceedings of the IEEE}\ }\textbf {\bibinfo {volume} {86}},\ \bibinfo {pages} {2278} (\bibinfo {year} {1998})}\BibitemShut {NoStop}%
\bibitem [{\citenamefont {Cohen}\ and\ \citenamefont {Welling}(2016)}]{cohen2016}%
  \BibitemOpen
  \bibfield  {author} {\bibinfo {author} {\bibfnamefont {T.}~\bibnamefont {Cohen}}\ and\ \bibinfo {author} {\bibfnamefont {M.}~\bibnamefont {Welling}},\ }\bibfield  {title} {\bibinfo {title} {Group equivariant convolutional networks},\ }in\ \href@noop {} {\emph {\bibinfo {booktitle} {Proceedings of the 33rd International Conference on Machine Learning}}},\ \bibinfo {series} {Proceedings of Machine Learning Research}, Vol.~\bibinfo {volume} {48}\ (\bibinfo {year} {2016})\ pp.\ \bibinfo {pages} {2990--2999}\BibitemShut {NoStop}%
\bibitem [{\citenamefont {Zeiler}\ and\ \citenamefont {Fergus}(2014)}]{zeiler_visualizing_2013}%
  \BibitemOpen
  \bibfield  {author} {\bibinfo {author} {\bibfnamefont {M.~D.}\ \bibnamefont {Zeiler}}\ and\ \bibinfo {author} {\bibfnamefont {R.}~\bibnamefont {Fergus}},\ }\bibfield  {title} {\bibinfo {title} {Visualizing and understanding convolutional networks},\ }in\ \href@noop {} {\emph {\bibinfo {booktitle} {European conference on computer vision}}}\ (\bibinfo {year} {2014})\ pp.\ \bibinfo {pages} {818--833}\BibitemShut {NoStop}%
\bibitem [{\citenamefont {Vaswani}\ \emph {et~al.}(2017)\citenamefont {Vaswani}, \citenamefont {Shazeer}, \citenamefont {Parmar}, \citenamefont {Uszkoreit}, \citenamefont {Jones}, \citenamefont {Gomez}, \citenamefont {Kaiser},\ and\ \citenamefont {Polosukhin}}]{vaswani_attention_2023}%
  \BibitemOpen
  \bibfield  {author} {\bibinfo {author} {\bibfnamefont {A.}~\bibnamefont {Vaswani}}, \bibinfo {author} {\bibfnamefont {N.}~\bibnamefont {Shazeer}}, \bibinfo {author} {\bibfnamefont {N.}~\bibnamefont {Parmar}}, \bibinfo {author} {\bibfnamefont {J.}~\bibnamefont {Uszkoreit}}, \bibinfo {author} {\bibfnamefont {L.}~\bibnamefont {Jones}}, \bibinfo {author} {\bibfnamefont {A.~N.}\ \bibnamefont {Gomez}}, \bibinfo {author} {\bibfnamefont {{\L}.}~\bibnamefont {Kaiser}},\ and\ \bibinfo {author} {\bibfnamefont {I.}~\bibnamefont {Polosukhin}},\ }\bibfield  {title} {\bibinfo {title} {Attention is all you need},\ }\href@noop {} {\bibfield  {journal} {\bibinfo  {journal} {Advances in neural information processing systems}\ }\textbf {\bibinfo {volume} {30}} (\bibinfo {year} {2017})}\BibitemShut {NoStop}%
\bibitem [{\citenamefont {Maur{\'i}cio}\ \emph {et~al.}(2023)\citenamefont {Maur{\'i}cio}, \citenamefont {Domingues},\ and\ \citenamefont {Bernardino}}]{mauricioComparingVisionTransformers2023}%
  \BibitemOpen
  \bibfield  {author} {\bibinfo {author} {\bibfnamefont {J.}~\bibnamefont {Maur{\'i}cio}}, \bibinfo {author} {\bibfnamefont {I.}~\bibnamefont {Domingues}},\ and\ \bibinfo {author} {\bibfnamefont {J.}~\bibnamefont {Bernardino}},\ }\bibfield  {title} {\bibinfo {title} {Comparing {{Vision Transformers}} and {{Convolutional Neural Networks}} for {{Image Classification}}: {{A Literature Review}}},\ }\href {https://doi.org/10.3390/app13095521} {\bibfield  {journal} {\bibinfo  {journal} {Applied Sciences}\ }\textbf {\bibinfo {volume} {13}},\ \bibinfo {pages} {5521} (\bibinfo {year} {2023})}\BibitemShut {NoStop}%
\bibitem [{\citenamefont {Dosovitskiy}\ \emph {et~al.}(2021)\citenamefont {Dosovitskiy}, \citenamefont {Beyer}, \citenamefont {Kolesnikov}, \citenamefont {Weissenborn}, \citenamefont {Zhai}, \citenamefont {Unterthiner}, \citenamefont {Dehghani}, \citenamefont {Minderer}, \citenamefont {Heigold}, \citenamefont {Gelly}, \citenamefont {Uszkoreit},\ and\ \citenamefont {Houlsby}}]{dosovitskiy2021}%
  \BibitemOpen
  \bibfield  {author} {\bibinfo {author} {\bibfnamefont {A.}~\bibnamefont {Dosovitskiy}}, \bibinfo {author} {\bibfnamefont {L.}~\bibnamefont {Beyer}}, \bibinfo {author} {\bibfnamefont {A.}~\bibnamefont {Kolesnikov}}, \bibinfo {author} {\bibfnamefont {D.}~\bibnamefont {Weissenborn}}, \bibinfo {author} {\bibfnamefont {X.}~\bibnamefont {Zhai}}, \bibinfo {author} {\bibfnamefont {T.}~\bibnamefont {Unterthiner}}, \bibinfo {author} {\bibfnamefont {M.}~\bibnamefont {Dehghani}}, \bibinfo {author} {\bibfnamefont {M.}~\bibnamefont {Minderer}}, \bibinfo {author} {\bibfnamefont {G.}~\bibnamefont {Heigold}}, \bibinfo {author} {\bibfnamefont {S.}~\bibnamefont {Gelly}}, \bibinfo {author} {\bibfnamefont {J.}~\bibnamefont {Uszkoreit}},\ and\ \bibinfo {author} {\bibfnamefont {N.}~\bibnamefont {Houlsby}},\ }\bibfield  {title} {\bibinfo {title} {An image is worth 16x16 words: Transformers for image recognition at scale},\ }in\ \href@noop {} {\emph {\bibinfo {booktitle} {International Conference on Learning Representations
  (ICLR)}}}\ (\bibinfo {year} {2021})\BibitemShut {NoStop}%
\bibitem [{\citenamefont {Prince}(2023)}]{prince2023understanding}%
  \BibitemOpen
  \bibfield  {author} {\bibinfo {author} {\bibfnamefont {S.~J.}\ \bibnamefont {Prince}},\ }\href {http://udlbook.com} {\emph {\bibinfo {title} {Understanding Deep Learning}}}\ (\bibinfo  {publisher} {MIT Press},\ \bibinfo {year} {2023})\BibitemShut {NoStop}%
\bibitem [{\citenamefont {Elesedy}\ and\ \citenamefont {Zaidi}(2021)}]{elesedy2021}%
  \BibitemOpen
  \bibfield  {author} {\bibinfo {author} {\bibfnamefont {B.}~\bibnamefont {Elesedy}}\ and\ \bibinfo {author} {\bibfnamefont {S.}~\bibnamefont {Zaidi}},\ }\bibfield  {title} {\bibinfo {title} {Provably strict generalisation benefit for equivariant models},\ }in\ \href@noop {} {\emph {\bibinfo {booktitle} {Proceedings of the 38th International Conference on Machine Learning}}},\ \bibinfo {series} {Proceedings of Machine Learning Research}, Vol.\ \bibinfo {volume} {139}\ (\bibinfo {year} {2021})\ pp.\ \bibinfo {pages} {2959--2969}\BibitemShut {NoStop}%
\bibitem [{\citenamefont {Bronstein}\ \emph {et~al.}(2021)\citenamefont {Bronstein}, \citenamefont {Bruna}, \citenamefont {Cohen},\ and\ \citenamefont {Veli{\v{c}}kovi{\'c}}}]{bronstein2021}%
  \BibitemOpen
  \bibfield  {author} {\bibinfo {author} {\bibfnamefont {M.~M.}\ \bibnamefont {Bronstein}}, \bibinfo {author} {\bibfnamefont {J.}~\bibnamefont {Bruna}}, \bibinfo {author} {\bibfnamefont {T.}~\bibnamefont {Cohen}},\ and\ \bibinfo {author} {\bibfnamefont {P.}~\bibnamefont {Veli{\v{c}}kovi{\'c}}},\ }\href@noop {} {\bibinfo {title} {Geometric deep learning: Grids, groups, graphs, geodesics, and gauges}} (\bibinfo {year} {2021}),\ \Eprint {https://arxiv.org/abs/2104.13478} {arXiv:2104.13478 [cs.LG]} \BibitemShut {NoStop}%
\bibitem [{\citenamefont {Maier}\ \emph {et~al.}(2019)\citenamefont {Maier}, \citenamefont {Syben}, \citenamefont {Stimpel}, \citenamefont {W{\"u}rfl}, \citenamefont {Hoffmann}, \citenamefont {Schebesch}, \citenamefont {Fu}, \citenamefont {Mill}, \citenamefont {Kling},\ and\ \citenamefont {Christiansen}}]{maier2019}%
  \BibitemOpen
  \bibfield  {author} {\bibinfo {author} {\bibfnamefont {A.~K.}\ \bibnamefont {Maier}}, \bibinfo {author} {\bibfnamefont {C.}~\bibnamefont {Syben}}, \bibinfo {author} {\bibfnamefont {B.}~\bibnamefont {Stimpel}}, \bibinfo {author} {\bibfnamefont {T.}~\bibnamefont {W{\"u}rfl}}, \bibinfo {author} {\bibfnamefont {M.}~\bibnamefont {Hoffmann}}, \bibinfo {author} {\bibfnamefont {F.}~\bibnamefont {Schebesch}}, \bibinfo {author} {\bibfnamefont {W.}~\bibnamefont {Fu}}, \bibinfo {author} {\bibfnamefont {L.}~\bibnamefont {Mill}}, \bibinfo {author} {\bibfnamefont {L.}~\bibnamefont {Kling}},\ and\ \bibinfo {author} {\bibfnamefont {S.}~\bibnamefont {Christiansen}},\ }\bibfield  {title} {\bibinfo {title} {Learning with known operators reduces maximum error bounds},\ }\href@noop {} {\bibfield  {journal} {\bibinfo  {journal} {Nature Machine Intelligence}\ }\textbf {\bibinfo {volume} {1}},\ \bibinfo {pages} {373} (\bibinfo {year} {2019})}\BibitemShut {NoStop}%
\bibitem [{\citenamefont {Monga}\ \emph {et~al.}(2021)\citenamefont {Monga}, \citenamefont {Li},\ and\ \citenamefont {Eldar}}]{monga2021}%
  \BibitemOpen
  \bibfield  {author} {\bibinfo {author} {\bibfnamefont {V.}~\bibnamefont {Monga}}, \bibinfo {author} {\bibfnamefont {Y.}~\bibnamefont {Li}},\ and\ \bibinfo {author} {\bibfnamefont {Y.~C.}\ \bibnamefont {Eldar}},\ }\bibfield  {title} {\bibinfo {title} {Algorithm unrolling: Interpretable, efficient deep learning for signal and image processing},\ }\href@noop {} {\bibfield  {journal} {\bibinfo  {journal} {{IEEE Signal Processing Magazine}}\ }\textbf {\bibinfo {volume} {38}},\ \bibinfo {pages} {18} (\bibinfo {year} {2021})}\BibitemShut {NoStop}%
\bibitem [{\citenamefont {Karniadakis}\ \emph {et~al.}(2021)\citenamefont {Karniadakis}, \citenamefont {Kevrekidis}, \citenamefont {Lu}, \citenamefont {Perdikaris}, \citenamefont {Wang},\ and\ \citenamefont {Yang}}]{karniadakis2021}%
  \BibitemOpen
  \bibfield  {author} {\bibinfo {author} {\bibfnamefont {G.~E.}\ \bibnamefont {Karniadakis}}, \bibinfo {author} {\bibfnamefont {I.~G.}\ \bibnamefont {Kevrekidis}}, \bibinfo {author} {\bibfnamefont {L.}~\bibnamefont {Lu}}, \bibinfo {author} {\bibfnamefont {P.}~\bibnamefont {Perdikaris}}, \bibinfo {author} {\bibfnamefont {S.}~\bibnamefont {Wang}},\ and\ \bibinfo {author} {\bibfnamefont {L.}~\bibnamefont {Yang}},\ }\bibfield  {title} {\bibinfo {title} {Physics-informed machine learning},\ }\href@noop {} {\bibfield  {journal} {\bibinfo  {journal} {Nature Reviews Physics}\ }\textbf {\bibinfo {volume} {3}},\ \bibinfo {pages} {422} (\bibinfo {year} {2021})}\BibitemShut {NoStop}%
\bibitem [{\citenamefont {McClean}\ \emph {et~al.}(2018)\citenamefont {McClean}, \citenamefont {Boixo}, \citenamefont {Smelyanskiy}, \citenamefont {Babbush},\ and\ \citenamefont {Neven}}]{mcclean2018barren}%
  \BibitemOpen
  \bibfield  {author} {\bibinfo {author} {\bibfnamefont {J.~R.}\ \bibnamefont {McClean}}, \bibinfo {author} {\bibfnamefont {S.}~\bibnamefont {Boixo}}, \bibinfo {author} {\bibfnamefont {V.~N.}\ \bibnamefont {Smelyanskiy}}, \bibinfo {author} {\bibfnamefont {R.}~\bibnamefont {Babbush}},\ and\ \bibinfo {author} {\bibfnamefont {H.}~\bibnamefont {Neven}},\ }\bibfield  {title} {\bibinfo {title} {Barren plateaus in quantum neural network training landscapes},\ }\href@noop {} {\bibfield  {journal} {\bibinfo  {journal} {Nature communications}\ }\textbf {\bibinfo {volume} {9}},\ \bibinfo {pages} {4812} (\bibinfo {year} {2018})}\BibitemShut {NoStop}%
\bibitem [{\citenamefont {Larocca}\ \emph {et~al.}(2025)\citenamefont {Larocca}, \citenamefont {Thanasilp}, \citenamefont {Wang}, \citenamefont {Sharma}, \citenamefont {Biamonte}, \citenamefont {Coles}, \citenamefont {Cincio}, \citenamefont {McClean}, \citenamefont {Holmes},\ and\ \citenamefont {Cerezo}}]{larocca2025barren}%
  \BibitemOpen
  \bibfield  {author} {\bibinfo {author} {\bibfnamefont {M.}~\bibnamefont {Larocca}}, \bibinfo {author} {\bibfnamefont {S.}~\bibnamefont {Thanasilp}}, \bibinfo {author} {\bibfnamefont {S.}~\bibnamefont {Wang}}, \bibinfo {author} {\bibfnamefont {K.}~\bibnamefont {Sharma}}, \bibinfo {author} {\bibfnamefont {J.}~\bibnamefont {Biamonte}}, \bibinfo {author} {\bibfnamefont {P.~J.}\ \bibnamefont {Coles}}, \bibinfo {author} {\bibfnamefont {L.}~\bibnamefont {Cincio}}, \bibinfo {author} {\bibfnamefont {J.~R.}\ \bibnamefont {McClean}}, \bibinfo {author} {\bibfnamefont {Z.}~\bibnamefont {Holmes}},\ and\ \bibinfo {author} {\bibfnamefont {M.}~\bibnamefont {Cerezo}},\ }\bibfield  {title} {\bibinfo {title} {Barren plateaus in variational quantum computing},\ }\href@noop {} {\bibfield  {journal} {\bibinfo  {journal} {Nature Reviews Physics}\ }\textbf {\bibinfo {volume} {7}},\ \bibinfo {pages} {174} (\bibinfo {year} {2025})}\BibitemShut {NoStop}%
\bibitem [{\citenamefont {Larocca}\ \emph {et~al.}(2022)\citenamefont {Larocca}, \citenamefont {Sauvage}, \citenamefont {Sbahi}, \citenamefont {Verdon}, \citenamefont {Coles},\ and\ \citenamefont {Cerezo}}]{larocca2022group}%
  \BibitemOpen
  \bibfield  {author} {\bibinfo {author} {\bibfnamefont {M.}~\bibnamefont {Larocca}}, \bibinfo {author} {\bibfnamefont {F.}~\bibnamefont {Sauvage}}, \bibinfo {author} {\bibfnamefont {F.~M.}\ \bibnamefont {Sbahi}}, \bibinfo {author} {\bibfnamefont {G.}~\bibnamefont {Verdon}}, \bibinfo {author} {\bibfnamefont {P.~J.}\ \bibnamefont {Coles}},\ and\ \bibinfo {author} {\bibfnamefont {M.}~\bibnamefont {Cerezo}},\ }\bibfield  {title} {\bibinfo {title} {Group-invariant quantum machine learning},\ }\href@noop {} {\bibfield  {journal} {\bibinfo  {journal} {PRX Quantum}\ }\textbf {\bibinfo {volume} {3}},\ \bibinfo {pages} {030341} (\bibinfo {year} {2022})}\BibitemShut {NoStop}%
\bibitem [{\citenamefont {Meyer}\ \emph {et~al.}(2023)\citenamefont {Meyer}, \citenamefont {Mularski}, \citenamefont {Gil-Fuster}, \citenamefont {Mele}, \citenamefont {Arzani}, \citenamefont {Wilms},\ and\ \citenamefont {Eisert}}]{meyer2023}%
  \BibitemOpen
  \bibfield  {author} {\bibinfo {author} {\bibfnamefont {J.~J.}\ \bibnamefont {Meyer}}, \bibinfo {author} {\bibfnamefont {M.}~\bibnamefont {Mularski}}, \bibinfo {author} {\bibfnamefont {E.}~\bibnamefont {Gil-Fuster}}, \bibinfo {author} {\bibfnamefont {A.~A.}\ \bibnamefont {Mele}}, \bibinfo {author} {\bibfnamefont {F.}~\bibnamefont {Arzani}}, \bibinfo {author} {\bibfnamefont {A.}~\bibnamefont {Wilms}},\ and\ \bibinfo {author} {\bibfnamefont {J.}~\bibnamefont {Eisert}},\ }\bibfield  {title} {\bibinfo {title} {Exploiting symmetry in variational quantum machine learning},\ }\href@noop {} {\bibfield  {journal} {\bibinfo  {journal} {PRX Quantum}\ }\textbf {\bibinfo {volume} {4}},\ \bibinfo {pages} {010328} (\bibinfo {year} {2023})}\BibitemShut {NoStop}%
\bibitem [{\citenamefont {Nguyen}\ \emph {et~al.}(2024)\citenamefont {Nguyen}, \citenamefont {Schatzki}, \citenamefont {Braccia}, \citenamefont {Ragone}, \citenamefont {Coles}, \citenamefont {Sauvage}, \citenamefont {Larocca},\ and\ \citenamefont {Cerezo}}]{nguyen2024}%
  \BibitemOpen
  \bibfield  {author} {\bibinfo {author} {\bibfnamefont {Q.~T.}\ \bibnamefont {Nguyen}}, \bibinfo {author} {\bibfnamefont {L.}~\bibnamefont {Schatzki}}, \bibinfo {author} {\bibfnamefont {P.}~\bibnamefont {Braccia}}, \bibinfo {author} {\bibfnamefont {M.}~\bibnamefont {Ragone}}, \bibinfo {author} {\bibfnamefont {P.~J.}\ \bibnamefont {Coles}}, \bibinfo {author} {\bibfnamefont {F.}~\bibnamefont {Sauvage}}, \bibinfo {author} {\bibfnamefont {M.}~\bibnamefont {Larocca}},\ and\ \bibinfo {author} {\bibfnamefont {M.}~\bibnamefont {Cerezo}},\ }\bibfield  {title} {\bibinfo {title} {Theory for equivariant quantum neural networks},\ }\href@noop {} {\bibfield  {journal} {\bibinfo  {journal} {PRX Quantum}\ }\textbf {\bibinfo {volume} {5}},\ \bibinfo {pages} {020328} (\bibinfo {year} {2024})}\BibitemShut {NoStop}%
\bibitem [{\citenamefont {Schatzki}\ \emph {et~al.}(2024)\citenamefont {Schatzki}, \citenamefont {Larocca}, \citenamefont {Nguyen}, \citenamefont {Sauvage},\ and\ \citenamefont {Cerezo}}]{schatzki2024}%
  \BibitemOpen
  \bibfield  {author} {\bibinfo {author} {\bibfnamefont {L.}~\bibnamefont {Schatzki}}, \bibinfo {author} {\bibfnamefont {M.}~\bibnamefont {Larocca}}, \bibinfo {author} {\bibfnamefont {Q.~T.}\ \bibnamefont {Nguyen}}, \bibinfo {author} {\bibfnamefont {F.}~\bibnamefont {Sauvage}},\ and\ \bibinfo {author} {\bibfnamefont {M.}~\bibnamefont {Cerezo}},\ }\bibfield  {title} {\bibinfo {title} {Theoretical guarantees for permutation-equivariant quantum neural networks},\ }\href@noop {} {\bibfield  {journal} {\bibinfo  {journal} {npj Quantum Information}\ }\textbf {\bibinfo {volume} {10}},\ \bibinfo {pages} {12} (\bibinfo {year} {2024})}\BibitemShut {NoStop}%
\bibitem [{\citenamefont {Schuld}\ \emph {et~al.}(2021)\citenamefont {Schuld}, \citenamefont {Sweke},\ and\ \citenamefont {Meyer}}]{schuldEffectDataEncoding2021}%
  \BibitemOpen
  \bibfield  {author} {\bibinfo {author} {\bibfnamefont {M.}~\bibnamefont {Schuld}}, \bibinfo {author} {\bibfnamefont {R.}~\bibnamefont {Sweke}},\ and\ \bibinfo {author} {\bibfnamefont {J.~J.}\ \bibnamefont {Meyer}},\ }\bibfield  {title} {\bibinfo {title} {The effect of data encoding on the expressive power of variational quantum machine learning models},\ }\href@noop {} {\bibfield  {journal} {\bibinfo  {journal} {Physical Review A}\ }\textbf {\bibinfo {volume} {103}},\ \bibinfo {pages} {032430} (\bibinfo {year} {2021})}\BibitemShut {NoStop}%
\bibitem [{\citenamefont {{P{\'e}rez-Salinas}}\ \emph {et~al.}(2020)\citenamefont {{P{\'e}rez-Salinas}}, \citenamefont {{Cervera-Lierta}}, \citenamefont {{Gil-Fuster}},\ and\ \citenamefont {Latorre}}]{perez-salinasDataReuploadingUniversal2020}%
  \BibitemOpen
  \bibfield  {author} {\bibinfo {author} {\bibfnamefont {A.}~\bibnamefont {{P{\'e}rez-Salinas}}}, \bibinfo {author} {\bibfnamefont {A.}~\bibnamefont {{Cervera-Lierta}}}, \bibinfo {author} {\bibfnamefont {E.}~\bibnamefont {{Gil-Fuster}}},\ and\ \bibinfo {author} {\bibfnamefont {J.~I.}\ \bibnamefont {Latorre}},\ }\bibfield  {title} {\bibinfo {title} {Data re-uploading for a universal quantum classifier},\ }\href@noop {} {\bibfield  {journal} {\bibinfo  {journal} {Quantum}\ }\textbf {\bibinfo {volume} {4}},\ \bibinfo {pages} {226} (\bibinfo {year} {2020})}\BibitemShut {NoStop}%
\bibitem [{\citenamefont {Caro}\ \emph {et~al.}(2021)\citenamefont {Caro}, \citenamefont {Gil-Fuster}, \citenamefont {Meyer}, \citenamefont {Eisert},\ and\ \citenamefont {Sweke}}]{caro2021encoding}%
  \BibitemOpen
  \bibfield  {author} {\bibinfo {author} {\bibfnamefont {M.~C.}\ \bibnamefont {Caro}}, \bibinfo {author} {\bibfnamefont {E.}~\bibnamefont {Gil-Fuster}}, \bibinfo {author} {\bibfnamefont {J.~J.}\ \bibnamefont {Meyer}}, \bibinfo {author} {\bibfnamefont {J.}~\bibnamefont {Eisert}},\ and\ \bibinfo {author} {\bibfnamefont {R.}~\bibnamefont {Sweke}},\ }\bibfield  {title} {\bibinfo {title} {Encoding-dependent generalization bounds for parametrized quantum circuits},\ }\href@noop {} {\bibfield  {journal} {\bibinfo  {journal} {Quantum}\ }\textbf {\bibinfo {volume} {5}},\ \bibinfo {pages} {582} (\bibinfo {year} {2021})}\BibitemShut {NoStop}%
\bibitem [{\citenamefont {Shin}\ \emph {et~al.}(2023)\citenamefont {Shin}, \citenamefont {Teo},\ and\ \citenamefont {Jeong}}]{shin2023}%
  \BibitemOpen
  \bibfield  {author} {\bibinfo {author} {\bibfnamefont {S.}~\bibnamefont {Shin}}, \bibinfo {author} {\bibfnamefont {Y.~S.}\ \bibnamefont {Teo}},\ and\ \bibinfo {author} {\bibfnamefont {H.}~\bibnamefont {Jeong}},\ }\bibfield  {title} {\bibinfo {title} {Exponential data encoding for quantum supervised learning},\ }\href@noop {} {\bibfield  {journal} {\bibinfo  {journal} {Physical Review A}\ }\textbf {\bibinfo {volume} {107}},\ \bibinfo {pages} {012422} (\bibinfo {year} {2023})}\BibitemShut {NoStop}%
\bibitem [{\citenamefont {Jaderberg}\ \emph {et~al.}(2024)\citenamefont {Jaderberg}, \citenamefont {Gentile}, \citenamefont {Berrada}, \citenamefont {Shishenina},\ and\ \citenamefont {Elfving}}]{jaderberg2024}%
  \BibitemOpen
  \bibfield  {author} {\bibinfo {author} {\bibfnamefont {B.}~\bibnamefont {Jaderberg}}, \bibinfo {author} {\bibfnamefont {A.~A.}\ \bibnamefont {Gentile}}, \bibinfo {author} {\bibfnamefont {Y.~A.}\ \bibnamefont {Berrada}}, \bibinfo {author} {\bibfnamefont {E.}~\bibnamefont {Shishenina}},\ and\ \bibinfo {author} {\bibfnamefont {V.~E.}\ \bibnamefont {Elfving}},\ }\bibfield  {title} {\bibinfo {title} {Let quantum neural networks choose their own frequencies},\ }\href@noop {} {\bibfield  {journal} {\bibinfo  {journal} {Physical Review A}\ }\textbf {\bibinfo {volume} {109}},\ \bibinfo {pages} {042421} (\bibinfo {year} {2024})}\BibitemShut {NoStop}%
\bibitem [{\citenamefont {Barthe}\ and\ \citenamefont {{P{\'e}rez-Salinas}}(2024)}]{bartheGradientsFrequencyProfiles2024}%
  \BibitemOpen
  \bibfield  {author} {\bibinfo {author} {\bibfnamefont {A.}~\bibnamefont {Barthe}}\ and\ \bibinfo {author} {\bibfnamefont {A.}~\bibnamefont {{P{\'e}rez-Salinas}}},\ }\bibfield  {title} {\bibinfo {title} {Gradients and frequency profiles of quantum re-uploading models},\ }\href@noop {} {\bibfield  {journal} {\bibinfo  {journal} {Quantum}\ }\textbf {\bibinfo {volume} {8}},\ \bibinfo {pages} {1523} (\bibinfo {year} {2024})}\BibitemShut {NoStop}%
\bibitem [{\citenamefont {Duffy}\ and\ \citenamefont {Jastrzebski}(2025)}]{duffy2025}%
  \BibitemOpen
  \bibfield  {author} {\bibinfo {author} {\bibfnamefont {C.}~\bibnamefont {Duffy}}\ and\ \bibinfo {author} {\bibfnamefont {M.}~\bibnamefont {Jastrzebski}},\ }\href@noop {} {\bibinfo {title} {Spectral bias in variational quantum machine learning}} (\bibinfo {year} {2025}),\ \Eprint {https://arxiv.org/abs/2506.22555} {arXiv:2506.22555 [quant-ph]} \BibitemShut {NoStop}%
\bibitem [{\citenamefont {Wang}\ \emph {et~al.}(2026)\citenamefont {Wang}, \citenamefont {Parampalli},\ and\ \citenamefont {Myers}}]{qsm2026}%
  \BibitemOpen
  \bibfield  {author} {\bibinfo {author} {\bibfnamefont {P.}~\bibnamefont {Wang}}, \bibinfo {author} {\bibfnamefont {U.}~\bibnamefont {Parampalli}},\ and\ \bibinfo {author} {\bibfnamefont {C.~R.}\ \bibnamefont {Myers}},\ }\href@noop {} {\bibinfo {title} {Quantum spectral model: Data reuploading with input-conditioned frequency support}} (\bibinfo {year} {2026}),\ \Eprint {https://arxiv.org/abs/2607.22516} {arXiv:2607.22516 [quant-ph]} \BibitemShut {NoStop}%
\bibitem [{\citenamefont {Bagchi}\ and\ \citenamefont {Mitra}(2012)}]{bagchi2012nonuniform}%
  \BibitemOpen
  \bibfield  {author} {\bibinfo {author} {\bibfnamefont {S.}~\bibnamefont {Bagchi}}\ and\ \bibinfo {author} {\bibfnamefont {S.~K.}\ \bibnamefont {Mitra}},\ }\href@noop {} {\emph {\bibinfo {title} {The nonuniform discrete {Fourier} transform and its applications in signal processing}}},\ Vol.\ \bibinfo {volume} {463}\ (\bibinfo  {publisher} {Springer Science \& Business Media},\ \bibinfo {year} {2012})\BibitemShut {NoStop}%
\bibitem [{\citenamefont {Fessler}\ and\ \citenamefont {Sutton}(2003)}]{fessler2003nonuniform}%
  \BibitemOpen
  \bibfield  {author} {\bibinfo {author} {\bibfnamefont {J.~A.}\ \bibnamefont {Fessler}}\ and\ \bibinfo {author} {\bibfnamefont {B.~P.}\ \bibnamefont {Sutton}},\ }\bibfield  {title} {\bibinfo {title} {Nonuniform fast {F}ourier transforms using min-max interpolation},\ }\href@noop {} {\bibfield  {journal} {\bibinfo  {journal} {{IEEE transactions on signal processing}}\ }\textbf {\bibinfo {volume} {51}},\ \bibinfo {pages} {560} (\bibinfo {year} {2003})}\BibitemShut {NoStop}%
\bibitem [{\citenamefont {Greengard}\ and\ \citenamefont {Lee}(2004)}]{greengard2004}%
  \BibitemOpen
  \bibfield  {author} {\bibinfo {author} {\bibfnamefont {L.}~\bibnamefont {Greengard}}\ and\ \bibinfo {author} {\bibfnamefont {J.-Y.}\ \bibnamefont {Lee}},\ }\bibfield  {title} {\bibinfo {title} {Accelerating the nonuniform fast {Fourier} transform},\ }\href@noop {} {\bibfield  {journal} {\bibinfo  {journal} {SIAM Review}\ }\textbf {\bibinfo {volume} {46}},\ \bibinfo {pages} {443} (\bibinfo {year} {2004})}\BibitemShut {NoStop}%
\bibitem [{\citenamefont {Thompson}\ \emph {et~al.}(2017)\citenamefont {Thompson}, \citenamefont {Moran},\ and\ \citenamefont {Swenson}}]{thompson2017}%
  \BibitemOpen
  \bibfield  {author} {\bibinfo {author} {\bibfnamefont {A.~R.}\ \bibnamefont {Thompson}}, \bibinfo {author} {\bibfnamefont {J.~M.}\ \bibnamefont {Moran}},\ and\ \bibinfo {author} {\bibfnamefont {G.~W.}\ \bibnamefont {Swenson}, \bibfnamefont {Jr.}},\ }\href@noop {} {\emph {\bibinfo {title} {Interferometry and Synthesis in Radio Astronomy}}},\ \bibinfo {edition} {3rd}\ ed.\ (\bibinfo  {publisher} {Springer},\ \bibinfo {address} {Cham},\ \bibinfo {year} {2017})\BibitemShut {NoStop}%
\bibitem [{\citenamefont {Moreira}\ \emph {et~al.}(2013)\citenamefont {Moreira}, \citenamefont {Prats-Iraola}, \citenamefont {Younis}, \citenamefont {Krieger}, \citenamefont {Hajnsek},\ and\ \citenamefont {Papathanassiou}}]{moreira2013tutorial}%
  \BibitemOpen
  \bibfield  {author} {\bibinfo {author} {\bibfnamefont {A.}~\bibnamefont {Moreira}}, \bibinfo {author} {\bibfnamefont {P.}~\bibnamefont {Prats-Iraola}}, \bibinfo {author} {\bibfnamefont {M.}~\bibnamefont {Younis}}, \bibinfo {author} {\bibfnamefont {G.}~\bibnamefont {Krieger}}, \bibinfo {author} {\bibfnamefont {I.}~\bibnamefont {Hajnsek}},\ and\ \bibinfo {author} {\bibfnamefont {K.~P.}\ \bibnamefont {Papathanassiou}},\ }\bibfield  {title} {\bibinfo {title} {A tutorial on synthetic aperture radar},\ }\href@noop {} {\bibfield  {journal} {\bibinfo  {journal} {IEEE Geoscience and Remote Sensing Magazine}\ }\textbf {\bibinfo {volume} {1}},\ \bibinfo {pages} {6} (\bibinfo {year} {2013})}\BibitemShut {NoStop}%
\bibitem [{\citenamefont {Cumming}\ and\ \citenamefont {Wong}(2005)}]{cumming2005digital}%
  \BibitemOpen
  \bibfield  {author} {\bibinfo {author} {\bibfnamefont {I.~G.}\ \bibnamefont {Cumming}}\ and\ \bibinfo {author} {\bibfnamefont {F.~H.}\ \bibnamefont {Wong}},\ }\href@noop {} {\emph {\bibinfo {title} {Digital Processing of Synthetic Aperture Radar Data: Algorithms and Implementation}}}\ (\bibinfo  {publisher} {Artech House},\ \bibinfo {year} {2005})\BibitemShut {NoStop}%
\bibitem [{\citenamefont {Flores}\ \emph {et~al.}(2019)\citenamefont {Flores}, \citenamefont {Herndon}, \citenamefont {Thapa},\ and\ \citenamefont {Cherrington}}]{floresSyntheticApertureRadar2019}%
  \BibitemOpen
  \bibfield  {author} {\bibinfo {author} {\bibfnamefont {A.}~\bibnamefont {Flores}}, \bibinfo {author} {\bibfnamefont {K.}~\bibnamefont {Herndon}}, \bibinfo {author} {\bibfnamefont {R.}~\bibnamefont {Thapa}},\ and\ \bibinfo {author} {\bibfnamefont {E.}~\bibnamefont {Cherrington}},\ }\href@noop {} {\emph {\bibinfo {title} {Synthetic Aperture Radar (SAR) Handbook: Comprehensive Methodologies for Forest Monitoring and Biomass Estimation}}}\ (\bibinfo  {publisher} {NASA},\ \bibinfo {year} {2019})\BibitemShut {NoStop}%
\bibitem [{\citenamefont {Garc{\'i}a}\ \emph {et~al.}(2024)\citenamefont {Garc{\'i}a}, \citenamefont {Furano}, \citenamefont {Ghiglione}, \citenamefont {Zancan}, \citenamefont {Imbembo}, \citenamefont {Ilioudis}, \citenamefont {Clemente},\ and\ \citenamefont {Trucco}}]{garciaAdvancementsOnboardProcessing2024}%
  \BibitemOpen
  \bibfield  {author} {\bibinfo {author} {\bibfnamefont {L.~P.}\ \bibnamefont {Garc{\'i}a}}, \bibinfo {author} {\bibfnamefont {G.}~\bibnamefont {Furano}}, \bibinfo {author} {\bibfnamefont {M.}~\bibnamefont {Ghiglione}}, \bibinfo {author} {\bibfnamefont {V.}~\bibnamefont {Zancan}}, \bibinfo {author} {\bibfnamefont {E.}~\bibnamefont {Imbembo}}, \bibinfo {author} {\bibfnamefont {C.}~\bibnamefont {Ilioudis}}, \bibinfo {author} {\bibfnamefont {C.}~\bibnamefont {Clemente}},\ and\ \bibinfo {author} {\bibfnamefont {P.}~\bibnamefont {Trucco}},\ }\bibfield  {title} {\bibinfo {title} {Advancements in {{Onboard Processing}} of {{Synthetic Aperture Radar}} ({{SAR}}) {{Data}}: {{Enhancing Efficiency}} and {{Real-Time Capabilities}}},\ }\href {https://doi.org/10.1109/JSTARS.2024.3406155} {\bibfield  {journal} {\bibinfo  {journal} {IEEE Journal of Selected Topics in Applied Earth Observations and Remote Sensing}\ }\textbf {\bibinfo {volume} {17}},\ \bibinfo {pages} {16625} (\bibinfo {year} {2024})}\BibitemShut {NoStop}%
\bibitem [{\citenamefont {Gaburro}\ \emph {et~al.}(2026)\citenamefont {Gaburro}, \citenamefont {Healy}, \citenamefont {Naidu},\ and\ \citenamefont {Nolan}}]{gaburro2025}%
  \BibitemOpen
  \bibfield  {author} {\bibinfo {author} {\bibfnamefont {R.}~\bibnamefont {Gaburro}}, \bibinfo {author} {\bibfnamefont {P.}~\bibnamefont {Healy}}, \bibinfo {author} {\bibfnamefont {S.}~\bibnamefont {Naidu}},\ and\ \bibinfo {author} {\bibfnamefont {C.}~\bibnamefont {Nolan}},\ }\bibfield  {title} {\bibinfo {title} {A machine learning approach for image classification in synthetic aperture {RADAR}},\ }\href {https://doi.org/10.1093/imamat/hxag009} {\bibfield  {journal} {\bibinfo  {journal} {IMA Journal of Applied Mathematics}\ }\textbf {\bibinfo {volume} {91}},\ \bibinfo {pages} {15} (\bibinfo {year} {2026})},\ \bibinfo {note} {arXiv:2508.04234}\BibitemShut {NoStop}%
\bibitem [{\citenamefont {Naik}\ \emph {et~al.}(2024)\citenamefont {Naik}, \citenamefont {Vaughn}, \citenamefont {Uehara}, \citenamefont {Spanias},\ and\ \citenamefont {Jaskie}}]{naik2024}%
  \BibitemOpen
  \bibfield  {author} {\bibinfo {author} {\bibfnamefont {S.}~\bibnamefont {Naik}}, \bibinfo {author} {\bibfnamefont {N.}~\bibnamefont {Vaughn}}, \bibinfo {author} {\bibfnamefont {G.}~\bibnamefont {Uehara}}, \bibinfo {author} {\bibfnamefont {A.}~\bibnamefont {Spanias}},\ and\ \bibinfo {author} {\bibfnamefont {K.}~\bibnamefont {Jaskie}},\ }\bibfield  {title} {\bibinfo {title} {Quantum classification for synthetic aperture radar},\ }in\ \href@noop {} {\emph {\bibinfo {booktitle} {Automatic Target Recognition XXXIV}}},\ \bibinfo {series} {Proceedings of SPIE}, Vol.\ \bibinfo {volume} {13039},\ \bibinfo {editor} {edited by\ \bibinfo {editor} {\bibfnamefont {K.}~\bibnamefont {Chen}}, \bibinfo {editor} {\bibfnamefont {R.~I.}\ \bibnamefont {Hammoud}},\ and\ \bibinfo {editor} {\bibfnamefont {T.~L.}\ \bibnamefont {Overman}}}\ (\bibinfo {year} {2024})\ p.\ \bibinfo {pages} {130390H}\BibitemShut {NoStop}%
\bibitem [{\citenamefont {Tanner}\ \emph {et~al.}(2025)\citenamefont {Tanner}, \citenamefont {Davies}, \citenamefont {Elahi}, \citenamefont {Myers}, \citenamefont {Huynh}, \citenamefont {Liu}, \citenamefont {Reynolds},\ and\ \citenamefont {Wang}}]{maritimeqkm2025}%
  \BibitemOpen
  \bibfield  {author} {\bibinfo {author} {\bibfnamefont {J.}~\bibnamefont {Tanner}}, \bibinfo {author} {\bibfnamefont {N.}~\bibnamefont {Davies}}, \bibinfo {author} {\bibfnamefont {P.~J.}\ \bibnamefont {Elahi}}, \bibinfo {author} {\bibfnamefont {C.~R.}\ \bibnamefont {Myers}}, \bibinfo {author} {\bibfnamefont {D.}~\bibnamefont {Huynh}}, \bibinfo {author} {\bibfnamefont {W.}~\bibnamefont {Liu}}, \bibinfo {author} {\bibfnamefont {M.}~\bibnamefont {Reynolds}},\ and\ \bibinfo {author} {\bibfnamefont {J.}~\bibnamefont {Wang}},\ }\href@noop {} {\bibinfo {title} {Maritime object classification with {SAR} imagery using quantum kernel methods}} (\bibinfo {year} {2025}),\ \Eprint {https://arxiv.org/abs/2512.11367} {arXiv:2512.11367 [quant-ph]} \BibitemShut {NoStop}%
\bibitem [{\citenamefont {Giovagnoli}\ \emph {et~al.}(2025)\citenamefont {Giovagnoli}, \citenamefont {Huber},\ and\ \citenamefont {Krieger}}]{giovagnoli2025}%
  \BibitemOpen
  \bibfield  {author} {\bibinfo {author} {\bibfnamefont {A.}~\bibnamefont {Giovagnoli}}, \bibinfo {author} {\bibfnamefont {S.}~\bibnamefont {Huber}},\ and\ \bibinfo {author} {\bibfnamefont {G.}~\bibnamefont {Krieger}},\ }\href@noop {} {\bibinfo {title} {A quantum range-{Doppler} algorithm for synthetic aperture radar image formation}} (\bibinfo {year} {2025}),\ \Eprint {https://arxiv.org/abs/2504.20811} {arXiv:2504.20811 [quant-ph]} \BibitemShut {NoStop}%
\bibitem [{\citenamefont {Al~Salahat}\ \emph {et~al.}(2025)\citenamefont {Al~Salahat}, \citenamefont {El~Moussawi},\ and\ \citenamefont {Ghandour}}]{alsalahat2025}%
  \BibitemOpen
  \bibfield  {author} {\bibinfo {author} {\bibfnamefont {K.}~\bibnamefont {Al~Salahat}}, \bibinfo {author} {\bibfnamefont {M.}~\bibnamefont {El~Moussawi}},\ and\ \bibinfo {author} {\bibfnamefont {A.~J.}\ \bibnamefont {Ghandour}},\ }\href@noop {} {\bibinfo {title} {Quantum meets {SAR}: A novel range-{Doppler} algorithm for next-gen earth observation}} (\bibinfo {year} {2025}),\ \Eprint {https://arxiv.org/abs/2504.01832} {arXiv:2504.01832 [quant-ph]} \BibitemShut {NoStop}%
\bibitem [{\citenamefont {P{\'e}rez-Salinas}\ \emph {et~al.}(2025)\citenamefont {P{\'e}rez-Salinas}, \citenamefont {Yaghubi~Rad}, \citenamefont {Barthe},\ and\ \citenamefont {Dunjko}}]{perez-salinasUniversalApproximationContinuous2025}%
  \BibitemOpen
  \bibfield  {author} {\bibinfo {author} {\bibfnamefont {A.}~\bibnamefont {P{\'e}rez-Salinas}}, \bibinfo {author} {\bibfnamefont {M.}~\bibnamefont {Yaghubi~Rad}}, \bibinfo {author} {\bibfnamefont {A.}~\bibnamefont {Barthe}},\ and\ \bibinfo {author} {\bibfnamefont {V.}~\bibnamefont {Dunjko}},\ }\bibfield  {title} {\bibinfo {title} {Universal approximation of continuous functions with minimal quantum circuits},\ }\href@noop {} {\bibfield  {journal} {\bibinfo  {journal} {Physical Review Research}\ }\textbf {\bibinfo {volume} {7}},\ \bibinfo {pages} {043282} (\bibinfo {year} {2025})}\BibitemShut {NoStop}%
\bibitem [{\citenamefont {Manzano}\ \emph {et~al.}(2025)\citenamefont {Manzano}, \citenamefont {Dechant}, \citenamefont {Tura},\ and\ \citenamefont {Dunjko}}]{manzanoApproximationGeneralizationCapacities2025}%
  \BibitemOpen
  \bibfield  {author} {\bibinfo {author} {\bibfnamefont {A.}~\bibnamefont {Manzano}}, \bibinfo {author} {\bibfnamefont {D.}~\bibnamefont {Dechant}}, \bibinfo {author} {\bibfnamefont {J.}~\bibnamefont {Tura}},\ and\ \bibinfo {author} {\bibfnamefont {V.}~\bibnamefont {Dunjko}},\ }\bibfield  {title} {\bibinfo {title} {Approximation and {{Generalization Capacities}} of {{Parametrized Quantum Circuits}} for {{Functions}} in {{Sobolev Spaces}}},\ }\href@noop {} {\bibfield  {journal} {\bibinfo  {journal} {Quantum}\ }\textbf {\bibinfo {volume} {9}},\ \bibinfo {pages} {1658} (\bibinfo {year} {2025})}\BibitemShut {NoStop}%
\bibitem [{\citenamefont {Nikias}\ and\ \citenamefont {Mendel}(1993)}]{nikias1993signal}%
  \BibitemOpen
  \bibfield  {author} {\bibinfo {author} {\bibfnamefont {C.~L.}\ \bibnamefont {Nikias}}\ and\ \bibinfo {author} {\bibfnamefont {J.~M.}\ \bibnamefont {Mendel}},\ }\bibfield  {title} {\bibinfo {title} {Signal processing with higher-order spectra},\ }\href@noop {} {\bibfield  {journal} {\bibinfo  {journal} {{IEEE Signal processing magazine}}\ }\textbf {\bibinfo {volume} {10}},\ \bibinfo {pages} {10} (\bibinfo {year} {1993})}\BibitemShut {NoStop}%
\bibitem [{\citenamefont {Boashash}\ \emph {et~al.}(2015)\citenamefont {Boashash}, \citenamefont {Khan},\ and\ \citenamefont {Ben-Jabeur}}]{boashash2015time}%
  \BibitemOpen
  \bibfield  {author} {\bibinfo {author} {\bibfnamefont {B.}~\bibnamefont {Boashash}}, \bibinfo {author} {\bibfnamefont {N.~A.}\ \bibnamefont {Khan}},\ and\ \bibinfo {author} {\bibfnamefont {T.}~\bibnamefont {Ben-Jabeur}},\ }\bibfield  {title} {\bibinfo {title} {Time--frequency features for pattern recognition using high-resolution {TFDs}: A tutorial review},\ }\href@noop {} {\bibfield  {journal} {\bibinfo  {journal} {Digital Signal Processing}\ }\textbf {\bibinfo {volume} {40}},\ \bibinfo {pages} {1} (\bibinfo {year} {2015})}\BibitemShut {NoStop}%
\bibitem [{\citenamefont {Curlander}\ and\ \citenamefont {McDonough}(1991)}]{curlander1991synthetic}%
  \BibitemOpen
  \bibfield  {author} {\bibinfo {author} {\bibfnamefont {J.~C.}\ \bibnamefont {Curlander}}\ and\ \bibinfo {author} {\bibfnamefont {R.~N.}\ \bibnamefont {McDonough}},\ }\href@noop {} {\emph {\bibinfo {title} {Synthetic Aperture Radar: Systems and Signal Processing}}}\ (\bibinfo  {publisher} {Wiley},\ \bibinfo {year} {1991})\BibitemShut {NoStop}%
\bibitem [{\citenamefont {Bamler}\ and\ \citenamefont {Hartl}(1998)}]{bamlerSyntheticApertureRadar1998}%
  \BibitemOpen
  \bibfield  {author} {\bibinfo {author} {\bibfnamefont {R.}~\bibnamefont {Bamler}}\ and\ \bibinfo {author} {\bibfnamefont {P.}~\bibnamefont {Hartl}},\ }\bibfield  {title} {\bibinfo {title} {Synthetic aperture radar interferometry},\ }\href@noop {} {\bibfield  {journal} {\bibinfo  {journal} {Inverse Problems}\ }\textbf {\bibinfo {volume} {14}},\ \bibinfo {pages} {R1} (\bibinfo {year} {1998})}\BibitemShut {NoStop}%
\bibitem [{\citenamefont {Cossu}\ \emph {et~al.}(2009)\citenamefont {Cossu}, \citenamefont {Schoepfer}, \citenamefont {Bally},\ and\ \citenamefont {Fusco}}]{cossuRealtimeSARbasedProcessing2009}%
  \BibitemOpen
  \bibfield  {author} {\bibinfo {author} {\bibfnamefont {R.}~\bibnamefont {Cossu}}, \bibinfo {author} {\bibfnamefont {E.}~\bibnamefont {Schoepfer}}, \bibinfo {author} {\bibfnamefont {P.}~\bibnamefont {Bally}},\ and\ \bibinfo {author} {\bibfnamefont {L.}~\bibnamefont {Fusco}},\ }\bibfield  {title} {\bibinfo {title} {Near real-time {{SAR-based}} processing to support flood monitoring},\ }\href {https://doi.org/10.1007/s11554-009-0114-4} {\bibfield  {journal} {\bibinfo  {journal} {Journal of Real-Time Image Processing}\ }\textbf {\bibinfo {volume} {4}},\ \bibinfo {pages} {205} (\bibinfo {year} {2009})}\BibitemShut {NoStop}%
\bibitem [{\citenamefont {Rudin}(1976)}]{rudin_principles_1976}%
  \BibitemOpen
  \bibfield  {author} {\bibinfo {author} {\bibfnamefont {W.}~\bibnamefont {Rudin}},\ }\href@noop {} {\emph {\bibinfo {title} {Principles of Mathematical Analysis}}},\ \bibinfo {edition} {3rd}\ ed.\ (\bibinfo  {publisher} {McGraw--Hill},\ \bibinfo {year} {1976})\BibitemShut {NoStop}%
\bibitem [{\citenamefont {Oppenheim}\ and\ \citenamefont {Schafer}(2010)}]{oppenheim1999discrete}%
  \BibitemOpen
  \bibfield  {author} {\bibinfo {author} {\bibfnamefont {A.~V.}\ \bibnamefont {Oppenheim}}\ and\ \bibinfo {author} {\bibfnamefont {R.~W.}\ \bibnamefont {Schafer}},\ }\href@noop {} {\emph {\bibinfo {title} {Discrete-Time Signal Processing}}},\ \bibinfo {edition} {3rd}\ ed.\ (\bibinfo  {publisher} {Pearson},\ \bibinfo {year} {2010})\BibitemShut {NoStop}%
\bibitem [{\citenamefont {Cooley}\ and\ \citenamefont {Tukey}(1965)}]{cooley1965algorithm}%
  \BibitemOpen
  \bibfield  {author} {\bibinfo {author} {\bibfnamefont {J.~W.}\ \bibnamefont {Cooley}}\ and\ \bibinfo {author} {\bibfnamefont {J.~W.}\ \bibnamefont {Tukey}},\ }\bibfield  {title} {\bibinfo {title} {An algorithm for the machine calculation of complex {Fourier} series},\ }\href@noop {} {\bibfield  {journal} {\bibinfo  {journal} {Mathematics of computation}\ }\textbf {\bibinfo {volume} {19}},\ \bibinfo {pages} {297} (\bibinfo {year} {1965})}\BibitemShut {NoStop}%
\bibitem [{\citenamefont {Briggs}\ and\ \citenamefont {Henson}(1995)}]{briggs1995dft}%
  \BibitemOpen
  \bibfield  {author} {\bibinfo {author} {\bibfnamefont {W.~L.}\ \bibnamefont {Briggs}}\ and\ \bibinfo {author} {\bibfnamefont {V.~E.}\ \bibnamefont {Henson}},\ }\href@noop {} {\emph {\bibinfo {title} {The DFT: an owner's manual for the discrete {Fourier} transform}}}\ (\bibinfo  {publisher} {SIAM},\ \bibinfo {year} {1995})\BibitemShut {NoStop}%
\bibitem [{\citenamefont {Dutt}\ and\ \citenamefont {Rokhlin}(1993)}]{dutt1993fast}%
  \BibitemOpen
  \bibfield  {author} {\bibinfo {author} {\bibfnamefont {A.}~\bibnamefont {Dutt}}\ and\ \bibinfo {author} {\bibfnamefont {V.}~\bibnamefont {Rokhlin}},\ }\bibfield  {title} {\bibinfo {title} {Fast {F}ourier transforms for nonequispaced data},\ }\href@noop {} {\bibfield  {journal} {\bibinfo  {journal} {SIAM Journal on Scientific computing}\ }\textbf {\bibinfo {volume} {14}},\ \bibinfo {pages} {1368} (\bibinfo {year} {1993})}\BibitemShut {NoStop}%
\end{thebibliography}%

\setcounter{figure}{0}
\renewcommand{\thefigure}{A\arabic{figure}}
\renewcommand{\theHfigure}{appendix.\arabic{figure}}

\setcounter{table}{0}
\renewcommand{\thetable}{A\arabic{table}}

\clearpage

\phantom{}


\appendix

\subsection*{Notation}
\begin{table}[h]
\centering
\footnotesize
\renewcommand{\arraystretch}{1.05}
\setlength{\tabcolsep}{3pt}
\caption{\textbf{Key notation.}
Superscripts $(x)$ and $(f)$ distinguish signal and query quantities;
a star identifies task-prescribed or target quantities.}
\label{tab:notation}
\begin{tabularx}{\columnwidth}{@{}p{0.34\columnwidth}X@{}}
\toprule
Symbol & Meaning \\
\midrule

$\boldsymbol x$
& Sample-dependent signal; represented by real features
when supplied to the circuit. \\

$\boldsymbol f$, $\boldsymbol f^\star$
& Query input and its task-prescribed value,
shared across samples. \\

$\boldsymbol\theta$
& Trainable circuit parameters. \\

$M$, $N$, $n_q$
& Signal-feature dimension, query dimension,
and number of qubits. \\

$L$
& Total encoding layers; alternating DDR uses
$L/2$ paired signal--query blocks. \\

$v(\boldsymbol x,\boldsymbol f;\boldsymbol\theta)$
& Circuit output: the expectation value of
the observable $O$. \\

$E_\ell^{(x)}$, $E_\ell^{(f)}$
& Signal- and query-encoding gates in layer $\ell$. \\

$\mathcal E_x$, $\mathcal E_f$
& Signal- and query-encoding families. \\

$\Omega^{(x)}$, $\Omega^{(f)}$
& Encoding-generated Fourier modes in the
signal and query variables. \\

$\mathbf C(\boldsymbol\theta)$
& Joint Fourier coefficient matrix, indexed
by signal and query modes. \\

$\boldsymbol e(\boldsymbol x)$,
$\boldsymbol w(\boldsymbol f)$
& Vectors of signal and query phases,
$\exp(i\boldsymbol\omega^{(x)}\cdot\boldsymbol x)$
and $\exp(i\boldsymbol\omega^{(f)}\cdot\boldsymbol f)$. \\

$\boldsymbol a(\boldsymbol\theta,\boldsymbol f)$
& Signal coefficient vector at fixed query:
$\boldsymbol a=\mathbf C\boldsymbol w$. \\

$\mathcal H_{\mathcal E_f}(\boldsymbol f)$
& Representable functions of the signal
at fixed query encoding and value. \\

$\mu$, $\mathbb P_{\mathcal E_f,\boldsymbol f}$
& Parameter distribution and the function
prior it induces. \\

$\Delta$, $S$, $D_\Sigma$
& Diagnostics of fixed-parameter shape change,
nonseparability, and function-prior change. \\

$\boldsymbol p_{\boldsymbol d}$,
$\boldsymbol f_{\boldsymbol k}$
& Sampling position and MNUDFT
evaluation-frequency vector. \\

$D$, $K$, $n$
& Numbers of signal samples and Fourier
evaluations, and their spatial dimension;
$N=Kn$. \\

$X_{\boldsymbol k}
(\boldsymbol x,\boldsymbol f_{\boldsymbol k})$
& MNUDFT coefficient at evaluation
frequency $\boldsymbol f_{\boldsymbol k}$. \\

$s^\star$, $G$
& Target score and the map producing it
from prescribed MNUDFT coefficients. \\

$\Omega_\star^{(f)}$
& Target Fourier modes in the query variable,
generated by the spectral task. \\

$F_{\boldsymbol\omega}$,
$F^\star_{\boldsymbol\nu}$
& Signal-dependent coefficient functions
in the circuit and target query expansions. \\

$\varepsilon_{\mathrm{ph}}$,
$\varepsilon_{\mathrm{coef}}$
& Phase- and coefficient-mismatch
contributions to the score-error bound. \\

$N_{\mathrm{train}}$
& Number of labeled training examples. \\

$f^\star_{\mathrm{geom}}$
& Scalar query derived from acquisition
geometry, specifying the leading azimuth--range
phase coupling in the SAR-like task. \\

\bottomrule
\end{tabularx}
\end{table}

\subsection*{Implementation}
\label{appx:qml-details}

Circuits are evaluated using noiseless statevector simulation.
Fixed random seeds control data generation, parameter initialization, and statistical resampling.
Within each learning comparison, models share data splits and paired parameter initializations and use matched optimization settings and training budgets to isolate the effect of query alignment.
Experiment-specific settings are given in the appendices; software, differentiation methods, and reproducible implementations are documented in the accompanying GitHub repository.

\section{Details on query-controlled inductive bias}

\subsection{Single data reuploading circuits}\label{appx:simple-pqc}
In the following, we derive the expectation value of a data reuploading parameterized quantum circuit using the method discussed in Ref.~\cite{schuldEffectDataEncoding2021}.
We consider parameterized quantum circuits~\cite{benedettiParameterizedQuantumCircuits2019,schuldEffectDataEncoding2021,perez-salinasDataReuploadingUniversal2020}, where the model output is given by the expectation value of an observable:
\begin{equation}
    v(\boldsymbol{x};\boldsymbol{\theta}) = \bra{0} U^\dagger(\boldsymbol{x};\boldsymbol{\theta}) O U(\boldsymbol{x};\boldsymbol{\theta}) \ket{0}, \label{eq:circ-exp-appx}
\end{equation}
with input vector $\boldsymbol{x} \in \mathbb{R}^M$. 
The circuit, which is illustrated in Fig.~\ref{fig:encoding1}, consists of $L$ layers with alternating data encoding and parameterized unitaries, 
\begin{equation}
    U(\boldsymbol{x};\boldsymbol{\theta}) =  V_{L}(\boldsymbol{\theta}) E_{L}(\boldsymbol{x}) \cdots V_1(\boldsymbol{\theta}) E_1(\boldsymbol{x}) , \label{eq:data-reupl-appx}
\end{equation}  
where $V_\ell(\boldsymbol{\theta})$ are parameterized unitaries and
$E_\ell(\boldsymbol{x})$ encodes the classical input as:
\begin{equation}
    E_\ell(\boldsymbol{x}) = \exp\left(-i \sum_{m=1}^{M} x_m  H_{\ell,m}\right),
\end{equation}
with $\{H_{\ell,m}\}$ denoting data-encoding Hamiltonians (e.g., Pauli operators).
Repeating $E_\ell(\boldsymbol{x})$ across $L$ layers is called data reuploading~\cite{perez-salinasDataReuploadingUniversal2020}.
We diagonalize the encoding Hamiltonians jointly so that each $H_{\ell,m}$ becomes diagonal: 
$\bar{H}_{\ell,m} = \mathrm{diag}(\lambda_{\ell,m,1}, \ldots, \lambda_{\ell,m,b})$, where $b=2^{n_q}$ denotes the dimension of the Hilbert space of the $n_q$ qubits on which the Hamiltonians act.
After absorbing any change-of-basis unitaries into the parameterized layers $V_{\ell}(\boldsymbol{\theta})$, we can assume, without loss of generality, that:
\begin{equation}
    E_\ell(\boldsymbol{x}) = \exp\left(-i \sum_{m=1}^{M} x_m    \bar{H}_{\ell,m} \right),
\end{equation}
where $\bar{H}_{\ell,m}$ is a diagonal matrix of eigenvalues for $H_{\ell,m}$.
The output amplitude at index $s$ of the quantum state can be written as:
\begin{align}
    \bra{s}U(\boldsymbol{x};\boldsymbol{\theta})\ket{0} &= \sum_{\boldsymbol{j} } \alpha_{s, \boldsymbol{j}}(\boldsymbol{\theta})  \exp\left(-i \sum_{m=1}^{M} x_m  \Lambda_{\boldsymbol{j},m}\right) \nonumber \\
    &=\sum_{\boldsymbol{j} } \alpha_{s, \boldsymbol{j}}(\boldsymbol{\theta})  \exp\left(-i \boldsymbol{x}  \cdot \boldsymbol{\Lambda}_{\boldsymbol{j}}\right) \label{eq:single-amplitude}
\end{align}
where $[b]:=\{1,\ldots,b\}$ and $\boldsymbol{j}=(j_1,\ldots,j_L)\in[b]^L$ indexes layer-wise eigenvalue paths.
Furthermore, $\Lambda_{\boldsymbol{j},m} = \sum_{\ell=1}^{L} \lambda_{\ell,m, j_\ell}$ is the accumulated eigenvalue in direction $m$, and $\alpha_{s,\boldsymbol{j}}(\boldsymbol{\theta})$ summarizes the contribution from parameterized unitaries and absorbed terms from the diagonalization of $H_{\ell,m}$.
The full expression for the expectation value (with observable $O$ and the Hermitian conjugate of $U(\boldsymbol{x};\boldsymbol{\theta})$) in Eq.~\eqref{eq:circ-exp} is then:
\begin{equation}
    v(\boldsymbol{x};\boldsymbol{\theta}) = \sum_{\boldsymbol{j}, \boldsymbol{h} \in [b]^L} \alpha_{\boldsymbol{j}, \boldsymbol{h}}(\boldsymbol{\theta})  \exp\left(i  \boldsymbol{\omega}_{\boldsymbol{j},\boldsymbol{h}}  \cdot  \boldsymbol{x} \right),
\end{equation}
 where $\boldsymbol{\omega}_{\boldsymbol{j},\boldsymbol{h}} = \boldsymbol{\Lambda}_{\boldsymbol{h}} - \boldsymbol{\Lambda}_{\boldsymbol{j}} \in \mathbb{R}^M$ denotes a frequency vector and $\alpha_{\boldsymbol{j},\boldsymbol{h}}(\boldsymbol{\theta})$ absorbs the contributions of parameterized unitaries and the observable.
By grouping terms with identical frequencies $\boldsymbol{\omega} \in \Omega$, the model output takes the form:
\begin{equation}
v(\boldsymbol{x};\boldsymbol{\theta}) = \sum_{\boldsymbol{\omega} \in \Omega} c_{\boldsymbol{\omega}}(\boldsymbol{\theta})   \exp\left(i \boldsymbol{\omega} \cdot \boldsymbol{x}\right), \label{eq:pqcGTP-appx}
\end{equation}
where $\Omega$ denotes the \textit{frequency spectrum} of the model, i.e., the set of all frequency vectors $\boldsymbol{\omega} \in \mathbb{R}^M$ that can appear in the expansion, $\{c_{\boldsymbol{\omega}}(\boldsymbol{\theta})\}$ are the
complex-valued coefficients determined by the circuit parameters, and
$\exp(i\boldsymbol{\omega}\cdot\boldsymbol{x})$ are the corresponding
basis functions.
This expression represents the expectation value $v(\boldsymbol{x};\boldsymbol{\theta})$ as a generalized trigonometric polynomial (GTP) in the input features $\boldsymbol{x}$~\cite{schuldEffectDataEncoding2021,manzanoApproximationGeneralizationCapacities2025}. 
The set $\Omega$ is fully determined by the eigenvalue spectra of the data-encoding Hamiltonians ${H_{\ell,m}}$: 
\begin{equation}
    \Omega = \left\{ \boldsymbol{\Lambda}_{\boldsymbol{h}} - \boldsymbol{\Lambda}_{\boldsymbol{j}} \middle| \boldsymbol{j}, \boldsymbol{h} \in [b]^L \right\}. \label{eq:spectrum-appx}
\end{equation}

Each frequency $\boldsymbol{\omega} \in \Omega$ arises as a sum of eigenvalue differences across circuit layers. 
In contrast, the parameterized unitaries (and any absorbed basis changes from diagonalization) affect only the coefficients $c_{\boldsymbol{\omega}}(\boldsymbol{\theta})$, thereby shaping the relative weight and phase of each spectral component.

\subsection{Double data reuploading circuits}
\label{appx:expanding-data-rep}

We now study a natural extension of the expectation value derivation to parameterized quantum circuits featuring two inputs, $\boldsymbol{x} \in \mathbb{R}^M$ and $\boldsymbol{f} \in \mathbb{R}^N$.
This setting is motivated by the structure of the MNUDFT in Eq.~\eqref{eq:multi-nudft}, which depends jointly on signal values and query frequencies.
In order to mirror this separation, we encode inputs in alternating layers, a scheme we denote as double data reuploading, illustrated in Fig.~\ref{fig:encoding2}.
Specifically, let the circuit consist of an even number $L$ of layers, where each odd-numbered layer applies a data-encoding unitary dependent on $\boldsymbol{x}$, and each even-numbered layer applies a data-encoding unitary dependent on $\boldsymbol{f}$.
Between each pair of data-encoding layers, an arbitrary parameterized unitary is applied. 
The full circuit takes the form:
\begin{equation}
    U(\boldsymbol{x}, \boldsymbol{f};\boldsymbol{\theta}) = \prod_{\ell=1}^{L/2} \left[ 
    V^{(x)}_{\ell}(\boldsymbol{\theta})
    E_\ell^{(x)}(\boldsymbol{x})
    V^{(f)}_{\ell}(\boldsymbol{\theta})
    E_\ell^{(f)}(\boldsymbol{f})       \right], \label{eq:unitaryDouble}
\end{equation}
where the signal encodings are
\begin{equation}
E^{(x)}_\ell(\bm x)=\exp\Bigl(-i\sum_{m=1}^{M}x_mH^{(x)}_{\ell,m}\Bigr),
\end{equation}
and the query encodings $E^{(f)}_\ell(\bm f)$ are defined analogously with generators $H^{(f)}_{\ell,n}$, $n=1,\dots,N$. 
We assume that the generators within each encoding layer commute pairwise, so that they admit a joint eigenbasis; commuting Pauli strings are a common choice. 
The basis change may differ between layers but is input independent and enters the Fourier coefficients together with the trainable blocks~\cite{schuldEffectDataEncoding2021}.
If both inputs use the same generators and $\bm f=\bm x$, the circuit reduces to a single data reuploading circuit of the form of Eq.~\eqref{eq:data-reupl}.

Again, we are interested in computing the expectation value of an observable $O$ with respect to the output of this PQC:
\begin{equation}
    v(\boldsymbol{x}, \boldsymbol{f};\boldsymbol{\theta}) = \bra{0} U^\dagger(\boldsymbol{x}, \boldsymbol{f};\boldsymbol{\theta}) O U(\boldsymbol{x}, \boldsymbol{f};\boldsymbol{\theta}) \ket{0}.
\end{equation}

To this end, we compute the output amplitude at index $s$ of the quantum state, following the approach of Eq.~\eqref{eq:single-amplitude}:

\begin{align}
    \bra{s}U(\boldsymbol{x},\boldsymbol{f};\boldsymbol{\theta})\ket{0} &= \sum_{\boldsymbol{j}^{(x)}, \boldsymbol{j}^{(f)} } \alpha_{s, \boldsymbol{j}^{(x)}, \boldsymbol{j}^{(f)}} (\boldsymbol{\theta}) \nonumber \\
      &\cdot\exp\left(-i \left[\boldsymbol{x} \cdot \boldsymbol{\Lambda}^{(x)}_{\boldsymbol{j}^{(x)}} + \boldsymbol{f}  \cdot \boldsymbol{\Lambda}^{(f)}_{\boldsymbol{j}^{(f)}}\right]\right) 
\end{align}
where $\boldsymbol{j}^{(\cdot)} = (j^{(\cdot)}_1, \dots, j^{(\cdot)}_{L/2})\in [b]^{L/2}$ denotes the set of any $L/2$ integers between $1$ and $b$ and indexes eigenvalue sequences.
Furthermore, $\boldsymbol{\Lambda}^{(\cdot)}_{\boldsymbol{j}^{(\cdot)}}$ is composed of $\Lambda_{\boldsymbol{j}^{(\cdot)},m}^{(\cdot)} = \sum_{\ell=1}^{L/2} \lambda^{(\cdot)}_{\ell,m,j_\ell^{(\cdot)}}$, which are the accumulated eigenvalues in direction $m$.
Finally, $\alpha_{s,\boldsymbol{j}^{(x)},\boldsymbol{j}^{(f)}}(\boldsymbol{\theta})$ summarizes the contributions from parameterized unitaries and absorbed terms from the diagonalization of the data encoding.
The full expression for the expectation value of the circuit is then:

\begin{align}
    v(\boldsymbol{x}, \boldsymbol{f};\boldsymbol{\theta}) &= \sum_{\boldsymbol{\omega}^{(x)}, \boldsymbol{\omega}^{(f)}} c_{\boldsymbol{\omega}^{(x)}, \boldsymbol{\omega}^{(f)}}(\boldsymbol{\theta}) \nonumber \\
    &\cdot \exp \left({i \left[\boldsymbol{\omega}^{(x)} \cdot \boldsymbol{x} + \boldsymbol{\omega}^{(f)} \cdot \boldsymbol{f}\right]}\right),
\end{align}
for some set of frequencies $\boldsymbol{\omega}^{(x)} \in \Omega^{(x)} \subset \mathbb{R}^{M}$, $\boldsymbol{\omega}^{(f)} \in \Omega^{(f)} \subset \mathbb{R}^{N}$, and complex coefficients $ c_{\boldsymbol{\omega}^{(x)}, \boldsymbol{\omega}^{(f)}}(\boldsymbol{\theta})$.

Finally, grouping terms with the same $\boldsymbol{\omega}^{(f)}$ frequency yields
\begin{equation}
    v(\boldsymbol{x}, \boldsymbol{f};\boldsymbol{\theta}) 
    = \sum_{\boldsymbol{\omega}^{(f)}} 
    F_{\boldsymbol{\omega}^{(f)}}(\boldsymbol{x};\boldsymbol{\theta})  
    \exp \left(i \boldsymbol{\omega}^{(f)} \cdot \boldsymbol{f}\right),
    \label{eq:finalPqc-appx}
\end{equation}
with the parameterized coefficient function 
\begin{equation}
    F_{\boldsymbol{\omega}^{(f)}}(\boldsymbol{x};\boldsymbol{\theta}) 
    = \sum_{\boldsymbol{\omega}^{(x)}} 
    c_{\boldsymbol{\omega}^{(x)}, \boldsymbol{\omega}^{(f)}} (\boldsymbol{\theta})  
    \exp \left(i \boldsymbol{\omega}^{(x)} \cdot \boldsymbol{x}\right).
\end{equation}
Equation~\eqref{eq:finalPqc-appx} shows that the expectation value takes the form of a GTP in $\boldsymbol{f}$, with coefficients that are themselves GTPs in $\boldsymbol{x}$.

\subsection{Proofs of propositions}

\subsubsection{Proof of spectral separation}\label{appx:proof-spectral-sep}

\begin{proof}[Proof of Proposition~\ref{thm:frequency-independence}]

Let $\mathcal L_x$ and $\mathcal L_f$ denote the disjoint sets of signal- and query-encoding layers. The signal and query frequency sets are
\begin{equation}
\Omega^{(x)}=\bigl\{\Lambda^{(x)}_{\bm h}-\Lambda^{(x)}_{\bm j}:\ \bm h,\bm j\in[b]^{|\mathcal L_x|}\bigr\}, \label{eq:appx-signal-frequencies}
\end{equation}
and analogously for $\Omega^{(f)}$. Because $\mathcal L_x$ and $\mathcal L_f$
are disjoint, any choice of signal path indices can be combined with
any choice of query path indices to form a full path. 
The same holds
independently for the second path in a path pair. 
Every pair in
$\Omega^{(x)}\times\Omega^{(f)}$ therefore occurs as a formal
difference of full path eigenvalues, and every such full-path
difference belongs to this product. 
This proves
Eq.~\eqref{eq:factorize}.

For fixed $\boldsymbol f$, grouping
Eq.~\eqref{eq:joint-ddr-gtp} by signal frequency expresses every
$v(\,\cdot\,,\boldsymbol f;\boldsymbol\theta)$ as a linear
combination of the basis functions indexed by $\Omega^{(x)}$.
Hence
$\mathcal H_{\mathcal E_f}(\boldsymbol f)\subseteq\mathcal V_x$.
\end{proof}

The factorization concerns the frequency set generated by the
eigenvalue paths. 
It does not assert that every pair has a nonzero
coefficient for a restricted trainable architecture, initial state,
and observable.

\subsubsection{Proof of the modulation bound}\label{appx:proof-modu-bound}

\begin{proof}[Proof of Proposition~\ref{prop:query-modulation}]
By Eq.~\eqref{eq:encoding-dependent-coefficients},
\begin{align}
\operatorname{span}_{\mathbb C}
\left\{
\boldsymbol a(\boldsymbol\theta,\boldsymbol f):
\boldsymbol f\in\mathcal F
\right\}
&=
\operatorname{span}_{\mathbb C}
\left\{
\mathbf C(\boldsymbol\theta)\boldsymbol w(\boldsymbol f):
\boldsymbol f\in\mathcal F
\right\}
\nonumber\\
&=
\mathbf C(\boldsymbol\theta)\mathcal W.
\label{eq:appx-modulation-image}
\end{align}
The dimension of this image cannot exceed either the rank of
$\mathbf C(\boldsymbol\theta)$ or the dimension of its domain
$\mathcal W$, proving Eq.~\eqref{eq:rank-bound}.
If $\mathcal W=\mathbb C^{|\Omega^{(f)}|}$, its image is the full column space of $\mathbf C(\boldsymbol\theta)$, so the dimension equals the matrix rank.
\end{proof}

For distinct query frequencies, the full-span condition holds when
$\mathcal F$ contains a nonempty open subset of $\mathbb R^N$.
Indeed, a linear functional annihilating every
$\boldsymbol w(\boldsymbol f)$ on such a set would define a finite sum of distinct exponentials that vanishes on an open set. 
Linear independence of these exponentials forces every coefficient of that functional to vanish. 
The bound also applies to restricted or discrete query domains, for which $\mathcal W$ may be smaller.

\subsubsection{Proof of query absorption}\label{appx:proof-q-absor}

\begin{proof}[Proof of Proposition~\ref{prop:absorption}]
Fix $\boldsymbol f$. At query layer $\ell$, the effective factor
$B_\ell(\boldsymbol\theta_\ell)
E_\ell^{(f)}(\boldsymbol f)$ ranges over
$\mathcal B_\ell E_\ell^{(f)}(\boldsymbol f)$.
Under Eq.~\eqref{eq:absorption-condition}, this set is
$\mathcal B_\ell$ for every query value. 
Because the adjacent blocks are parameterized independently, their effective factors can be chosen independently after this replacement. 
The set of realizable complete circuits, and therefore its image in the space of functions of $\boldsymbol x$, is independent of $\boldsymbol f$. 
This proves Eq.~\eqref{eq:absorbed-hypothesis-class}.

For the prior, right-multiplication invariance implies that the random factors
$B_\ell E_\ell^{(f)}(\boldsymbol f)$ and $B_\ell$ have the same distribution. 
Independence of these blocks from one another and from the remaining parameters makes the joint distribution of all effective circuit factors independent of
$\boldsymbol f$. 
Applying the circuit and observable to these factors is a deterministic map into functions of $\boldsymbol x$; its pushforward distribution is consequently independent of
$\boldsymbol f$. This proves
Eq.~\eqref{eq:absorbed-function-prior}.
\end{proof}

\subsection{Comparison with concatenated-input encodings}
\label{appx:concat}

A natural alternative to DDR is to feed the concatenated input $\boldsymbol{z} = \boldsymbol{x} \mathbin\Vert \boldsymbol{f} \in \mathbb{R}^{M+N}$ to a standard single-reuploading PQC. Following the same derivation as in
Sec.~\ref{sec:simple-pqc}, the expectation value takes the form
\begin{equation}
    v(\boldsymbol{z};\boldsymbol{\theta})
    = \sum_{\boldsymbol{\omega} \in \Omega}
      c_{\boldsymbol{\omega}}(\boldsymbol{\theta})\,
      \exp\!\left(i\,\boldsymbol{\omega} \cdot \boldsymbol{z}\right),
      \ \Omega \subset \mathbb{R}^{M+N}.
\end{equation}
Since $\boldsymbol{z} = \boldsymbol{x} \mathbin\Vert \boldsymbol{f}$, every frequency vector inherits the same block structure by concatenation:
$\boldsymbol{\omega} = \boldsymbol{\omega}^{(x)} \mathbin\Vert \boldsymbol{\omega}^{(f)}$
with $\boldsymbol{\omega}^{(x)} \in \mathbb{R}^M$ and
$\boldsymbol{\omega}^{(f)} \in \mathbb{R}^N$, so that
$\boldsymbol{\omega}\cdot\boldsymbol{z}
 = \boldsymbol{\omega}^{(x)}\cdot\boldsymbol{x}
 + \boldsymbol{\omega}^{(f)}\cdot\boldsymbol{f}$,
and the expansion becomes
\begin{equation}
    v(\boldsymbol{x},\boldsymbol{f};\boldsymbol{\theta})
    = \sum_{\boldsymbol{\omega}^{(x)},\boldsymbol{\omega}^{(f)}}
      c_{\boldsymbol{\omega}^{(x)},\boldsymbol{\omega}^{(f)}}(\boldsymbol{\theta})\,
      \exp\!\left(i\bigl[
        \boldsymbol{\omega}^{(x)}\cdot\boldsymbol{x}
        +\boldsymbol{\omega}^{(f)}\cdot\boldsymbol{f}
      \bigr]\right),
\end{equation}
which matches the functional form of Eq.~\eqref{eq:joint-ddr-gtp}.
The structural difference lies in how the frequencies are generated.
Each frequency arises as an eigenvalue path difference
\begin{equation}
    \boldsymbol{\omega}
    = \boldsymbol{\Lambda}_{\boldsymbol{h}} - \boldsymbol{\Lambda}_{\boldsymbol{j}},
    \quad
    \boldsymbol{\Lambda}_{\boldsymbol{j}}
    = \Bigl(\sum_{\ell=1}^{L}\lambda_{\ell,m,\,j_\ell}\Bigr)_{m=1}^{M+N},
\end{equation}
where the path index $j_\ell \in [b]$ is \emph{shared} across all $M+N$ coordinates. 
Consequently, the $\boldsymbol{x}$- and $\boldsymbol{f}$-blocks of 
$\boldsymbol{\omega}$ are determined by the same path choice
$(h_\ell, j_\ell)_{\ell=1}^{L}$:
\begin{align}
    \boldsymbol{\omega}^{(x)}
    &= \Bigl(\sum_{\ell}\bigl(\lambda_{\ell,m,h_\ell}
            -\lambda_{\ell,m,j_\ell}\bigr)\Bigr)_{m=1}^{M}, \\
    \boldsymbol{\omega}^{(f)}
    &= \Bigl(\sum_{\ell}\bigl(\lambda_{\ell,m,h_\ell}
            -\lambda_{\ell,m,j_\ell}\bigr)\Bigr)_{m=M+1}^{M+N}.
\end{align}
Fixing a path simultaneously fixes both frequency components.
For a given $\omega^{(x)}\in\Omega^{(x)}$, the compatible query frequencies are determined by the joint eigenvalue paths.
Consequently, $\Omega\subseteq\Omega^{(x)}\times\Omega^{(f)}$, with strict inclusion possible:  
concatenated encodings do not guarantee Cartesian factorization.
Disjoint-register generators provide a sufficient condition for Cartesian factorization. 
For shared-support generators, factorization depends on the joint eigenspectrum and is not guaranteed.

For example, the single-qubit encoding
\begin{equation}
E(x,f)=\exp\!\bigl[-\tfrac{i}{2}(x+f)Z\bigr]
\end{equation}
has joint frequency set $\{(-1,-1),(0,0),(1,1)\}$, although its projections onto the $x$ and $f$ axes are each $\{-1,0,1\}$.
Disjoint layers allow the path indices for the two inputs to be selected independently, yielding Eq.~\eqref{eq:factorize}. 
In either construction, the frequencies
that actually appear in the output can be a subset of the formal set when their coefficients vanish.

\subsection{Details on query experiments}
\label{appx:controlled-query}

Both rows of Fig.~\ref{fig:encoding-value-interpolation} use the same circuit family, parameter priors, Fourier extraction, and diagnostics.
They differ only in what is varied: the top row changes the query encoding $\mathcal E_f$ at fixed $\boldsymbol f$
(Appx.~\ref{appx:encoding-interpolation}), and the bottom row changes
$\boldsymbol f$ at fixed encoding (Appx.~\ref{appx:vary-query}).
Both comparisons hold the signal- and query-frequency sets fixed, so any observed effect arises from the coefficients alone. 
No parameter optimization or training is performed.

\paragraph*{Circuit family.}
We consider two qubits initialized in $\ket{++}$ and $L/2=2$ blocks. 
In each block, the signal is encoded as
\begin{equation}
E^{(x)}_\ell(\boldsymbol x)=R_Z(x_0)\otimes R_Z(x_1),
\qquad
R_P(\phi)=\exp(-i\phi P/2),
\label{eq:controlled-signal-encoding}
\end{equation}
and the complete circuit is
\begin{equation}
U(\boldsymbol x,\boldsymbol f;\boldsymbol\theta)
=
\prod_{\ell=1}^{2}
V^{(f)}_\ell(\boldsymbol\theta)\,
E^{(f)}_\ell(\boldsymbol f)\,
V^{(x)}_\ell(\boldsymbol\theta)\,
E^{(x)}_\ell(\boldsymbol x),
\label{eq:controlled-unitary}
\end{equation}
with factors of larger $\ell$ to the left. The output is the
correlator
\begin{equation}
v(\boldsymbol x,\boldsymbol f;\boldsymbol\theta)
=
\bra{++}U^\dagger Z_0Z_1\,U\ket{++}.
\label{eq:controlled-output}
\end{equation}
All trainable blocks are products of single-qubit unitaries, each parameterized as $R_Z(\phi)R_Y(\beta)R_Z(\gamma)$, giving $24$ parameters in total.

\paragraph*{Query encodings.}
Each query layer encodes both components of $\boldsymbol f=(f_0,f_1)$,
\begin{equation}
E^{(f)}_\ell(\alpha;\boldsymbol f)
=
\prod_{n=0}^{1}
\exp\!\left[-i f_n H^{(f)}_{\ell,n}(\alpha)\right],
\label{eq:encoding-interpolation-query}
\end{equation}
with generators interpolated as
\begin{equation}
H^{(f)}_{\ell,n}(\alpha)
=\frac{s_\ell}{2}\bigl[\cos(\alpha)A_n+\sin(\alpha)B_n\bigr],
\qquad (s_1,s_2)=(1,2),
\label{eq:controlled-generators}
\end{equation}
$\alpha\in[0,\pi/2]$, and common starting point
$(A_0,A_1)=(X_0,X_1)$. 
The two paths differ in their endpoints:
\begin{align}
\text{local:}\quad &(B_0,B_1)=(Y_0,Y_1),
\label{eq:encoding-local-path}\\
\text{nonlocal:}\quad &(B_0,B_1)=(Y_0Z_1,\,Z_0Y_1).
\label{eq:encoding-nonlocal-path}
\end{align}
The two generators within a layer commute on both paths, so the ordering in Eq.~\eqref{eq:encoding-interpolation-query} is immaterial.
The distinct nonlocal targets retain separate dependence on $f_0$ and $f_1$; choosing $Z_0Z_1$ for both would collapse the query layer to a single rotation in $f_0+f_1$. 
The query-value experiment uses
the two fixed encodings at the ends of these paths: the common local start $(X_0,X_1)$ at $\alpha=0$, and the nonlocal endpoint
$(Y_0Z_1,Z_0Y_1)$ at $\alpha=\pi/2$.

\paragraph*{Preserved frequency sets.}
For each $n$, $A_n$ and $B_n$ are anticommuting Pauli strings, so
$[H^{(f)}_{\ell,n}(\alpha)]^2=(s_\ell/2)^2 I$. 
The commuting pair
$H^{(f)}_{\ell,0},H^{(f)}_{\ell,1}$ has joint eigenvalues
$(s_\ell/2)(\pm1,\pm1)$, each occurring once, independently of the path and of $\alpha$. 
Each layer therefore contributes the frequency differences $\{-s_\ell,0,s_\ell\}$ per query coordinate, and the two layers give
\begin{equation}
\Omega^{(f)}=\{-3,-2,\ldots,3\}^{2}.
\label{eq:controlled-f-modes}
\end{equation}
The two repetitions of Eq.~\eqref{eq:controlled-signal-encoding} give
\begin{equation}
\Omega^{(x)}=\{-2,-1,0,1,2\}^{2}.
\label{eq:controlled-x-modes}
\end{equation}
Both sets are identical for every encoding considered below.

\paragraph*{Hypothesis classes and priors.}
Every local encoding, including the common starting point, consists of product query encoding gates. 
These are absorbed by the adjacent product blocks, each of whose single-qubit factors ranges over all of
$\mathrm{SU}(2)$. 
By Proposition~\ref{prop:absorption}, the hypothesis class is therefore the same for every local encoding and every query value. The nonlocal query encoding gates are generically nonproduct and cannot be absorbed into an adjacent product block.

We use two parameter priors. 
Under the Haar prior, each single-qubit factor is sampled independently from the Haar measure on
$\mathrm{SU}(2)$; right invariance then makes absorption preserve the function prior as well. 
Under the narrow prior, the Euler angles are sampled independently from $\text{normal}(0,0.35^2)$. 
This untruncated Gaussian has full support, so the reachable set is unchanged, but it is not right invariant, and absorption can change the function prior. 
The three series in Fig.~\ref{fig:encoding-value-interpolation} are
\begin{enumerate}
\item[(I)] local encoding, Haar prior: neither the hypothesis class nor
the function prior changes;
\item[(II)] local encoding, narrow prior: the hypothesis class is
fixed, but the function prior can change;
\item[(III)] nonlocal encoding, Haar prior: both can change.
\end{enumerate}

\paragraph*{Fourier extraction.}
The output is sampled on the grid $x_{n,a}=2\pi a/N_{\rm g}$, $a=0,\dots,N_{\rm g}-1$, with $N_{\rm g}=8$, and the signal coefficients are
\begin{align}
A_{\omega_0,\omega_1}
=\frac{1}{N_{\rm g}^2}\sum_{a,b=0}^{N_{\rm g}-1}
&v(x_{0,a},x_{1,b},\boldsymbol f;\boldsymbol\theta)\nonumber \\
&\times \exp(-i(\omega_0x_{0,a}+\omega_1x_{1,b})).
\label{eq:controlled-coefficients}
\end{align}
Retaining $\omega_0,\omega_1\in\{-2,\ldots,2\}$ gives the
$5\times5$ matrix $\mathbf A$ of the main text, with
$\boldsymbol a=\operatorname{vec}\mathbf A$; $N_{\rm g}=8$ resolves these modes without aliasing.

\paragraph*{Diagnostics.}
We use $\Delta$, $S$, and $D_\Sigma$ as defined in Eq.~\eqref{eq:diagnostics}, with the baseline at $\alpha=0$ or $t=0$. 
The shape change $\Delta$ is evaluated for one fixed parameter realization per series.

At the baseline, the initial state, circuit, and observable all factorize across the two qubits, so every function in the baseline class is separable across $x_0|x_1$ and has a rank-one coefficient matrix.

Hence $S$ vanishes on the baseline class, and $S>0$ exhibits a function outside it, implying $\mathcal H_{\mathcal E_f}(\bm f)\neq\mathcal H_0$, where $\mathcal H_0$ and $P_0$ denote the hypothesis class and function prior at the baseline ($\alpha=0$ or $t=0$).
The maximum over $\boldsymbol\theta$ is appropriate because class membership is an existence statement; averaging would instead probe the prior. 
We estimate it over $1000$ parameter samples, so finite sampling can only underestimate $S$.

Because $\boldsymbol\Sigma$ is a functional of the function prior, $D_\Sigma>0$ implies $\mathbb P\neq\mathbb P_0$, whereas $D_\Sigma=0$ does not establish equality of the full priors. 
We estimate $\boldsymbol\Sigma$ from $5000$ independent parameter samples, using a ten-block jackknife for bias correction and standard errors. 
Series~(I), whose invariance follows analytically from Haar right invariance, defines the finite-sample null; its upper $95\%$ percentile is shown as the shaded band.

\subsubsection{Varying the query encoding}
\label{appx:encoding-interpolation}

The top row of Fig.~\ref{fig:encoding-value-interpolation} fixes
$\boldsymbol f=(\pi/4,\pi/3)$ and evaluates $181$ equally spaced
values of $\alpha\in[0,\pi/2]$ along both paths
(Fig.~\ref{fig:encoding-interpolation-circuit}).
Here $\alpha$ labels circuit architectures rather than inputs. At this $\boldsymbol f$, the nonlocal query encoding gates are nonproduct for generic $\alpha>0$. 
As a check on Eq.~\eqref{eq:controlled-f-modes}, we additionally evaluate the joint Fourier expansion over $(f_0,f_1)$ at five values of $\alpha$: 
for generic parameters, all $49$ query modes are occupied on both paths, with no weight outside this set.

\begin{figure}[t!]
    \centering
    \begin{adjustbox}{width=\columnwidth}
    \begin{quantikz}
    \lstick{$\ket{+}_0$}
        & \gate{R_Z(x_0)} \gategroup[
          wires=2,
          steps=2,
          style={dashed,rounded corners,fill=myblue!80},
          background,
          label style={
              label position=above,
              anchor=south,
              yshift=0.0cm
          }
      ]{}
        & \gate{V_1^{(x)}} 
        & \gate[2]{E_1^{(f)}(\alpha;\boldsymbol f)}\gategroup[
          wires=2,
          steps=2,
          style={dashed,rounded corners,fill=myred!80},
          background,
          label style={
              label position=above,
              anchor=south,
              yshift=0.0cm
          }
      ]{}
        & \gate{V_1^{(f)}}
        & \gate{R_Z(x_0)}\gategroup[
          wires=2,
          steps=2,
          style={dashed,rounded corners,fill=myblue!80},
          background,
          label style={
              label position=above,
              anchor=south,
              yshift=0.0cm
          }
      ]{}
        & \gate{V_2^{(x)}}
        & \gate[2]{E_2^{(f)}(\alpha;\boldsymbol f)}
        \gategroup[
          wires=2,
          steps=2,
          style={dashed,rounded corners,fill=myred!80},
          background,
          label style={
              label position=above,
              anchor=south,
              yshift=0.0cm
          }
      ]{}
        & \gate{V_2^{(f)}}
        & \meter{} \\
    \lstick{$\ket{+}_1$}
        & \gate{R_Z(x_1)}
        & \gate{V_1^{(x)}}
        & \qw
        & \gate{V_1^{(f)}}
        & \gate{R_Z(x_1)}
        & \gate{V_2^{(x)}}
        & \qw
        & \gate{V_2^{(f)}}
        & \meter{}
    \end{quantikz}
    \end{adjustbox}
    \caption{
    \textbf{Circuit used in the query-encoding interpolation experiment.}
    Each of the two DDR blocks encodes
    $\boldsymbol x=(x_0,x_1)$, applies product trainable blocks
    $V^{(x)}_\ell$ and $V^{(f)}_\ell$, and encodes both components of the fixed query $\boldsymbol f=(\pi/4,\pi/3)$.
    The composite query layer $E_\ell^{(f)}$ contains the two commuting rotations defined in Eq.~\eqref{eq:encoding-interpolation-query}.
    The local path interpolates from $(X_0,X_1)$ to $(Y_0,Y_1)$, whereas the nonlocal path interpolates from
    $(X_0,X_1)$ to $(Y_0Z_1,Z_0Y_1)$, as defined in Eqs.~\eqref{eq:encoding-local-path} and \eqref{eq:encoding-nonlocal-path}.
    The output is the correlator $\langle Z_0Z_1\rangle$.
    }
    \label{fig:encoding-interpolation-circuit}
\end{figure}

\emph{Results.} 
The shape change becomes large on all three series, with endpoint values $\Delta=0.938$, $0.998$, and $0.970$ for (I)--(III). 
A fixed parameter vector thus produces a substantially different function even when class and prior are invariant. The class witness is exactly zero along (I) and (II), where the full circuit remains a product, and reaches $S=0.66$ on (III). 
The prior distance leaves the Haar null for both (II) and (III), with endpoint values $D_\Sigma=0.37$ and $0.068$. 
Series~(II) is the essential separation:
its prior changes despite the exact class invariance of the local path.

\subsubsection{Varying the query value}
\label{appx:vary-query}

The bottom row of Fig.~\ref{fig:encoding-value-interpolation} fixes the
encoding at one end of the paths above,
\begin{equation}
E^{(f)}_\ell(t)\equiv E^{(f)}_\ell\bigl(\alpha_P;\boldsymbol f(t)\bigr),
\qquad \alpha_{\mathrm{loc}}=0,\ \ \alpha_{\mathrm{nl}}=\pi/2,
\label{eq:query-value-encoding}
\end{equation}
with $\alpha_{\rm P}\in\{\alpha_{\rm loc},\alpha_{\rm nl}\}$ for the local and nonlocal encodings, that is, generators $(X_0,X_1)$ for series (I) and (II) and
$(Y_0Z_1,Z_0Y_1)$ for series (III), and sweeps the diagonal path
\begin{equation}
\boldsymbol f(t)=t \bm 1,\qquad 0\le t\le 2\pi
\label{eq:query-value-ray}
\end{equation}
(Fig.~\ref{fig:query-circuit-architecture}). 
The horizontal axis is
therefore the path parameter $t$, not a scalar query. 
The two Haar series use the same parameter realizations, giving a paired comparison.

\begin{figure}[t!]
    \centering
    \begin{adjustbox}{width=\columnwidth}
    \begin{quantikz}
    \lstick{$\ket{+}_0$}
        & \gate{R_Z(x_0)}\gategroup[
          wires=2,
          steps=2,
          style={dashed,rounded corners,fill=myblue!80},
          background,
          label style={
              label position=above,
              anchor=south,
              yshift=0.0cm
          }
      ]{}
        & \gate{V_1^{(x)}}
        & \gate[2]{E_1^{(f)}(t)}\gategroup[
          wires=2,
          steps=2,
          style={dashed,rounded corners,fill=myred!80},
          background,
          label style={
              label position=above,
              anchor=south,
              yshift=0.0cm
          }
      ]{}
        & \gate{V_1^{(f)}}
        & \gate{R_Z(x_0)}\gategroup[
          wires=2,
          steps=2,
          style={dashed,rounded corners,fill=myblue!80},
          background,
          label style={
              label position=above,
              anchor=south,
              yshift=0.0cm
          }
      ]{}
        & \gate{V_2^{(x)}}
        & \gate[2]{E_2^{(f)}(t)}\gategroup[
          wires=2,
          steps=2,
          style={dashed,rounded corners,fill=myred!80},
          background,
          label style={
              label position=above,
              anchor=south,
              yshift=0.0cm
          }
      ]{}
        & \gate{V_2^{(f)}}
        & \meter{} \\
    \lstick{$\ket{+}_1$}
        & \gate{R_Z(x_1)}
        & \gate{V_1^{(x)}}
        & \qw
        & \gate{V_1^{(f)}}
        & \gate{R_Z(x_1)}
        & \gate{V_2^{(x)}}
        & \qw
        & \gate{V_2^{(f)}}
        & \meter{}
    \end{quantikz}
    \end{adjustbox}
    \caption{
    \textbf{Circuit used in the query-value experiment.}
    Each of the two DDR blocks encodes
    $\boldsymbol{x}=(x_0,x_1)$, applies product trainable blocks
    $V^{(x)}_\ell$ and $V^{(f)}_\ell$, and encodes the query along the diagonal path $\boldsymbol f(t)=t(1,1)$, with layer weights
    $(s_1,s_2)=(1,2)$.
    The composite query layer
    $E_\ell^{(f)}(t)\equiv
    E_\ell^{(f)}[\boldsymbol f(t)]$ contains the two commuting rotations defined in Eq.~\eqref{eq:query-value-encoding}.
    The local encoding uses generators $(X_0,X_1)$ and is absorbable into the trainable blocks, whereas the nonlocal encoding uses
    $(Y_0Z_1,Z_0Y_1)$ and is generically nonproduct.
    Series~(I) and~(II) use the same local circuit with different parameter priors; series~(III) uses the nonlocal encoding.
    The output is the correlator $\langle Z_0Z_1\rangle$.
    }
    \label{fig:query-circuit-architecture}
\end{figure}

\emph{Frequencies.} 
For any direction $\boldsymbol d$,
$\operatorname{spec}(d_0X_0+d_1X_1)
=\operatorname{spec}(d_0Y_0Z_1+d_1Z_0Y_1)=\{\pm d_0\pm d_1\}$, so both encodings share $\Omega^{(f)}$ in Eq.~\eqref{eq:controlled-f-modes}. Along the diagonal, this reduces to
$\Omega^{(t)}=\{-6,\ldots,6\}$. Numerically, all $49$ joint query modes are occupied for both encodings with no weight outside Eq.~\eqref{eq:controlled-f-modes}, and the coefficient matrices across the $13$ path modes both have numerical rank $13$. 
Neither encoding therefore accesses additional Fourier modes.

\emph{Extraction.} The output is sampled on an $8\times8\times14$
uniform grid over $(x_0,x_1,t)$, giving joint coefficients
$C_{\omega_0,\omega_1,\nu}(\boldsymbol\theta)$ of dimensions
$5\times5\times13$. The signal coefficients at query value $t$ are
\begin{equation}
A_{\omega_0,\omega_1}(t;\boldsymbol\theta)
=\sum_{\nu=-6}^{6}C_{\omega_0,\omega_1,\nu}(\boldsymbol\theta)\,\exp (i\nu t),
\label{eq:query-value-coefficients}
\end{equation}
the diagonal restriction of $\boldsymbol a=\mathbf C\boldsymbol w$ in Eq.~\eqref{eq:encoding-dependent-coefficients}. The diagnostics are evaluated at $181$ equally spaced values of $t$.

\emph{Product recurrences.}
At $t=0$ and $2\pi$, every query layer is proportional to the identity.
At $t=\pi$, the first query layer equals $-X_0X_1$ and the second the identity, up to global phases, for both encodings, so the local and nonlocal Haar series coincide there. 
These points give analytic zeros of $S$, and at $t=0,2\pi$ of all three diagnostics.

\emph{Results.} 
The shape change $\Delta(t)$ becomes large for every series, including (I). 
Both local series remain exactly separable, so $S(t)=0$ independently of the prior, whereas the nonlocal series reaches $S=0.695$, with exact zeros at $t=0,\pi,2\pi$. 
The Haar local series stays within its null threshold of $4.0\times10^{-3}$ for all $t$. 
The narrow-prior local series reaches $D_\Sigma=0.997$ despite exact class invariance (II), and the nonlocal series reaches $D_\Sigma=0.109$ (III), returning to the null at $t=\pi$.

Across both experiments, all three cases occur at fixed circuit size, parameter count, and signal- and query-frequency sets. 
Which case occurs is set by absorbability and prior invariance: the query acts as a pure reparameterization, reweights an unchanged class, or makes new functions accessible within the same Fourier space.

\section{Details on spectral task alignment}

\subsection{Real representation of complex signals}
\label{appx:real}

Angle-based PQC encodings require real inputs, whereas the signals in Sec.~\ref{sec:mnudft} may be complex.
We therefore represent
$x_{\boldsymbol d}
=
x_{\boldsymbol d}^{\mathrm R}
+
i x_{\boldsymbol d}^{\mathrm I}$
by
\begin{equation}
\widetilde{\boldsymbol x}
=
\left(
\operatorname{Re}\boldsymbol x,
\operatorname{Im}\boldsymbol x
\right)
\in\mathbb R^{2D},
\qquad
D=|\mathcal I|.
\label{eq:complex-real-map}
\end{equation}
This representation is bijective and therefore preserves all information in the complex signal.

To express the MNUDFT in these coordinates, define
\begin{equation}
\phi_{\boldsymbol k,\boldsymbol d}
=
2\pi\boldsymbol f_{\boldsymbol k}
\cdot\boldsymbol p_{\boldsymbol d}.
\end{equation}
Then
\begin{equation}
\begin{pmatrix}
\operatorname{Re}X_{\boldsymbol k}\\
\operatorname{Im}X_{\boldsymbol k}
\end{pmatrix}
=
\sum_{\boldsymbol d\in\mathcal I}
\begin{pmatrix}
\cos\phi_{\boldsymbol k,\boldsymbol d}
&
\sin\phi_{\boldsymbol k,\boldsymbol d}
\\
-\sin\phi_{\boldsymbol k,\boldsymbol d}
&
\cos\phi_{\boldsymbol k,\boldsymbol d}
\end{pmatrix}
\begin{pmatrix}
x_{\boldsymbol d}^{\mathrm R}\\
x_{\boldsymbol d}^{\mathrm I}
\end{pmatrix}.
\label{eq:real-mnudft}
\end{equation}
Thus, the real and imaginary parts of every MNUDFT coefficient are real linear functions of $\widetilde{\boldsymbol x}$.
Any decision map
$G:\mathbb C^K\rightarrow\mathbb R$ therefore has an equivalent representation in the corresponding real coordinates.

Moreover, both $x_{\boldsymbol d}$ and $x_{\boldsymbol d}^{*}$ are linear in
$(x_{\boldsymbol d}^{\mathrm R},x_{\boldsymbol d}^{\mathrm I})$.
Hence, polynomial decision maps in the MNUDFT coefficients and their complex conjugates remain polynomial in the real input components.
The real representation therefore preserves both the learning task and the query-dependent Fourier structure established in Sec.~\ref{sec:correspondence}.

\subsection{Proof of the MNUDFT score expansion}
\label{appx:mnudft-structure}

\begin{proof}[Proof of Proposition~\ref{prop:mnudft-structure}]
Using Eq.~\eqref{eq:lifted-phase-identity}, each MNUDFT evaluation can be written as
\begin{equation}
X_{\boldsymbol k}(\boldsymbol x,\boldsymbol f_{\boldsymbol k})
=
\sum_{\boldsymbol d\in\mathcal I}
x_{\boldsymbol d}
\exp(-2\pi i
\boldsymbol q_{\boldsymbol k,\boldsymbol d}\cdot\boldsymbol f).
\label{eq:appx-lifted-mnudft}
\end{equation}
Write a polynomial $G$ as a finite sum of monomials
\begin{equation}
G(\boldsymbol z,\boldsymbol z^*)
=
\sum_{\boldsymbol\alpha,\boldsymbol\beta}
g_{\boldsymbol\alpha,\boldsymbol\beta}
\prod_{\boldsymbol k\in\mathcal K}
z_{\boldsymbol k}^{\alpha_{\boldsymbol k}}
(z_{\boldsymbol k}^*)^{\beta_{\boldsymbol k}},
\label{eq:appx-score-polynomial}
\end{equation}
where the exponents are nonnegative integers. 
Substituting Eq.~\eqref{eq:appx-lifted-mnudft} and expanding the products, each factor $X_{\boldsymbol k}$ contributes a phase with frequency
$-2\pi\boldsymbol q_{\boldsymbol k,\boldsymbol d}$, while each
conjugate factor contributes
$+2\pi\boldsymbol q_{\boldsymbol k,\boldsymbol d}$.
Each term therefore has a query frequency of the form
\begin{equation}
\boldsymbol\omega
=
2\pi\left(
\sum_{\text{conjugate factors}}
\boldsymbol q_{\boldsymbol k,\boldsymbol d}
-
\sum_{\text{unconjugated factors}}
\boldsymbol q_{\boldsymbol k,\boldsymbol d}
\right)
\label{eq:appx-target-frequency}
\end{equation}
and an amplitude given by a polynomial in
$x_{\boldsymbol d}$ and $x_{\boldsymbol d}^*$.
Grouping terms with equal $\boldsymbol\omega$ yields Eq.~\eqref{eq:score-fourier-f}. For fixed $\boldsymbol x$, a real-valued score satisfies
$s^\star(\boldsymbol x\mid\boldsymbol f)
=s^\star(\boldsymbol x\mid\boldsymbol f)^*$.
Linear independence of distinct exponentials as functions of
$\boldsymbol f$ then gives
$F^\star_{-\boldsymbol\omega}
=(F^\star_{\boldsymbol\omega})^*$.

For the continuous case, let $C\subset\mathbb C^K$ be a compact set containing the MNUDFT coefficient vectors on the domain of interest. 
Polynomials in the coordinates $z_{\boldsymbol k}$ and their conjugates contain the constants, separate points of
$C$, and are closed under complex conjugation. By the
Stone--Weierstrass theorem~\cite{rudin_principles_1976}, for every
$\varepsilon>0$ there is such a polynomial $G_\varepsilon$
with
\begin{equation}
\sup_{\boldsymbol z\in C}
|G(\boldsymbol z)-G_\varepsilon(\boldsymbol z)|
<\varepsilon.
\end{equation}
Composing $G_\varepsilon$ with the MNUDFT evaluations gives a finite expansion of the form Eq.~\eqref{eq:score-fourier-f}.
The same bound holds uniformly over all signal and query pairs whose
MNUDFT coefficient vector lies in $C$.
\end{proof}

\subsection{Details of the spectral alignment experiment}
\label{appx:spectral-alignment}

Figure~\ref{fig:alignment} shows that task-spectrum alignment can be achieved together with a change in the function prior.
We consider a four-qubit circuit initialized in
$\ket{+}^{\otimes 4}$, with
$\boldsymbol x=(x_0,\ldots,x_3)$ encoded by
$\bigotimes_n R_Y(x_n)$ in each of two signal layers, so that each qubit carries one signal sample and one query coordinate.
Each trainable block consists of independently parameterized $R_Y$ rotations followed by a CNOT ring, and the circuit output is
$\langle Z_0\rangle$.
In contrast to Fig.~\ref{fig:encoding-value-interpolation}, the adjacent trainable blocks absorb $Y_n$ query rotations by shifts of their $R_Y$ angles, whereas generic $Z_n$ query rotations cannot be absorbed into these adjacent blocks.

The reference task is the MNUDFT intensity score
\begin{equation}
s(\boldsymbol x,\boldsymbol f)
=\sum_{n=0}^{3}\Bigl|\sum_{d=0}^{3}a_d\,\exp(-2\pi i p_d f_n)\Bigr|^2,
\end{equation}
for the phase-only signal $a_d=\exp(ix_d)$ sampled at the nonuniform positions $\boldsymbol p=(0,0.10,0.45,0.80)$, evaluated at the query
$\boldsymbol f^\star=(0.30,0.70,1.10,1.60)$.
Because the per-query intensities enter additively, every query-frequency mode involves a single coordinate $f_n$, and the spectrum is generated by the pairwise position differences,
$\Omega_\star^{(f)}=\{-2\pi(p_d-p_{d'})\boldsymbol e_n\}\cup\{\boldsymbol 0\}$, giving $41$ distinct modes with coordinate extent
$R=2\pi(p_{\max}-p_{\min})\simeq5.03$.

Panel~a evaluates the spectrum construction of Proposition~\ref{thm:realizability}.
For generator scales
$s_{\ell,n}=R_n2^{-(\ell+1)}$, where $R_n$ is the task extent in query coordinate $n$, the worst-case distance from
$\Omega_\star^{(f)}$ to the accessible spectrum
$\Omega^{(f)}$ halves with every query layer and remains below
$\sqrt{N}R\,2^{-L_f}$.
For comparison, uniform generators with
$s_{\ell,n}=R_n/L_f$ converge only polynomially.
The dyadic construction therefore provides an exponentially improving approximation of the task spectrum.

Panel~b tests whether this aligned branch produces a model-level effect.
At $L_f=2$, we evaluate the spectral second moment along
$\boldsymbol f=t\boldsymbol f^\star$ and compare the aligned $Z_n$ encoding with a $Y_n$ control.
The two encodings have identical generator eigenvalues and therefore the same $\Omega^{(f)}$, but the control belongs to the adjacent
$R_Y$ trainable family and is absorbed by a shift of its rotation angles.
This absorption holds to numerical precision, whereas the same shift does not absorb the aligned encoding.
Accordingly, the normalized second-moment distance remains at the sampling floor for the control but reaches
$0.345\pm0.010$ for the aligned encoding, showing that it changes the parameter-induced function prior.
The estimates use $1600$ shared parameter realizations and a jackknife correction for finite-sample bias; the shaded regions show
$95\%$ confidence intervals.

This experiment isolates the two architectural ingredients of task-alignment: the query encoding reproduces the task-relevant frequency structure and remains functionally active within the chosen trainable architecture.
This experiment establishes spectral alignment and a change in the function prior, but does not establish realizability of the target coefficient functions. Proposition~\ref{prop:exact-intensity-contrast} provides a separate exact construction, while the teacher--student experiments test learning within a realizable circuit family.

\begin{figure}[t!]
    \centering
    \begin{adjustbox}{width=\columnwidth}
    \begin{quantikz}
    \lstick{$\ket{+}_0$}
        & \gate{R_Y(x_0)}
          \gategroup[
              wires=4,steps=2,
              style={dashed,rounded corners,fill=myblue!80},
              background
          ]{}
        & \gate[4]{V_1^{(x)}}
        & \gate{R_Z(2s_{1,0}f_0)}
          \gategroup[
              wires=4,steps=2,
              style={dashed,rounded corners,fill=myred!80},
              background
          ]{}
        & \gate[4]{V_1^{(f)}}
        & \gate{R_Y(x_0)}
          \gategroup[
              wires=4,steps=2,
              style={dashed,rounded corners,fill=myblue!80},
              background
          ]{}
        & \gate[4]{V_2^{(x)}}
        & \gate{R_Z(2s_{2,0}f_0)}
          \gategroup[
              wires=4,steps=2,
              style={dashed,rounded corners,fill=myred!80},
              background
          ]{}
        & \gate[4]{V_2^{(f)}}
        & \meter{} \\
    \lstick{$\ket{+}_1$}
        & \gate{R_Y(x_1)}
        & \qw
        & \gate{R_Z(2s_{1,1}f_1)}
        & \qw
        & \gate{R_Y(x_1)}
        & \qw
        & \gate{R_Z(2s_{2,1}f_1)}
        & \qw
        & \qw \\
    \lstick{$\ket{+}_2$}
        & \gate{R_Y(x_2)}
        & \qw
        & \gate{R_Z(2s_{1,2}f_2)}
        & \qw
        & \gate{R_Y(x_2)}
        & \qw
        & \gate{R_Z(2s_{2,2}f_2)}
        & \qw
        & \qw \\
    \lstick{$\ket{+}_3$}
        & \gate{R_Y(x_3)}
        & \qw
        & \gate{R_Z(2s_{1,3}f_3)}
        & \qw
        & \gate{R_Y(x_3)}
        & \qw
        & \gate{R_Z(2s_{2,3}f_3)}
        & \qw
        & \qw
    \end{quantikz}
    \end{adjustbox}
    \caption{
    \textbf{Four-qubit DDR circuit used in the spectral-alignment experiment.}
    The signal is uploaded twice through local $R_Y(x_n)$ rotations.
    Each trainable block $V_\ell^{(x)}$ or $V_\ell^{(f)}$ consists of independent $R_Y$ rotations followed by a CNOT ring
    $0\to1\to2\to3\to0$.
    The query rotations implement
    $H_{\ell,n}^{(f)}=s_{\ell,n}Z_n$, with dyadic scales
    $s_{\ell,n}=R_n2^{-(\ell+1)}$.
    The isospectral control replaces each query $R_Z$ rotation by an $R_Y$ rotation with the same angle.
    The output is $\langle Z_0\rangle$.
    }
    \label{fig:encoding3}
\end{figure}

Figure~\ref{fig:encoding3} shows the four-qubit circuit used in this experiment. 
Data are uploaded by $R_Y$ angle encoding, with separate encoding stages for the signal $\bm x$ and the query $\bm f$. 
Each trainable block consists of independently parameterized $R_Y$ rotations followed by a periodic ring of nearest-neighbor CNOT gates.

\subsection{Derivation of the alignment bound}
\label{appx:realizability}    

For an injective matching
$\pi:\Omega_\star^{(f)}\to\Omega^{(f)}$, write the target and circuit
at the fixed task query as
\begin{equation}
s^\star(\boldsymbol x\mid\boldsymbol f^\star)
=
\sum_{\boldsymbol\nu\in\Omega_\star^{(f)}}
F^\star_{\boldsymbol\nu}(\boldsymbol x)
\exp (i\boldsymbol\nu\cdot\boldsymbol f^\star),
\label{eq:target-nested-alignment}
\end{equation}
and
\begin{equation}
v(\boldsymbol x,\boldsymbol f^\star;\boldsymbol\theta)
=
\sum_{\boldsymbol\nu\in\Omega_\star^{(f)}}
F_{\pi(\boldsymbol\nu)}(\boldsymbol x;\boldsymbol\theta)
\exp( i\pi(\boldsymbol\nu)\cdot\boldsymbol f^\star)
+r_{\boldsymbol\theta}(\boldsymbol x),
\label{eq:matched-circuit-expansion}
\end{equation}
where $r_{\boldsymbol\theta}$ collects the unused circuit modes.
Adding and subtracting
$F^\star_{\boldsymbol\nu}(\boldsymbol x)
\exp (i\pi(\boldsymbol\nu)\cdot\boldsymbol f^\star )$ for each target mode, then using
$\lvert \exp(ia)-\exp (ib)\rvert\leq\lvert a-b\rvert$
and Cauchy--Schwarz, gives Eqs.~\eqref{eq:alignment-error-decomposition}--\eqref{eq:coefficient-error} after taking the supremum over $\mathcal X$.

\subsection{Task-spectrum realizability}

Phase alignment, $\varepsilon_{\rm ph}\to0$, requires every task-defined mode in $\Omega^{(f)}_\star$ to be accessible, at least approximately, through the query encoding.
These modes are generated by the lifted sampling positions
$-2\pi\boldsymbol{q}_{\boldsymbol{k},\boldsymbol{d}}$ and their sums and differences, and are therefore determined by the sampling geometry and the decision map $G$.
Proposition~\ref{thm:frequency-independence} allows the query encoding to be designed for this spectrum without changing
$\Omega^{(x)}$.
The following Proposition gives an explicit choice of query-encoding generators that approximates any finite
$\Omega_{\star}^{(f)}$.

\begin{proposition}[Task-spectrum realizability]
\label{thm:realizability}
Let $\Omega_{\star}^{(f)}\subset\mathbb{R}^{N}$ be finite, with coordinate-wise extent
\begin{equation}
    R_n
    =
    \max_{\boldsymbol{\nu}\in\Omega_{\star}^{(f)}}|\nu_n|,
    \qquad n=1,\ldots,N.
\end{equation}
Consider $L_f$ query-encoding layers acting on $n_q=N$ qubits, one per query coordinate, with
\begin{equation}
    H_{\ell,n}^{(f)}
    =
    R_n2^{-(\ell+1)}Z_n,
    \qquad
    \ell=1,\ldots,L_f.
    \label{eq:realizable-generator-scales}
\end{equation}
Then, for every $\boldsymbol{\nu}\in\Omega_{\star}^{(f)}$, the accessible spectrum contains a mode
$\boldsymbol{\omega}^{(f)}\in\Omega^{(f)}$ satisfying
\begin{equation}
    \left\|
    \boldsymbol{\nu}-\boldsymbol{\omega}^{(f)}
    \right\|_2
    \leq
    2^{-L_f}\|\boldsymbol R\|_2,
    \qquad
    \boldsymbol R=(R_1,\ldots,R_N).
    \label{eq:task-spectrum-error}
\end{equation}
Hence, the complete target spectrum is approximated to accuracy
$\varepsilon$ whenever
$2^{-L_f}\|\boldsymbol R\|_2\leq\varepsilon$.
\end{proposition}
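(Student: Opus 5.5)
The proof is a direct computation of the formal query spectrum $\Omega^{(f)}$ generated by the prescribed generators, followed by a coordinatewise rounding argument. We use the path description of Appx.~\ref{appx:simple-pqc} and Appx.~\ref{appx:expanding-data-rep}: every element of $\Omega^{(f)}$ is a difference $\boldsymbol\Lambda^{(f)}_{\boldsymbol h}-\boldsymbol\Lambda^{(f)}_{\boldsymbol j}$ of accumulated eigenvalues along two query-layer paths. As in Proposition~\ref{thm:frequency-independence}, $\Omega^{(f)}$ is understood as the formal path-frequency set; nonvanishing of individual coefficients is not claimed.

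\emph{Step 1: single-layer spectrum.} In layer $\ell$ the generators $H^{(f)}_{\ell,n}=R_n2^{-(\ell+1)}Z_n$ act on distinct qubits and therefore commute. They are already diagonal in the computational basis, with joint eigenvalues $\bigl(\pm R_n2^{-(\ell+1)}\bigr)_{n=1}^N$, and all sign patterns occur. Since the sign choices on different qubits are independent, exactly as in the factorization argument of Proposition~\ref{thm:frequency-independence}, the eigenvalue differences in layer $\ell$ form the Cartesian product over $n$ of the sets $\{-R_n2^{-\ell},\,0,\,R_n2^{-\ell}\}$.

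\emph{Step 2: accumulated spectrum.} The paths $\boldsymbol h$ and $\boldsymbol j$ are chosen independently in each layer, so summing over layers gives
\[
\Omega^{(f)}\supseteq\Bigl\{\bigl(R_n\textstyle\sum_{\ell=1}^{L_f}\epsilon_{\ell,n}2^{-\ell}\bigr)_{n=1}^N:\ \epsilon_{\ell,n}\in\{-1,0,1\}\Bigr\}.
\]
Restricting to $\epsilon_{\ell,n}\in\{0,1\}$, the sums $\sum_\ell\epsilon_{\ell,n}2^{-\ell}$ realize every dyadic value $k\,2^{-L_f}$ with $k\in\{0,\ldots,2^{L_f}-1\}$, since these are exactly the binary expansions with $L_f$ digits. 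Flipping all signs realizes the negatives. Hence in coordinate $n$ the accessible values contain the grid $R_n2^{-L_f}\{-(2^{L_f}-1),\ldots,2^{L_f}-1\}$, and these grid values can be chosen independently across coordinates.

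\emph{Step 3: rounding.} Fix $\boldsymbol\nu\in\Omega^{(f)}_\star$. By definition of $R_n$, each coordinate satisfies $|\nu_n|\le R_n$. Take $\omega_n$ to be the nearest point of the grid from Step 2.
\begin{itemize}
\item If $|\nu_n|\le R_n(1-2^{-L_f})$, then $\nu_n$ lies between two adjacent grid points and $|\nu_n-\omega_n|\le R_n2^{-L_f-1}$.
\item If $|\nu_n|$ lies in the remaining end interval up to $R_n$, rounding to the extreme grid point gives $|\nu_n-\omega_n|\le R_n2^{-L_f}$.
\end{itemize}
The case $R_n=0$ is trivial, with $\omega_n=0$. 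By Step 2, the vector $\boldsymbol\omega^{(f)}=(\omega_n)_n$ lies in $\Omega^{(f)}$, and
\[
\|\boldsymbol\nu-\boldsymbol\omega^{(f)}\|_2\le\Bigl(\textstyle\sum_n R_n^2\,4^{-L_f}\Bigr)^{1/2}=2^{-L_f}\|\boldsymbol R\|_2 .
\]
The final sentence of the proposition then follows immediately.

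\emph{Main obstacle.} The only delicate points are the endpoint case, where the grid stops one spacing short of $\pm R_n$, and justifying that the per-coordinate choices combine into a single element of $\Omega^{(f)}$. The endpoint case is what forces the constant $2^{-L_f}$ rather than $2^{-L_f-1}$. The combination is handled by the independence of the path indices across qubits and across layers, which is the disjoint-support argument already used in the proof of Proposition~\ref{thm:frequency-independence}.
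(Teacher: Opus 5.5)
Your proof is correct and follows essentially the same route as the paper's: you compute the per-layer path differences $\{-R_n2^{-\ell},0,R_n2^{-\ell}\}$ in each coordinate, approximate $|\nu_n|/R_n$ by an $L_f$-digit binary expansion with a sign flip, and combine the coordinates independently to obtain $2^{-L_f}\|\boldsymbol R\|_2$. Your nearest-grid rounding is slightly finer than the paper's digit choice, giving error $R_n2^{-L_f-1}$ away from the endpoints, but the worst-case constant is the same because it is set by the endpoint $|\nu_n|=R_n$.
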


\begin{proof}
For an eigenvalue path $\boldsymbol j$, the accumulated query eigenvalue in coordinate $n$ is
\begin{equation}
    \Lambda_{\boldsymbol j,n}^{(f)}
    =
    \sum_{\ell=1}^{L_f}
    R_n2^{-(\ell+1)}z_{j_\ell,n},
    \qquad
    z_{j_\ell,n}\in\{-1,1\}.
\end{equation}
Consequently, every choice
$m_{\ell,n}\in\{-2,0,2\}$ is realized by a pair of paths
$\boldsymbol j,\boldsymbol h$, giving
\begin{equation}
    \omega_n^{(f)}
    =
    \Lambda_{\boldsymbol h,n}^{(f)}
    -
    \Lambda_{\boldsymbol j,n}^{(f)}
    =
    R_n\sum_{\ell=1}^{L_f}
    m_{\ell,n}2^{-(\ell+1)}.
\end{equation}

For $R_n>0$, choose binary digits $\beta_{\ell,n}\in\{0,1\}$ such that
\begin{equation}
    \left|
    \frac{|\nu_n|}{R_n}
    -
    \sum_{\ell=1}^{L_f}\beta_{\ell,n}2^{-\ell}
    \right|
    \leq 2^{-L_f},
\end{equation}
and set
$m_{\ell,n}=2\,\operatorname{sgn}(\nu_n)\beta_{\ell,n}$.
If $R_n=0$, then $\nu_n=0$ and we set $m_{\ell,n}=0$.
The resulting accessible mode obeys
\begin{equation}
    |\nu_n-\omega_n^{(f)}|
    \leq R_n2^{-L_f}.
\end{equation}
Taking the Euclidean norm over all coordinates yields Eq.~\eqref{eq:task-spectrum-error}.
\end{proof}

Proposition~\ref{thm:realizability} connects the query generators
to the target spectrum $\Omega_\star^{(f)}$, determined by the
sampling positions and the decision map.
For every target mode $\boldsymbol\nu\in\Omega_\star^{(f)}$,
choosing the query depth so that
$2^{-L_f}\|\boldsymbol R\|_2\leq\varepsilon$
guarantees a circuit mode
$\boldsymbol\omega^{(f)}\in\Omega^{(f)}$ satisfying
$
\left\|
\boldsymbol\omega^{(f)}-\boldsymbol\nu
\right\|_2
\leq\varepsilon$.
At the task-defined query $\boldsymbol f^\star$, the corresponding
DDR phase therefore satisfies
\begin{equation}
\left|
\exp\left(i\boldsymbol\omega^{(f)}\cdot\boldsymbol f^\star\right)
-
\exp\left(i\boldsymbol\nu\cdot\boldsymbol f^\star\right)
\right|
\leq
\varepsilon\|\boldsymbol f^\star\|_2.
\label{eq:mnudft-phase-alignment}
\end{equation}
For distinct target modes, the assignment is injective whenever
$2\varepsilon$ is smaller than their minimum pairwise separation.
Thus, the task supplies both the evaluation point
$\boldsymbol f^\star$ and, through its sampling positions and decision map, the target modes that guide the choice of query-encoding generators.

The construction establishes existence rather than resource optimality: 
it uses one qubit per query coordinate and covers the coordinate-wise extent of the complete target spectrum.
More compact encodings may instead exploit the particular structure of
$\Omega_{\star}^{(f)}$.

\begin{figure}[ht!]
    \centering
    \includegraphics[width=\linewidth]{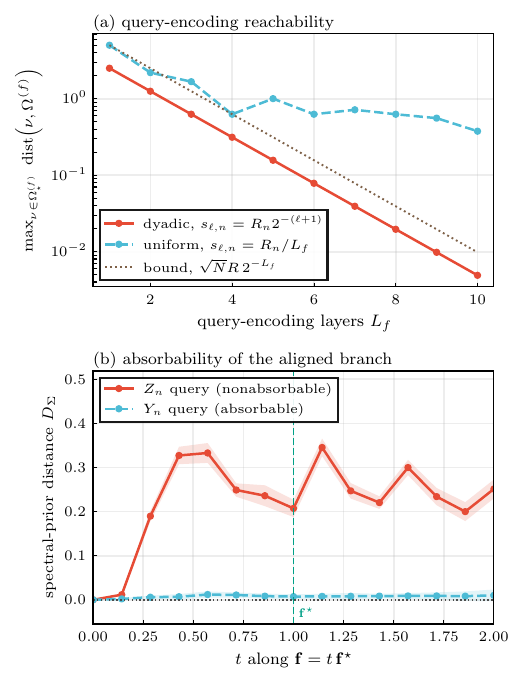}
    \caption{
    \textbf{A query encoding can match the task spectrum while changing the function prior.}
    We test these two requirements in a four-qubit, two-layer DDR circuit for the MNUDFT intensity task described in the text.
    \textbf{(a)}
    Worst-case distance from the task modes
    $\Omega_\star^{(f)}$ to the accessible encoding spectrum
    $\Omega^{(f)}$ as the number $L_f$ of query layers increases.
    The dyadic generator scales of Proposition~\ref{thm:realizability} halve the distance with every layer and satisfy the bound $\sqrt{N}R\,2^{-L_f}$, whereas uniformly spaced scales converge only polynomially.
    \textbf{(b)}
    Normalized distance $D_\Sigma$ between the spectral second moments at
    $\boldsymbol f=t\boldsymbol f^\star$ and
    $\boldsymbol f=\boldsymbol 0$ for $L_f=2$.
    The task-aligned $Z_n$ encoding is compared with a $Y_n$ control having the same generator eigenvalues and therefore the same $\Omega^{(f)}$.
    The control is absorbed by the adjacent $R_Y$ trainable blocks and
    remains at the sampling floor, whereas the aligned encoding produces a clear change in the parameter-induced function prior.
    Thus, matching the task spectrum and changing the function prior are compatible, while identical frequency support alone does not guarantee the latter.
    Shaded regions denote jackknife $95\%$ confidence intervals.
    Circuit, task, and estimator details are given in Appx.~\ref{appx:spectral-alignment}.}
    \label{fig:alignment}
\end{figure}

Figure~\ref{fig:alignment} shows that spectral alignment and a change in the function prior can be achieved together.
For a four-qubit MNUDFT intensity task with nonuniform sampling, panel~a illustrates Proposition~\ref{thm:realizability}: dyadic generator scales halve the mode error with each query layer, remaining below the bound $\sqrt{N}R\,2^{-L_f}$, whereas uniformly spaced scales converge only polynomially. 
Panel~b compares the aligned encoding with an absorbable control that has the same generator eigenvalues and hence the same $\Omega^{(f)}$. 
Along $\boldsymbol f=t\boldsymbol f^\star$, the aligned encoding changes the spectral second moment of the function prior, while the control remains invariant. 
These tests establish spectral support and a model-level query effect; they leave open whether the target’s signal-dependent coefficients are realizable.
We address that requirement next for a specific nonlinear score.

\subsection{Exact realization of an intensity contrast}
\label{appx:target-realization}

\begin{proposition}[Exact realization of an MNUDFT intensity contrast]
\label{prop:exact-intensity-contrast}
Let $D=2^m$ and let
$\boldsymbol p_d\in\mathbb R^n$, $d=0,\ldots,D-1$, be arbitrary sampling positions.
For the phase-only signal
\begin{equation}
x_d(\boldsymbol\varphi)
=
\exp(i\varphi_d),
\quad
\boldsymbol\varphi\in[0,2\pi)^D,
\end{equation}
define two MNUDFT coefficients
\begin{equation}
X_k(\boldsymbol\varphi,\boldsymbol f_k)
=
\sum_{d=0}^{D-1}
\exp(i\varphi_d)
\exp\left(
-2\pi i\,
\boldsymbol f_k\cdot\boldsymbol p_d
\right),
\label{eq:two-query-mnudft}
\end{equation}
with $k\in\{0,1\}$.
Then there exists an $(m+1)$-qubit DDR circuit with a hypothesis class that contains the normalized intensity-contrast score
\begin{equation}
s^\star(\boldsymbol\varphi)
=
\frac{
|X_0(\boldsymbol\varphi,\boldsymbol f_0^\star)|^2
-
|X_1(\boldsymbol\varphi,\boldsymbol f_1^\star)|^2
}{
2D^2
}.
\label{eq:target-intensity-contrast}
\end{equation}
More strongly, at a parameter value $\boldsymbol\theta^\star$, the circuit and target agree for every query pair:
\begin{equation}
v(\boldsymbol\varphi,\boldsymbol f;\boldsymbol\theta^\star)
=
\frac{
|X_0(\boldsymbol\varphi,\boldsymbol f_0)|^2
-
|X_1(\boldsymbol\varphi,\boldsymbol f_1)|^2
}{
2D^2
},
\label{eq:exact-promoted-equality}
\end{equation}
where
$\boldsymbol f=\boldsymbol f_0\mathbin\Vert\boldsymbol f_1$.
Consequently,
$\varepsilon_{\mathrm{ph}}
=
\varepsilon_{\mathrm{coef}}(\boldsymbol\theta^\star)
=0$.
\end{proposition}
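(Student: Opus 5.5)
The plan is an explicit construction with one ancilla qubit and $m$ \emph{index} qubits. The ancilla selects which of the two query vectors is applied. The index register holds the uniform superposition over the $D=2^m$ sample positions. A final Hadamard transform on the index register reads out the coherent sum that defines the MNUDFT coefficient. Take the ancilla (register $a$) and the index register $I$, with the index register labelling the computational basis states $\ket{d}$, $d=0,\dots,D-1$. The state before the encodings is $\ket{+}_a\otimes D^{-1/2}\sum_d\ket{d}_I$. This is $H^{\otimes(m+1)}\ket{\boldsymbol 0}$, produced by a trainable block set to Hadamards.

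\emph{Encodings.} The signal encoding is diagonal: $E^{(x)}(\boldsymbol\varphi)=\exp(-i\sum_d\varphi_d H^{(x)}_d)$ with $H^{(x)}_d=-I_a\otimes\ket{d}\!\bra{d}$. It imprints $e^{i\varphi_d}$ on $\ket{d}$.

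The query $\boldsymbol f=\boldsymbol f_0\Vert\boldsymbol f_1$ has coordinates indexed by $(k,n)$. Its generators are
\[
H^{(f)}_{(k,n)}=\ket{k}\!\bra{k}_a\otimes\operatorname{diag}_d\bigl(2\pi p_{d,n}\bigr).
\]
All of these are diagonal, so they commute pairwise within the layer, as the setting of Proposition~\ref{thm:frequency-independence} requires. On the branch $\ket{k}_a\ket{d}$ the layer applies the phase $e^{-2\pi i\boldsymbol f_k\cdot\boldsymbol p_d}$.

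\emph{State before readout.} Because all encodings are diagonal, they commute with one another. The joint state is therefore
\[
\tfrac{1}{\sqrt2}\sum_{k}\ket{k}_a\otimes D^{-1/2}\sum_d e^{i\varphi_d}e^{-2\pi i\boldsymbol f_k\cdot\boldsymbol p_d}\ket{d}_I ,
\]
regardless of the order in which the two encodings act.

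\emph{Readout.} The last trainable block applies $H^{\otimes m}$ on $I$ and the identity on the ancilla. The amplitude of $\ket{k}_a\ket{0}_I$ is then $X_k(\boldsymbol\varphi,\boldsymbol f_k)/(\sqrt2\,D)$. Measure the observable $O=Z_a\otimes\ket{0}\!\bra{0}_I$. Its expectation value is $\tfrac12(|X_0|^2-|X_1|^2)/D^2$, which is Eq.~\eqref{eq:exact-promoted-equality} identically in $(\boldsymbol\varphi,\boldsymbol f)$. Evaluating at $\boldsymbol f^\star$ shows that $s^\star$ lies in $\mathcal H_{\mathcal E_f}(\boldsymbol f^\star)$.

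\emph{The error terms.} The circuit output and the target score agree as functions of $\boldsymbol f$ for every signal. By linear independence of distinct exponentials in $\boldsymbol f$, their query-Fourier expansions coincide term by term. This has two consequences.
\begin{itemize}
\item[(i)] Every target mode in $\Omega_\star^{(f)}$ is a circuit mode. Taking $\pi$ to be the inclusion gives $\|\boldsymbol\nu-\pi(\boldsymbol\nu)\|_2=0$, so $\varepsilon_{\rm ph}=0$.
\item[(ii)] The coefficient functions satisfy $F_{\pi(\boldsymbol\nu)}(\cdot;\boldsymbol\theta^\star)=F^\star_{\boldsymbol\nu}$, and all unused circuit modes have vanishing coefficients. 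Hence $r_{\boldsymbol\theta^\star}=0$ and $\varepsilon_{\rm coef}(\boldsymbol\theta^\star)=0$.
\end{itemize}

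\emph{Main obstacle: fitting the layer ordering of Eq.~\eqref{eq:unitaryDouble}.} That ordering starts with a query encoding and has no trainable block before the first encoding, whereas the construction above needs Hadamards before the encodings. I would first use $L/2=2$ blocks:
\begin{itemize}
\item[(a)] The first query encoding acts on $\ket{\boldsymbol 0}$. Every generator is diagonal, so this contributes only the global phase $e^{-2\pi i\boldsymbol f_0\cdot\boldsymbol p_0}$, which cancels in the expectation value.
\item[(b)] The block $V^{(f)}_1$ applies the Hadamards.
\item[(c)] $E^{(x)}_1$ uploads the signal phases, and $V^{(x)}_1$ is the identity.
\item[(d)] $E^{(f)}_2$ applies the query phases.
\item[(e)] $V^{(f)}_2$ applies $H^{\otimes m}$ on $I$.
\end{itemize}
The remaining difficulty is the second signal upload $E^{(x)}_2$, which acts after the readout Hadamards and would otherwise spoil the result. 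Here two facts help. First, $E^{(x)}_2$ is diagonal in the computational basis, so it commutes with $O$, which is also diagonal. Second, it sits immediately before the measurement up to $V^{(x)}_2$, which I set to the identity. Together these make $E^{(x)}_2$ drop out of $U^\dagger O U$. The same commuting-with-$O$ argument is what makes the phase in step (a) harmless, since that phase acts on a single basis state. If the reader prefers, this can equivalently be phrased as permitting one fixed state-preparation block; either way the realized function is the same.
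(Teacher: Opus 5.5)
Your proof is correct and is essentially the paper's construction. It uses the same ancilla-selected query branch, uniform index register, diagonal phase encodings and coherent-sum readout; your $\ket{+}_a$ is the paper's $\theta^\star=\pi/2$, and your $H^{\otimes m}$ followed by $Z_a\otimes\ket{0}\!\bra{0}_I$ equals the paper's observable $Z\otimes\ket{+_D}\!\bra{+_D}$ in the Heisenberg picture. Beyond the paper, you obtain $\varepsilon_{\mathrm{ph}}=\varepsilon_{\mathrm{coef}}(\boldsymbol\theta^\star)=0$ from linear independence of exponentials in $\boldsymbol f$ rather than the explicit nested expansion, and you fit the strict alternating order of Eq.~\eqref{eq:unitaryDouble} (the initial query layer gives only a global phase on $\ket{\boldsymbol 0}$, and the trailing signal upload commutes with your diagonal observable), which the paper sidesteps by positing a preparation block before a single pair of encodings.
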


\begin{proof}
Use one query-index qubit and an $m$-qubit data register with computational basis
$\{\ket d\}_{d=0}^{D-1}$.
Define
\begin{equation}
\Pi_d=\ket d\!\bra d,
\quad
\ket{+_D}
=
\frac{1}{\sqrt D}
\sum_{d=0}^{D-1}\ket d.
\end{equation}
A trainable preparation block produces
\begin{equation}
\ket{\chi(\theta)}
=
\cos\left(\frac{\theta}{2}\right)\ket0
+
\sin\left(\frac{\theta}{2}\right)\ket1
\end{equation}
on the query qubit, while the data register is initialized in
$\ket{+_D}$.
Encode the signal phases through
\begin{equation}
E^{(x)}(\boldsymbol\varphi)
=
I\otimes
\exp\left(
i\sum_d\varphi_d\Pi_d
\right)
\label{eq:exact-signal-encoding}
\end{equation}
and the query vectors through
\begin{equation}
E^{(f)}(\boldsymbol f)
=
\exp\left[
-2\pi i
\sum_{k=0}^{1}
\ket k\!\bra k
\otimes
\sum_d
\left(
\boldsymbol f_k\cdot\boldsymbol p_d
\right)
\Pi_d
\right].
\label{eq:exact-query-encoding}
\end{equation}
Both are valid commuting encodings of the form considered in
Sec.~\ref{sec:architecture}.
For example, the query-encoding Hamiltonian associated with component
$r$ of $\boldsymbol f_k$ is
\begin{equation}
H^{(f,k,r)}
=
2\pi
\ket k\!\bra k
\otimes
\sum_d p_{d,r}\Pi_d.
\end{equation}

After applying the two encoding layers, the state is
\begin{align}
\ket{\psi}
={}&
\frac{\cos(\theta/2)}{\sqrt D}
\sum_d
\exp(i\varphi_d-2\pi i\boldsymbol f_0\cdot\boldsymbol p_d)
\ket{0,d}
\nonumber\\
&+
\frac{\sin(\theta/2)}{\sqrt D}
\sum_d
\exp(i\varphi_d-2\pi i\boldsymbol f_1\cdot\boldsymbol p_d)
\ket{1,d}.
\label{eq:exact-intensity-state}
\end{align}
Consider the observable
\begin{equation}
O
=
Z\otimes
\ket{+_D}\!\bra{+_D}.
\label{eq:intensity-contrast-observable}
\end{equation}
Projecting the data register onto $\ket{+_D}$ coherently sums the phase-encoded samples. Its expectation value is therefore
\begin{equation}
v(\boldsymbol\varphi,\boldsymbol f;\theta)
=
\frac{\cos^2(\theta/2)}{D^2}
|X_0|^2
-
\frac{\sin^2(\theta/2)}{D^2}
|X_1|^2.
\label{eq:parameterized-intensity-contrast}
\end{equation}
Choosing
$\theta^\star=\pi/2$
gives Eq.~\eqref{eq:exact-promoted-equality}.
To see the agreement at the level of the nested expansion, expand the two intensities:
\begin{equation}
\begin{aligned}
v(\boldsymbol\varphi,\boldsymbol f;\theta^\star)
=
\frac{1}{2D^2}
&\sum_{k=0}^{1}\eta_k
\sum_{d,d'}
\exp\left[
i(\varphi_d-\varphi_{d'})
\right]
\\
&\times
\exp\left[
-2\pi i\,
\boldsymbol f_k\cdot
(\boldsymbol p_d-\boldsymbol p_{d'})
\right],
\end{aligned}
\label{eq:exact-intensity-nested-expansion}
\end{equation}
where $\eta_0=1$ and $\eta_1=-1$.
The target query modes are therefore
\begin{equation}
\boldsymbol\nu_{kdd'}
=
-2\pi
\boldsymbol e_k
\otimes
(\boldsymbol p_d-\boldsymbol p_{d'}).
\end{equation}
After collecting coincident modes, their coefficient functions are
\begin{equation}
F^\star_{\boldsymbol\nu}(\boldsymbol\varphi)
=
\frac{1}{2D^2}
\sum_{\substack{k,d,d'\\
\boldsymbol\nu_{kdd'}=\boldsymbol\nu}}
\eta_k\exp[i(\varphi_d-\varphi_{d'})].
\end{equation}
At $\bm \theta^*$, Eq.~\eqref{eq:exact-intensity-nested-expansion} shows the output contains only the target query modes, with the required coefficient functions; contributions from unused circuit modes vanish.

Thus, both errors in Eq.~\eqref{eq:alignment-error-decomposition} vanish.
\end{proof}

Proposition~\ref{prop:exact-intensity-contrast} makes the correspondence explicit.
The nonlinear decision map
\begin{equation}
G(X_0,X_1)
=
\frac{|X_0|^2-|X_1|^2}{2D^2}
\end{equation}
is realized exactly by a concrete DDR family, and a binary spectral task can be obtained by thresholding its sign. 
In this construction, the sample magnitudes are fixed, and the class information is carried by their relative phases and their coherent response at two prescribed nonuniform queries.
This construction is an exact containment result within the DDR family,
rather than a separation from concatenated encodings, since its commuting
encoding layers can be combined algebraically.
Nevertheless, within the specified trainable family, query changes that alter either intensity function change the hypothesis class.

This exact realization does not imply that a general DDR circuit computes an MNUDFT or that arbitrary spectral decision maps belong to its hypothesis class. 
It provides an explicit nontrivial case in which the task modes and their signal-dependent coefficients are simultaneously accessible.
The diagonal encodings may also require a number of elementary gates that grows with $D$, so the construction is an expressivity result rather than an efficiency claim. 
For general tasks and restricted circuit ansätze, the coefficient error in Eq.~\eqref{eq:coefficient-error} remains a learning problem.

\subsection{Details on teacher--student experiments }\label{appx:teacher-details}

Teachers and students share the two-qubit circuit
$U(\bm x,\bm f;\bm\theta)=\prod_{\ell=1}^{2}V^{(f)}_\ell\,E^{(f)}_\ell(\bm f)\,V_\ell^{(x)}\,E_\ell^{(x)}(\bm x)$,
with signal encoding $E_\ell^{(x)}=R_Z(x_0)\otimes R_Z(x_1)$, nonlocal query encoding gates
$E_\ell^{(f)}(\bm f)=R_{Z_0Y_1}(s_\ell f_1)\,R_{Y_0Z_1}(s_\ell f_0)$ with layer weights
$(s_1,s_2)=(1,2)$, initial state $|{+}{+}\rangle$ and observable $Z_0Z_1$.
The query encoding gates are those of series (III) in Fig.~\ref{fig:encoding-value-interpolation}. 
The trainable blocks are reduced to single-qubit rotations,
$V_\ell^{(x)}=R_Y(\theta_{\ell,0})\otimes R_Y(\theta_{\ell,1})$ and likewise for $V^{(f)}_\ell$, which gives $8$ trainable parameters. Because these blocks are products, the zero-query model is separable in $(x_0,x_1)$. 
We compare three fixed queries: 
the teacher query $\bm f^\star=(\pi/4,\pi/3)$, the mismatched query $2\bm f^\star$, and $\bm f=0$.

\textit{Teachers.} 
Teacher parameters are drawn from the student initialization distribution, $\theta_i\sim U[-\pi,\pi)$ independently. 
A draw is retained only if (i) the fraction of positive scores $v(\bm x,\bm f^\star;\bm\theta_T)>0$ on $2\times10^4$ probe inputs, drawn uniformly from $[0,2\pi)^2$, lies in $(0.4,0.6)$, and  (ii) the nonseparability fraction $1-\sigma_1^2/\sum_i\sigma_i^2$ of its $5\times5$ Fourier coefficient matrix at $\bm f^\star$ exceeds $0.2$. 
The coefficients are computed exactly by a two-dimensional FFT of $v$ on a $5\times5$ grid.
Condition (ii) ensures that the separable zero-query model cannot represent the continuous teacher score.
For each of $10$ teacher seeds we keep the first draw that passes; this took between $1$ and $5$ draws. 
The retained teachers have positive fractions between $0.42$ and $0.54$ and witnesses between $0.25$ and $0.60$. 
The probe inputs are used only for this selection. 
Labels are $y=1$ if $v(\bm x,\bm f^\star;\bm\theta_T)>0$ and $y=0$ otherwise.

\begin{figure*}[t!]
    \centering
    \includegraphics[width=\linewidth]{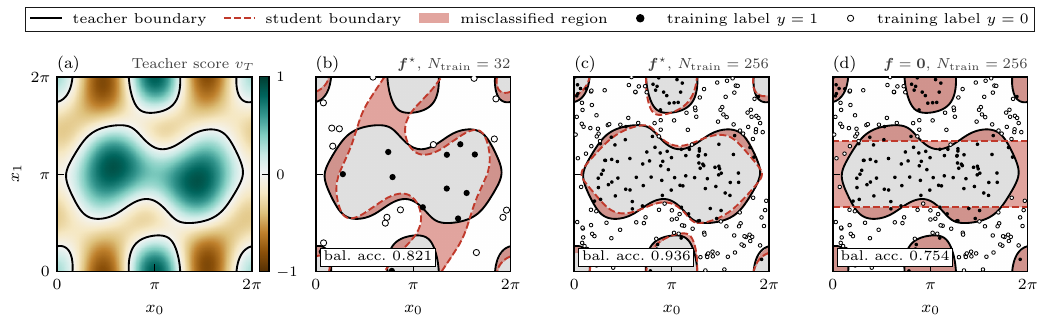}
    \caption{\textbf{Decision boundaries for one teacher:} (a)~Teacher score $v_T$ and its
    zero level set. (b--d)~Students trained on $N_{\rm train}$ labeled points
    (dots) with the correct query $\boldsymbol f^\star$ (b,c) or without a query,
    $\boldsymbol f=\boldsymbol 0$ (d). Solid line: teacher boundary; dashed line:
    student boundary; shading: misclassified region. In (c), $56\%$ of
    misclassified test points satisfy $|v_T|<0.05$, a region that contains
    $8.7\%$ of all test inputs. The separable student in (d) is restricted to
    axis-aligned boundaries.}
    \label{fig:decision-boundary}
\end{figure*}

\textit{Students.} 
A student predicts $p(y=1\mid\bm x)=\mathrm{sigmoid}[\beta\,v(\bm x,\bm f;\bm\theta)]$ with $\beta=5$ and minimizes the full-batch binary cross-entropy. 
Each fit starts from $n_{\rm init}$ initializations drawn from the prior above. 
These initializations depend only on the teacher, fold, and training-set size, so they are shared across all compared queries. 
Every start runs Adam with learning rate $0.03$, cosine-annealed to $0.003$, and gradients clipped to norm $10$. 
The $n_{\rm pol}$ starts with the lowest training loss are then polished with L-BFGS-B, and we keep the parameters with the lowest training loss. 
No test data enter any selection step. 
For Fig.~\ref{fig:teacher-student}a, we use $n_{\rm init}=16$, $2000$ Adam steps and $n_{\rm pol}=3$ with at most $500$ iterations. 
For each teacher we draw $8$ independent training-set draws. 
Each set has a pool of $256$ inputs drawn uniformly from
$[0,2\pi)^2$, and the training set of size
$N_{\rm train}\in\{8,16,\dots,256\}$ is the first $N_{\rm train}$ points of that pool. 
All folds and arms share one test set of $10^4$ inputs per teacher.
Balanced test accuracies are averaged over folds within each teacher.

\textit{Query scans.} 
Figure~\ref{fig:query-sensitivity} uses the same ten teachers with $N_{\rm train}=256$, $4$ folds (the same training sets as the first four folds of Fig.~\ref{fig:teacher-student}a), $2\times10^4$ test inputs per teacher, $n_{\rm init}=12$, $1500$ Adam steps, and $n_{\rm pol}=2$ with at most $400$ iterations. 
The student is retrained from the same initializations at every query. The scale scan in
Fig.~\ref{fig:query-sensitivity}a uses $\bm f=\lambda\bm f^\star$ with $\lambda\in\{0,\tfrac1{16},\dots,2\}$, giving $33$ query values. 
The local scan in Fig.~\ref{fig:query-sensitivity}b uses $\bm f=\bm f^\star+\delta(\cos\phi,\sin\phi)$ with $\phi\in\{0,\pi/4,\dots,7\pi/4\}$ and $11$ values of $\delta$ log-spaced between $10^{-4}$ and $0.5$. 
For each teacher, we take the minimum accuracy over directions, average these minima over teachers, and take the running minimum over $\delta$.

\textit{Realizability diagnostic.} 
Fig.~\ref{fig:teacher-student}b fits the continuous teacher score on an independent uniform $15\times15$ grid on $[0,2\pi)^2$. 
The fit minimizes the squared error with a trust-region least-squares solver and an exact Jacobian, using at most $1000$ function evaluations per start. 
We run $32$ starts per teacher and arm: the teacher's own parameters and $31$ prior draws, shared across arms. 
We report the relative mean-square error $\sum_i[v_S(\bm x_i)-v_T(\bm x_i)]^2/\sum_i v_T(\bm x_i)^2$ of the best start.
Including the teacher's parameters makes the $\bm f^\star$ arm an implementation check, which returns zero error. 
For the other arms the value is a best-found approximation error, not a lower bound.

\textit{Statistics and reproducibility.} 
Bands are 95\% percentile intervals from $5000$ bootstrap resamples over teachers, with folds averaged within each teacher and the pairing across arms and queries preserved. 
Circuits are simulated exactly in double precision with JAX.

\textit{Visualization of the learning task.}
Figure~\ref{fig:decision-boundary} shows the decision boundaries for one teacher and fold. 
The teacher's zero level set is not a simple split of the input torus; it consists of a central lobe and several pockets that wrap around the periodic boundaries. 
With $N_{\rm train}=32$ labels the student connects regions that the teacher separates, in gaps that no training point constrains. 
At $N_{\rm train}=256$ the residual errors lie in a thin band along the teacher boundary. 
The zero-query student, trained on the same points, cannot follow the boundary: 
its score factorizes as $g(x_0)h(x_1)$, so its decision boundary consists of straight lines.

\section{Details on geometry-aligned SAR classification}
\label{appx:sar-pipeline}

\subsection{Background: nonuniform Fourier transforms}\label{appx:mnudft}

Fourier transforms are a central component of signal processing, enabling the decomposition of data into frequency components~\cite{oppenheim1999discrete,cooley1965algorithm,briggs1995dft}.
The discrete Fourier transform (DFT) evaluates a complex-valued signal $\boldsymbol{x} \in \mathbb{C}^{D}$, with entries $x_d \in \mathbb{C}$, at $D$ uniformly spaced query frequencies $f_k = k-1$, yielding $X_k\in \mathbb{C}$ with $k \in \{1,\ldots,D\}$:
\begin{equation}
    \mathrm{DFT}(f_k,\boldsymbol{x}) \coloneqq  \sum_{d=1}^{D} x_d \exp\!\left(-2\pi i \, f_k \, p_d \right),
\end{equation}
where the sample positions $p_d = \tfrac{d-1}{D}$ are likewise uniformly spaced for $d \in \{1,\ldots,D\}$ and are defined by the sampling geometry.
In this formulation, $x_d$ acts as the coefficient weighting each basis function.
This illustrates that the DFT assumes regular sampling in both the spatial and frequency domains.
However, in many practical applications, such as radar imaging, medical diagnostics, or astronomy, data is acquired at irregular spatial positions or evaluated at nonuniform frequencies~\cite{dutt1993fast,fessler2003nonuniform}.
To account for this, the nonuniform discrete Fourier transform (NUDFT) generalizes the DFT by allowing both the sampling positions $p_d \in [0,1]$ for $d \in \{1,\ldots,D\}$ and the query frequencies $f_k \in \mathbb{R}$ for $k \in \{1,\ldots,K\}$ to be arbitrary:
\begin{align}
    \mathrm{NUDFT}(f_k,\boldsymbol{x}) \coloneqq  \sum_{d=1}^{D} x_d \exp\!\left(-2\pi i \, f_k \, p_d\right).
    \label{eq:nudft}
\end{align}
This extends to the multidimensional setting~\cite{bagchi2012nonuniform}, where we consider a tensor $\boldsymbol{x} \in \mathbb{C}^{D_1 \times \cdots \times D_n}$ with multi-index $\boldsymbol{d} \in \mathcal{I} \coloneqq \{1,\dots,D_1\} \times \cdots \times \{1,\dots,D_n\}$. 
Each entry $x_{\boldsymbol{d}} \in \mathbb{C}$ is associated with a sampling location $\boldsymbol{p}_{\boldsymbol{d}} \in [0,1]^n$.
The multidimensional NUDFT (MNUDFT) evaluates the signal $\boldsymbol{x}$ at $K$ query frequency vectors $\boldsymbol{f}_{\boldsymbol{k}} \in \mathbb{R}^n$, indexed by $\boldsymbol{k} \in \mathcal{K} \coloneqq \{1,\dots,K_1\} \times \cdots \times \{1,\dots,K_n\}$, as
\begin{equation}
    \mathrm{MNUDFT}(\boldsymbol{f}_{\boldsymbol{k}},\boldsymbol{x}) \coloneqq X_{\boldsymbol{k}}
    = \sum_{\boldsymbol{d}\,\in\,\mathcal{I}} x_{\boldsymbol{d}} \exp\!\left(-2\pi i\, \boldsymbol{f}_{\boldsymbol{k}} \cdot \boldsymbol{p}_{\boldsymbol{d}}\right),
    \label{eq:multi-nudft-appx}
\end{equation}
with $\cdot$ denoting the inner product.

\subsection{Nonuniform queries in \texorpdfstring{$\omega$-$k$}{omega-k} focusing}

Here $\boldsymbol x=(x_{ad})$ denotes baseband echoes sampled at azimuth slow times $\eta_a$ and fast-time delays $\tau_d$.
For the reference stripmap geometry, the Stolt relation maps the measurement-frequency coordinates
$(f_\eta,f_\tau)$ to an image-frequency range coordinate
$f'_\tau$~\cite{cumming2005digital}:
\begin{equation}
f_{\rm c}+f'_\tau
=
\sqrt{
(f_{\rm c}+f_\tau)^2
-
\left(\frac{c f_\eta}{2v_r}\right)^2
},
\label{eq:stolt-forward}
\end{equation}
where $f_{\rm c}$ is the carrier frequency, $v_r$ the effective platform velocity, and $c$ the speed of light.

Image formation requires spectral values on a regular output grid
$(f_{\eta,m},f'_{\tau,\ell})$. 
Inverting the Stolt relation gives the measurement-frequency location required at each grid point:
\begin{equation}
f_{\tau,m\ell}^{\mathrm q}
=
\sqrt{
(f_{\rm c}+f'_{\tau,\ell})^2
+
\left(\frac{c f_{\eta,m}}{2v_r}\right)^2
}
-f_{\rm c}.
\label{eq:inverse-stolt}
\end{equation}
We retain points for which this frequency lies within the acquired support. 
Although $f'_{\tau,\ell}$ is regularly spaced,
$f_{\tau,m\ell}^{\mathrm q}$ generally depends on both output indices. 
The required evaluation points
\begin{equation}
\boldsymbol f^\star_{m\ell}
=
(f_{\eta,m},f_{\tau,m\ell}^{\mathrm q})
\label{eq:sar-positions-queries}
\end{equation}
therefore form a non-Cartesian query set fixed by the acquisition and image grid.

For finite measurements, the value required at each query point can be evaluated directly:
\begin{equation}
Z_{m\ell}
=
\sum_{a,d}x_{ad}
\exp\!\left[
-2\pi i\left(
f_{\eta,m}\eta_a+
f_{\tau,m\ell}^{\mathrm q}\tau_d
\right)
\right].
\label{eq:sar-mnudft}
\end{equation}
This is the MNUDFT of the echoes sampled at positions
$(\eta_a,\tau_d)$. 
Conventional Stolt interpolation instead approximates the same finite-data spectral value from evaluations on a regular measurement-frequency grid. 
Applying the fixed filtering and spectral weights $W_{m\ell}$ to $Z_{m\ell}$, then taking the inverse transform on the regular output grid, gives the idealized operator used in Eq.~\eqref{eq:sar-operator}:
\begin{equation}
\mathcal R_{\mathrm{SAR}}
=
\mathcal F^{-1}_{\mathrm{img}}\,
\mathcal M\,\mathcal N_{\boldsymbol f^\star}.
\label{eq:sar-operator-appx}
\end{equation}
For any decision map on the resulting image, the fixed operations after the MNUDFT can be included in that map:
\begin{equation}
G_{\mathrm{SAR}}\circ\mathcal R_{\mathrm{SAR}}
=
\underbrace{
\left(G_{\mathrm{SAR}}\circ
\mathcal F^{-1}_{\mathrm{img}}\circ\mathcal M\right)
}_{\widetilde G_{\mathrm{SAR}}}
\circ\mathcal N_{\boldsymbol f^\star}.
\label{eq:sar-composite-classifier}
\end{equation}
This is the spectral-task factorization in Eq.~\eqref{eq:sar-decision}. 
Since the operations inside
$\widetilde G_{\mathrm{SAR}}$ preceding $G_{\mathrm{SAR}}$ are linear, a polynomial decision map in the focused amplitudes and their conjugates remains polynomial in the MNUDFT values and their conjugates.

\subsection{Local geometric phase and the circuit query}\label{appx:derivation-geom}

The spectral coordinates can be expressed as wavenumbers
$k_x=2\pi f_\eta/v_r$ and
$k_r=4\pi(f_{\rm c}+f_\tau)/c$ under the Fourier convention above.
The Stolt relation is then the dispersion relation
$k_y=\sqrt{k_r^2-k_x^2}$ of
Eq.~\eqref{eq:sar-dispersion}. Around a patch center
$(k_{x0},k_{r0})$, terms in $R_0k_y$ that depend on only
$k_x$ or only $k_r$ give the separable phases
$\phi_a$ and $\psi_d$ in Eq.~\eqref{eq:sar-local-expansion}. 
The first term coupling the two patch coordinates comes from
\begin{equation}
\frac{\partial^2 k_y}
{\partial k_x\,\partial k_r}
=
\frac{k_xk_r}{(k_r^2-k_x^2)^{3/2}}.
\label{eq:appx-sar-mixed-derivative}
\end{equation}
With patch spacings $\Delta k_x,\Delta k_r$, its phase coefficient is
\begin{equation}
f^\star_{\mathrm{geom}}
=
R_0\Delta k_x\Delta k_r
\frac{k_{x0}k_{r0}}
{(k_{r0}^2-k_{x0}^2)^{3/2}},
\end{equation}
as stated in Eq.~\eqref{eq:sar-geometry-query}. 
This coefficient is dimensionless. 
The circuit applies its negative as a phase correction; higher-order mixed terms remain in the generated data.

For completeness, the query encoding gate used in the six-qubit experiment has the stated cross-register structure. 
Write the three bits of
$a$ as $a=4a_2+2a_1+a_0$, and similarly for $d$. 
Since the eigenvalues of $Z$ on bit $a_r$ are $1-2a_r$,
\begin{equation}
\widetilde A
=
-\sum_{r=0}^{2}2^{r-1}Z_{A,r},
\qquad
\widetilde D
=
-\sum_{s=0}^{2}2^{s-1}Z_{D,s}.
\end{equation}
Thus $\widetilde A\otimes\widetilde D$ contains nine cross-register $Z_{A,r}Z_{D,s}$ terms. They commute, so each query upload decomposes into nine $R_{ZZ}$ rotations.

The operator-Schmidt rank of $\exp[-i(f/2)\widetilde A\otimes\widetilde D]$ is the rank of the $8\times8$ matrix
\begin{equation}
M_{ad}=\exp\bigl[-i(f/2)\,\tilde a\,\tilde d\bigr].
\end{equation}
Up to nonzero row and column factors, this is a Vandermonde matrix with nodes $\exp[-i(f/2)a]$, $a=0,\dots,7$.
It has rank eight when these nodes are distinct, as they are for the geometry-derived query values used here. 
The gate is then nonproduct across $A|D$ and cannot be absorbed into an adjacent trainable block of the form $U_A\otimes U_D$.
The full circuit is illustrated in Fig.~\ref{fig:sar-circuit}.

\begin{figure}[t!]
    \centering
    \begin{adjustbox}{width=\columnwidth}
    \begin{quantikz}
    \lstick{$\ket{+}^{\otimes3}_A$}
        & \qwbundle{3}
        & \gate{W_1^{A}}\gategroup[
          wires=2,
          steps=2,
          style={dashed,rounded corners,fill=myblue!80},
          background
      ]{}
        & \gate[2]{E^{(x)}(\boldsymbol x)}
        & \gate{V_1^{A}}\gategroup[
          wires=2,
          steps=2,
          style={dashed,rounded corners,fill=myred!80},
          background
      ]{}
        & \gate[2]{E^{(f)}(f)}
        & \gate{W_2^{A}}\gategroup[
          wires=2,
          steps=2,
          style={dashed,rounded corners,fill=myblue!80},
          background
      ]{}
        & \gate[2]{E^{(x)}(\boldsymbol x)}
        & \gate{V_2^{A}}\gategroup[
          wires=2,
          steps=2,
          style={dashed,rounded corners,fill=myred!80},
          background
      ]{}
        & \gate[2]{E^{(f)}(f)}
        & \gate{F_8}
        & \meter{} \\
    \lstick{$\ket{+}^{\otimes3}_D$}
        & \qwbundle{3}
        & \gate{W_1^{D}}
        & \qw
        & \gate{V_1^{D}}
        & \qw
        & \gate{W_2^{D}}
        & \qw
        & \gate{V_2^{D}}
        & \qw
        & \qw
        & \qw
    \end{quantikz}
    \end{adjustbox}
    \caption{
    \textbf{Circuit used in the SAR-like classification experiment.}
    The three-qubit registers $A$ and $D$ carry the binary digits of the azimuth index $a$ and the range index $d$ of the $8\times8$ patch.
    Each of the two blocks applies product trainable blocks
    $W_\ell^A\otimes W_\ell^D$ and $V_\ell^A\otimes V_\ell^D$, each consisting of $R_ZR_Y$ rotations on every qubit followed by a ring of CZ gates within the register, which gives $48$ trainable parameters.
    The signal is reuploaded as the phase-only diagonal
    $E^{(x)}(\boldsymbol x)=\mathrm{diag}_{ad}\,\exp(i\arg(x_{ad})/L_x)$, and the
    fixed query as
    $E^{(f)}(f)=\exp[-i(f/L_f)\widetilde A\widetilde D]$, which compiles into nine cross-register $R_{ZZ}$ rotations. 
    The readout applies the quantum Fourier transform $F_8$ to $A$ and measures it, giving the class distribution $p(j)$; $D$ is traced out.
    }
    \label{fig:sar-circuit}
\end{figure}

\subsection{SAR-like data generation and training}
\label{app:sar}

\textit{Acquisition.}
Table~\ref{tab:sar_acq} gives the acquisition parameters.
All acquisitions use the $8\times8$ sampling grid of Sec.~\ref{sec:sar-experiment}, with
$\Delta k_x=1.25$\,rad\,m$^{-1}$,
$k_{r0}=4\pi f_{\rm c}/c$, and
$\Delta k_r=4\pi\delta f/c$.
The baseline acquisition is used in Figs.~\ref{fig:sar-geometry-classification}a,c.

\begin{table*}[t]
\centering
\begin{tabular}{lccccc}
\hline
Acquisition & $f_{\rm c}$ (GHz) & $\delta f$ (MHz)
& $k_{x0}$ (rad\,m$^{-1}$) & $R_0$ (km)
& $f^\star_{\rm geom}$ \\
\hline
High carrier        & 14 & 5 & 125 & 3.7 & 0.377 \\
Low squint          & 10 & 5 &  80 & 3.7 & 0.467 \\
Carrier + bandwidth & 14 & 9 & 125 & 3.7 & 0.679 \\
Baseline            & 10 & 5 & 125 & 3.7 & 0.793 \\
Far range           & 10 & 5 & 125 & 4.8 & 1.029 \\
High squint         & 10 & 5 & 180 & 3.7 & 1.348 \\
High bandwidth      & 10 & 9 & 125 & 3.7 & 1.427 \\
Squint + bandwidth  & 10 & 7 & 180 & 3.7 & 1.887 \\
\hline
\end{tabular}
\caption{Acquisitions used in
Fig.~\ref{fig:sar-geometry-classification}b.
The query is calculated from Eq.~\eqref{eq:sar-geometry-query}.}
\label{tab:sar_acq}
\end{table*}

\textit{Scene generation.}
We evaluate $k_y=\sqrt{k_r^2-k_x^2}$ on the patch and define its dimensionless normalized nonseparable component,
\[
h_{ad}
=
\frac{k_y-\langle k_y\rangle_a-\langle k_y\rangle_d
+\langle k_y\rangle_{ad}}
{\kappa_{xr}\Delta k_x\Delta k_r},
\]
where $\kappa_{xr}$ is the mixed derivative in
Eq.~\eqref{eq:sar-geometry-query}, and averages are taken over the indicated indices.
The generated phase is $f^\star_{\rm geom}h_{ad}$; 
its deviation from the bilinear approximation obeys
$|f^\star_{\rm geom}(h_{ad}-\tilde a\tilde d)|<0.35$\,rad for every acquisition.
A scene of class $\ell\in\{0,\ldots,7\}$ is
\[
\begin{aligned}
&x_{ad}=s_a\Bigl[
\rho\,\exp(-i(ua+vd)+if^\star_{\rm geom}h_{ad})\\
&+\sum_{m=1}^{4}
c_m \exp(-i(u_ma+v_md)+ir_mf^\star_{\rm geom}h_{ad})
+n_{ad}\Bigr].
\end{aligned}
\]
The target and clutter parameters are sampled according to
\[
\begin{aligned}
u&=\frac{2\pi}{8}(\ell+0.48\xi),
&\xi&\sim U[-1,1],\\
\rho&=e^{i\psi},
&v,\psi&\sim U[0,2\pi),\\
u_m,v_m&\sim U[0,2\pi),
&r_m&\sim U[0.45,1.55],\\
c_m&\sim\mathcal{CN}(0,1.25^2/4),
&n_{ad}&\sim\mathcal{CN}(0,P/10),
\end{aligned}
\]
where $P$ is the mean target-plus-clutter power in the patch.
The noise variance scales with the patch power, maintaining a nominal SNR of $10$\,dB relative to the target-plus-clutter return.
The clutter range ratios $r_m$ give the additional scatterers different geometric phase curvatures, so the target-aligned correction generally leaves them with residual mixed phases.
Clutter and noise are added to the complex samples before phase extraction and therefore alter the phase-only input received by the classifier.
Separable propagation phases are represented by the random linear ramps.
The fixed channel phase is
\[
s_a=\exp\!\left[
-\frac{i}{2}\sum_{m=1}^{3}\phi_m(1-2b_m(a))
\right],
\quad
\boldsymbol\phi=(1.5,-1.7,1.0),
\]
where $b_m(a)$ is the $m$-th binary digit, most significant first.
It is shared across scenes and independent of the label.
All datasets are class-balanced, and the circuit receives only
$\arg x_{ad}$.

\textit{Circuit and optimization.}
The circuit of Fig.~\ref{fig:sar-circuit} starts in $\ket{+}^{\otimes6}$, with three-qubit registers $A$ and $D$
representing $a$ and $d$ in most-significant-bit-first order.
Each of its two blocks applies a trainable unitary, the signal encoding, a second trainable unitary, and the query encoding,
\[
\begin{aligned}
E_\ell^{(x)}
&=\operatorname{diag}_{a,d}
\!\left[\exp(i\arg(x_{ad})/2)\right],\\
E_\ell^{(f)}(f)
&=\operatorname{diag}_{a,d}
\!\left[\exp(-if\tilde a\tilde d/2)\right].
\end{aligned}
\]
The final query gate is followed directly by the readout.
Each trainable unitary is a product across $A|D$, with
$R_Z(\beta_q)R_Y(\alpha_q)$ on every qubit followed by a CZ ring within each register, giving $48$ parameters in total.
The readout applies $(F_8)_{ja}=\exp(2\pi ija/8)/\sqrt8$ on $A$ and marginalizes over $D$ to obtain $p(j)$; the prediction is $\arg\max_jp(j)$.

We minimize full-batch cross-entropy using $700$ Adam steps, with learning rate cosine-annealed from $0.04$ to $0.004$ and gradient norms clipped to $10$.
Each fit uses one zero initialization and five initializations with $\theta_i\sim\mathrm{normal} (0,0.15^2)$.
We retain the lowest training loss across all steps and starts.
Datasets and parameter initializations are paired across compared queries and models.
The trainable-query model additionally optimizes
$f=1.25\pi\tanh q$, shared by both query layers;
$q$ starts at zero in the zero initialization and at
$U[-1,1]$ in the other five.

\textit{Evaluation.}
Table~\ref{tab:sar_datasets} summarizes the datasets.
Panels~a,b retrain the circuit at $f/f^\star_{\rm geom}\in\{-1.5,-1.25,\ldots,2\}$.
For panel~b, $\hat f$ minimizes the validation cross-entropy averaged over draws. 
An interior minimum with locally convex neighboring values is refined by a parabola through the three grid points.

\begin{table}[h]
\centering
\begin{tabular}{lcccc}
\hline
Panel & Draws & Training & Validation & Test \\
\hline
a & 12 & 512 & -- & 4096 \\
b & 4  & 512 & 1024 & -- \\
c & 12 & 32--512 & -- & 4096 \\
\hline
\end{tabular}
\caption{Dataset sizes for
Fig.~\ref{fig:sar-geometry-classification}.
For panel~b, draws are performed separately for each acquisition.}
\label{tab:sar_datasets}
\end{table}

Panel~c uses nested, class-balanced training sets with $N_{\rm train}\in\{2^5,\ldots,2^9\}$ from each $512$-scene pool. 
It compares the aligned, wrong-sign, zero, and trainable queries.
Reported accuracies are balanced test accuracies.
Confidence intervals are $95\%$ percentile intervals from $5000$ bootstrap resamples over paired draws; $\hat f$ is re-estimated in every resample.
Random sampling uses NumPy, and circuits are evaluated with double-precision noiseless statevector simulation in JAX.

\end{document}